\keywords{olog, syllogism, ologism, category theory, knowledge representation, conceptual planning of databases}
\theoremstyle{plain} 
\newtheorem{notation}[thm]{Notation}
\newtheorem{exm}[thm]{Example}
\newtheorem{exms}[thm]{Examples}
\newtheorem{terminology}[thm]{Terminology}
\def\tbox#1#2{\fbox{\parbox{#1\testo}{\raggedright #2}}}
\def\ttbox#1#2{\parbox{#1\testo}{\raggedright #2}}
\let\al\textbf
\newdimen\testo
\def\UA#1#2{\ensuremath{\xymatrix{#1\ar[r]^{{\bf A}_{#1#2}}&#2}}}
\def\PA#1#2{\ensuremath{\xymatrix{#1\ar@{}[rr]^{{\bf I}_{#1#2}}&\bullet\ar[l]\ar[r]&#2}}}
\def\UN#1#2{\ensuremath{\xymatrix{#1\ar[r]\ar@{}[rr]^{{\bf E}_{#1#2}}&\bullet&#2\ar[l]}}}
\def\PN#1#2{\ensuremath{\xymatrix{#1\ar@{}[rrr]^{{\bf O}_{#1#2}}&\bullet\ar[l]\ar[r]&\bullet&#2\ar[l]}}}
\def\revUA#1#2{\ensuremath{\xymatrix{#2
&#1\ar[l]_{{\bf A}_{#1#2}}}}}
\def\revPA#1#2{\ensuremath{\xymatrix{#2\ar@{}[rr]^{{\bf I}_{#2#1}}&\bullet\ar[l]\ar[r]&#1}}}
\def\revUN#1#2{\ensuremath{\xymatrix{#2\ar[r]\ar@{}[rr]^{{\bf E}_{#2#1}}&\bullet&#1\ar[l]}}}
\def\revPN#1#2{\ensuremath{\xymatrix{#2\ar[r]\ar@{}[rrr]^{{\bf O}_{#1#2}}&\bullet&\bullet\ar[r]\ar[l]&#1}}}
\begin{document}

\title[Ologisms]{Ologisms}

\author{Ruggero Pagnan}
\address{DIMA, Via Dodecaneso 35, 16146, Genova, Italy}	
\email{pagnan@dima.unige.it} 

\begin{abstract}
\noindent We introduce ologisms. They generate from ologs by extending their logical expressivity, from  the possibility of considering constraints of equational nature only to the possibility of considering  constraints of syllogistic nature, in addition. This is obtained by taking advantage of the peculiar features of an original diagrammatic logical calculus for the syllogistic, that make it well-behaved with respect to the design of ologs. 
\end{abstract}

\maketitle


\section{Introduction}
The word ``ologism'' is a crasis between the word ``olog'' and the word ``syllogism''. In turn, the word ``olog'' itself is a crasis which derives from the expression ``ontology log''. Ologs are category-theoretic models for knowledge representation. As such, they should be properly qualified as ``structured'' which in particular means that their construction, usually referred to as ``authoring'', is in accordance with rigorous mathematical criteria and also that it is possible to consider knowledge-preserving transformations between them. Ologs record the ontology or ``essential nature'' of a real situation, in terms of a subjective world-view. Disorienting as it may be, ologs are designed to communicate such world-views which although subjective, being planned in accordance with rigorous criteria turn out to be objectively communicable and their structural soundness objectively determinable. Ologs are strictly related to the conceptual planning of databases. They can be seen to correspond to relational database schemes but are much more readable and friendly. We will not describe the details of this connection but the reader should keep in mind that the matter treated in this paper could be thought of as potentially connected to that activity of planning as well as to logic for knowledge representation; in particular, concerning the latter, see section~\ref{forlogrepr}, where a comparison between ologisms and knowledge representation systems based on Description Logics, DL-KRS, is carried out. The main aim of this paper is to show how to increase the logical expressivity of ologs by combining them with the diagrammatic logical calculus for the syllogistic that we introduced in~\cite{DBLP:journals/jolli/Pagnan12, DBLP:journals/jolli/Pagnan13}, towards the authoring of ologisms, by taking advantage of the peculiar features of that calculus; mainly the fact that it is diagrammatic, formally compositional, sound and complete with respect to syllogistic reasoning. These features make the calculus at issue well-behaved with respect to the authoring of ologs and provide further structural soundness criterions for them; in particular, in connection with the managing of logical contradiction, see section~\ref{contraologisms}. As a final result the logical expressivity of ologs turns out to be actually increased since the possibility of imposing equational constraints, typical of ologs, is placed side by side with the possibility of imposing syllogistic constraints, in addition, which is the peculiarity of ologisms. The imposable syllogistic constraints correspond to the form of the four fundamental categorical propositions, that is ${\bf A}_{AB}$ for ``Every $A$ is $B$'', ${\bf E}_{AB}$ for ``Every $A$ is not $B$'', ${\bf I}_{AB}$ for ``Some $A$ is $B$'' and ${\bf O}_{AB}$ for ``Some $A$ is not $B$'', with $A, B$ types, one of the fundamental components of an olog. For the sake of argument only, one could imagine to authoring an ologism by means of a computer program which assists the authors in the verification of equational and syllogistic constraints, in particular for what regards the avoidance of logical contradictions. We hasten to highlight that for ologisms this is far from being actually so, differently from what it actually is for DL-KRS, see~\cite{MR1991592}. From an overhead standpoint the paper slides from ologs to ologisms through syllogistic by alternating informal approaches and formal approaches. To let the paper to be selfcontained as much as possible, the main mathematical prerequisites are provided as needed, in rigorous terms. More to the point, ologs are introduced informally in section~\ref{ologs} mainly to let the reader to become acquainted with their fundamental components, types, aspects and facts, and with them as graphical objects. Roughly, an olog is a graph with types as nodes and aspects as arcs, together with equational identities between parallel pairs of aspects declared by facts. In section~\ref{ologlinear} ologs are reprised and treated formally as mathematical objects which nature is graphical, algebraic and logical; this is done by looking at them as linear sketches. Linear sketches are graphical formal languages that allow the finite specification of the equational theory generated by a signature with unary operations only, the theory which is the free category over the underlying graph of an olog, quotiented by the smallest congruence relation generated by the imposed equational identities which are the facts declared in it. 
In~\cite{DBLP:journals/corr/abs-1102-1889} it is pointed out that ologs are presentations of categories by means of generators, that is types and aspects, and relations, that is facts, see section~\ref{ologlinear}. Moreover, in {\em loc. cit.}, it is also pointed out that this form of presentation for categories allows to separate the strictly graphical part of an olog, namely its types and aspects, from its equational part, provided by facts. By describing ologs by means of linear sketches we wanted to rigorously take account of this, since linear sketches are the mathematical structures that allow to neatly keep track of this sort of dichotomy. We wanted to apply a genuine principle of specification to make a useful distinction between two separate pieces of structure, although strictly interacting.
Since this paper has to do with the enlargement of the logical expressivity of ologs, we felt the need to reconsider these objects under a proof-theoretical perspective. Thus, section~\ref{ologlinear} further proceeds by treating ologs in formal terms, as labelled deductive systems with imposed equational constraints. If on one side the hint at approaching ologs via linear sketches was already contained in~\cite{DBLP:journals/corr/abs-1102-1889}, on the other side, as far as we know, looking at ologs as particular labelled deductive systems is in our opinion original at least up to a certain extent. In section~\ref{calculus} we reprise the fundamentals of syllogistic and describe the previously mentioned diagrammatic logical calculus for it. Although most of the contents of the whole section can be found in the bibliographic references cited above, they have been almost completely reworked in view of the main aim of this paper. From subsection~\ref{syllogistic} to subsection~\ref{contradiction} the approach is informal to let the reader to become familiar with the employment of the calculus at issue; whereas in subsection~\ref{justreq} that calculus is formally described within a suitable diagrammatic deductive system. In section~\ref{ologisms} we introduce ologisms informally to let the reader already familiar with ologs and syllogistic to understand how they comprise ologs as well as syllogistic and moreover, how it is possible to express logical contradiction in them. In section~\ref{extlabelded} we describe ologisms formally as deductive systems arising from a combination of ologs as labelled deductive systems and syllogistic in the form of the diagrammatic deductive system described in section~\ref{syllogistic}. We also describe the logical theory generated by an ologism. After section~\ref{models} in which we provide a definition of what a model of an ologism should be, in section~\ref{soundcompl} we introduce suitable logical and semantical consequence relations pertaining to ologisms and prove appropriate soundness and completeness theorems for those relations. As far as we know, the content of the sections~\ref{ologisms},~\ref{extlabelded},~\ref{models} and~\ref{soundcompl} is original. As already mentioned, section~\ref{contraologisms} is dedicated to the managing of logical contradiction in ologisms. More to the point,  in that section we discuss how the occurrence of pairs of contradictory syllogistic constraints in ologisms give rise to contradiction in them, so decreeing their structural unsoundness. Finally, section~\ref{forlogrepr} contains a comparison between ologisms and DL-KRS as regards some of the basic features shared by the two formalisms. We are aware that the connections between ologisms and other formal systems should be pointed out and extensively discussed. Among the ones to which we are thinking of there is intensional logic, term logic as regards in particular the problem of multiple generality, and the semantic web, for example. 
If on one side ologisms have been viewed by us as extending the logical expressivity of ologs in the way we previously briefly described , on another side one could look at ologisms as extending syllogistic reasoning in the form of the previously hinted at diagrammatic calculus by means of ologs providing an ontology for the terms involved in syllogisms. This viewpoint furnishes further motivations for future work that has to be done.


\section{Ologs}\label{ologs}
In this section we describe the fundamentals of {\bf ologs} in a rather informal way in order to establish terminology and notations that will be used throughout the paper and to let the reader acquire a certain familiarity with the subject, that will be presented in rigorous mathematical terms in section~\ref{ologlinear}. The reader who is interested in deepening her knowledge about ologs is invited to consult~\cite{MR3288752, DBLP:journals/corr/abs-1102-1889} and the bibliographic references therein. The way in which the content of this section is arranged derives from a personal rereading of related contents in the just cited bibliographic references. The word ``olog'' comes from the expression ``ontology log'' as a sort of crasis. Ologs provide category-theoretic models for knowledge representation but very little category theory is required in order to understand and start using them. Actually, some acquaintance with the notions of graph and function is sufficient. Nonetheless, the fact that the construction and the employment of ologs are in accordance with rules that are justifiable in rigorous mathematical terms has to be firmly kept in mind because it makes ologs {\bf structured models for knowledge representation}, so that it makes sense to consider knowledge-preserving transformations between them, for instance. Ologs are stricly connected to the {\bf conceptual planning of databases}. In fact, they can be read as relational database schemes and their design aims to be much more simple than learning a database definition language. Although the content of this paper is related to the conceptual planning of databases via ologs, we will not describe the details of how they correspond to relational database schemes but refer the reader to the previously cited bibliographic references, from which we have taken some of the examples in this section.

\subsection{Authoring ologs}
Following~\cite{MR3288752,DBLP:journals/corr/abs-1102-1889} we will henceforth refer to {\bf the author} of an olog as to the fictitious person who designed it. The action of designing an olog will be henceforth referred to as {\bf authoring}. 

\begin{rem}\label{perdopo}
As underlined in~\cite{MR3288752,DBLP:journals/corr/abs-1102-1889}, an olog captures a part of a {\bf subjective world-view} of its author. Thus, it may well be that a  person A examining the olog designed by a person B does not agree with the world-view captured by B's olog, but the dispute that consequently may arise is not a problem with the olog under consideration. As will be made clear in what follows, ologs are designed in accordance with rules that ensure that they be structurally sound rather than they faithfully reflect reality. 
\end{rem}

Types, aspects and facts are the fundamental components of an olog. In the rest of this section we will describe each of them in order.

\subsection{Types}\label{types}
In an olog, a type is represented by a box containing a singular indefinite phrase describing the generic representative example of a whole class of things. Examples of types are
$$
\begin{array}{lllllll}
\fbox{a man}&&&\fbox{a mammal}&&&
\tbox{.35}{a pair $(p,(l,r))$ where $p$ is a person and $(l,r)$ is a pair of shoes}
\end{array} 
$$
and it is a rule of good practice to explicitly provide variable names for the instances of a compound type such as the last two above. That is, it is preferable to replace a compound type such as 
$$
\fbox{a man and a woman}
$$
by 
$$
\fbox{a man $m$ and a woman $w$}
$$  
or also by
$$
\fbox{a pair $(m,w)$ where $m$ is a man and $w$ is a woman}
$$

\begin{notation}
For typographical reasons, it may be convenient to write types in-line by just hinting at the box surroinding them, like this: $\ulcorner\textrm{a man $m$ and a woman $w$}\urcorner$.
\end{notation}

The {\bf rules of good practice for authoring structurally sound types} prescribe that they be represented by a box with text inside that has to
\begin{itemize}
\item[-] begin with an indefinite article ``a'' or ``an'',
\item[-] refer to a distinction made and recognizable by the author,
\item[-] refer to a distinction for which instances can be documented,
\item[-] be the common name that each instance of that distinction can be called,
\item[-] not end by a punctuation mark,
\item[-] declare all variables in a compound type.
\end{itemize}

\subsection{Aspects}
Aspects apply to types. An aspect of a thing is a way of looking at it or measuring it. In an olog, aspects are represented by labelled arrows emanating from one type toward another. For types $A, B$ an aspect 
$\xymatrix{A\ar[r]^f& B}$ is an aspect of the things in $A$ that has things in $B$ as results of certain measurements performed in the way described by the label $f$. In an olog, the label of an aspect must be a fragment of text beginning with a verb, yielding a well-formulated English phrase when the reading of the text in the box of the type from which the aspect starts is followed by the reading of the label itself further followed by the reading of the text in the box of the type to which the aspect arrives. For instance, to establish that a woman is a person amounts to pointing out an aspect of a woman, which is represented as a labelled arrow
\begin{eqnarray}\label{isaperson}
\xymatrix{\fbox{a woman}\ar[rr]^{\al{\textrm{is}}}&&\fbox{a person}}
\end{eqnarray}
in an olog, and has to be read ``a woman is a person''. 
A further example of an aspect in an olog is
$\xymatrix{\ulcorner\textrm{a book}\urcorner\ar[rrrr]^{\al{\textrm{has as first author}}}&&&&\ulcorner\textrm{a person}\urcorner}$, which has to be read ``a book has as first author a person''.
Moreover, {\bf the label of a structurally sound aspect must describe a functional relationship}. Intuitively, if the types of an olog are interpreted as sets, as it will be rigorously done in section~\ref{ologlinear}, then an aspect $\xymatrix{A\ar[r]^f&B}$ will be interpreted as a function from the set interpreting $A$ to the set interpreting $B$.

\begin{notation}
Types and aspects allow the authoring of first examples of interesting ologs, such as 
\begin{eqnarray}\label{alloscopo}
\fbox{\xymatrix{&\fbox{\parbox{.21\textwidth}{a pair $(x,y)$ where $x$ and $y$ are integers}}
\ar[dl]_(.6){\al{x}}\ar[dr]^(.6){\al{y}}
\\
\fbox{an integer}&&\fbox{an integer}
}}
\end{eqnarray}
which shows that in practice, if confusion is not likely to arise, the label of an aspect can be shortened or dropped altogether. With reference to the olog~\eqref{alloscopo} one can observe that the expression ``a pair $(x,y)$ where $x$ and $y$ are integers $x$ an integer'' is not a well-formulated English phrase. Actually, the $\al x$ labelling the arrow is an abbreviation for ``yields, as the value of $x$,''. So, in the end, the aspect in question has to be read as ``a pair $(x,y)$ where $x$ and $y$ are integers yields, as the value of $x$, an integer'' which is a well-formulated English phrase; also making clear that the single character $\al x$ has been used 
to make reference to the intended interpretation of the aspect that it is labelling, namely the projection function
$$
\al x:\xymatrix{\fbox{\parbox{.21\textwidth}{a pair $(x,y)$ where $x$ and $y$ are integers}}\ar[rr]^(.6){(x,y)\mapsto x}&&\fbox{an integer}}
$$ 
\end{notation}
To summarize, the {\bf rules of good practice for authoring structurally sound aspects} prescribe that they be represented by labelled arrows emanating from a type toward a type and that the label of the aspect be a fragment of text that must
\begin{itemize} 
\item[-] begin with a verb,
\item[-] yield a well-formulated English phrase, when the reading of the text in the box of the type from which the aspect starts is followed by the reading of the label of its arrow further followed by the reading of the text in the box of the type to which the aspect arrives,
\item[-] declare a functional relationship so that each instance of the type from which the aspect starts is related to a uniquely designated instance of the type to which the aspect arrives. 
\end{itemize}

\subsection{Paths of aspects}\label{pathasp}
In an olog there can be aspects in a row, that is aspects which follow one the other via intermediate types that link them. Such {\bf paths of aspects} have to be read by inserting the word ``which'' or ``who'' after the text of each intermediate type. For instance, in the olog
\begin{eqnarray}\label{ologi}
\fbox{\xymatrix{\fbox{a child}\ar[r]^(.44){{\al{\textrm{is}}}}&\fbox{a person}\ar[dr]_{\al{\textrm{has, as birthday}}}\ar[rr]^{\al{\textrm{has as parents}}}&&\fbox{\parbox{.14\textwidth}{a pair $(w,m)$ where $w$ is a woman and $m$ is a man}}\ar[r]^{\al{w}}&\fbox{a woman}\\
&&\fbox{a date}\ar[r]^{\al{\textrm{includes}}}&\fbox{a year}
}}
\end{eqnarray}
there are two different paths of aspects: the one from $\ulcorner\textrm{a child}\urcorner$ to  $\ulcorner\textrm{a woman}\urcorner$ and the one from $\ulcorner\textrm{a child}\urcorner$ to $\ulcorner\textrm{a year}\urcorner$. They have to be read as ``a child is a person who has as parents a pair $(w,m)$ where $w$ is a woman and $m$ is a man which yields, as the value of $w$, a woman'' and as ``a child is a person who has, as birthday a date which includes a year'', respectively.

\begin{rem}\label{forpathasp}
For future reference we here observe in passing that this way of reading paths of aspects in an olog provides a {\bf law of composition of aspects}. This point will be reprised in subsection~\ref{facts} below, further discussed in section~\ref{ologlinear}.
\end{rem}

\subsection{Facts}\label{facts}
In an olog, facts amount to the declaration of equational identities between pairs of aspects. It is the possibility of declaring facts in ologs that peculiarly makes explicit the category theoretic ground on which their authoring is based, and makes ologs much more expressive than they would be if they were just graphs with types as nodes and aspects as arcs. In an olog, the possibility of expressing facts arises because a pair of aspects one following the other, say $\xymatrix{A\ar[r]^f&B\ar[r]^g& C}$, can be read one after the other as illustrated in subsection~\ref{pathasp}, giving rise to a functionally designated {\bf composite aspect} $\xymatrix{A\ar[r]^{f;g}&C}$, so that the aspects $f$, $g$ and $f;g$ fit in what is, by construction, a paradigmatic example of {\bf commutative diagram}, that is
\begin{eqnarray}\label{compcomm}
\xymatrix{A\ar@/_/[drr]_{f;g}^{\quad\checkmark}\ar[rr]^f&&B\ar[d]^g\\
&&C}
\end{eqnarray}
declaring that the result of measuring via $f$ and then via $g$ is equal to the result of measuring in one step via $f;g$, by the very definition of composite aspect; and the checkmark inside the diagram is there to communicate precisely this, namely that the diagram~\eqref{compcomm} is commutative, thus representing a fact.
\begin{rem}
Diagrams and commutative diagrams 
will be rigorously discussed in section~\ref{ologlinear}. 
\end{rem}

\begin{exm}
As an example of a fact which holds 
by construction in the sense just illustrated, one can consider
\begin{eqnarray*}
\fbox{\xymatrix{\fbox{a person}\ar@/_/[drrr]_{\ttbox{.2}{
\al{has as parents a pair $(w,m)$ where $w$ is a woman and $m$ is a man which yields, as the value of $w$,}}\quad}\ar[rrr]^(.35){\al{\textrm{has as parents}}}&&&\fbox{\parbox{.3\textwidth}{a pair $(w,m)$ where $w$ is a woman and $m$ is a man}}\ar@{}[dl]|{\checkmark}\ar[d]^(.6){\al{w}}\\
&&&\fbox{a woman}}}
\end{eqnarray*}
where, to be explicit, the labels involved are
\begin{eqnarray*}
f=\al{has as parents}\\[1ex]
g=\al{w}=\al{yields, as the value of $w$,}\\[1ex]
f;g=\ttbox{.2}{\al{has as parents a pair $(w,m)$ where $w$ is a woman and $m$ is a man which yields, as the value of $w$,}}
\end{eqnarray*}
but there is to say that a fact that holds by construction is not truly informative. The informative facts are those which are imposed to clarify matters. For instance, the previous fact can be turned into the following more informative one:
\begin{eqnarray}\label{has-mother}
\fbox{\xymatrix{\fbox{a person}\ar@/_/[drrr]_(.6){\al{\textrm{has as mother}\qquad}}\ar[rrr]^(.35){\al{\textrm{has as parents}}}&&&\fbox{\parbox{.3\textwidth}{a pair $(w,m)$ where $w$ is a woman and $m$ is a man}}\ar@{}[dl]|{\checkmark}\ar[d]^(.6){\al{w}}\\
&&&\fbox{a woman}}}
\end{eqnarray}
declaring that a woman parent is nothing but a mother, and the checkmark is there to communicate this. Explicitly, the imposed equational identity is
\begin{eqnarray*}
\ttbox{.2}{\al{has as parents a pair $(w,m)$ where $w$ is a woman and $m$ is a man which yields, as the value of $w$,}}=\al{has as mother}
\end{eqnarray*}

However, the rigorous way to impose equational identities between paths of aspects in an olog will be discussed in subsection~\ref{ologglinear}.
\end{exm}

\begin{rem}\label{parpath}
Although maybe already clear, it is worth to clarify that {\bf facts declare equational identities between parallel aspects}, where two aspects in an olog are said to be parallel if they emanate from the same type and point toward the same type. 
\end{rem}

The {\bf rules of good practice for authoring structurally sound facts} prescribe that facts are declared by drawing a checkmark in the planar space that a parallel pair of aspects circumscribe on the page or by means of an equational identity between them. Every fact in an olog must be so explicitly declared, no matter how obvious it may seem.


\section{Ologs as labelled deductive systems with imposed equational constraints}\label{ologlinear}
In section~\ref{ologs} we introduced and discussed ologs in a rather informal way. On the contrary, the aim of this section is to describe ologs first as linear sketches and then, to complete the description, as labelled deductive systems with imposed equational constraints, both of which are rigorously defined mathematical structures. For selfcontainment, we recall all the  fundamental notions that will be needed, mainly concerning graphs, categories, diagrams and commutative diagrams in categories, although not extensively but as much as will be useful for the sequel of the paper. The reader who is interested in deepening her knowledge about the notions that will be introduced in this section is referred to~\cite{MR2981171} and~\cite{MR939612}, for instance.

\subsection{Graphs and commutative diagrams}

\begin{defi}\label{graph}
A {\bf graph} is a $4$-tuple $\mathcal G=(G_0, G_1, \partial_0,\partial_1)$ where
\begin{itemize}
\item[-] $G_0$ is a set of {\bf nodes}, that will be usually denoted by upper case letters $A, B, C, \ldots$;
\item[-] $G_1$ is a set of {\bf arcs}, that will be usually denoted by lower case letters $f, g, h,\ldots$; 
\item[-] $\partial_0:G_1\rightarrow G_0$ is a function that will be referred to as {\bf source};
\item[-] $\partial_1:G_1\rightarrow G_0$ is a function that will be referred to as {\bf target}.
\end{itemize}
For any two nodes $A, B$ of $\mathcal G$, a {\bf path from} $A$ {\bf to} $B$ {\bf in} $\mathcal G$ is a $3$-tuple $(A,(f_1,f_2,\ldots,f_n),B)$ where $(f_1,f_2,\ldots,f_n)\in G_1^n$, 
$A=\partial_0(f_1)$, $B=\partial_1(f_n)$ and, for every $i=1,\ldots,n-1$, $\partial_1(f_i)=\partial_0(f_{i+1})$. Two paths of $\mathcal G$ are said to be {\bf parallel} if they have the same first and third component, that is if they start and end at the same nodes.
\end{defi}

\begin{rem}\label{zerolength}
The notion of path in a graph makes sense for $n=0$ too. In that case, the notion identifies the {\bf empty path} on the node $A$ coinciding with $B$, that is $(A,(),A)$.
\end{rem}

\begin{exm}
Most of the time we will be concerned with graphs with a finite number of nodes and a finite number of arcs. Because of this, we will be able to specify their components in explicit set-theoretic terms. For instance, the $4$-tuple 
\begin{eqnarray}\label{nn}
\mathcal G=(\{A,B,C\},\{f,g\},\{(f,A),(g,B)\},\{(f,B),(g,A)\})
\end{eqnarray}
identifies the graph whose components are 
\begin{eqnarray*}
G_0&=&\{A,B,C\}\\
G_1&=&\{f,g\}\\
\partial_0:\left[\begin{array}{l}
f\mapsto A\\
g\mapsto B
\end{array}\right]&:&G_1\rightarrow G_0\\
\partial_1:\left[\begin{array}{l}
f\mapsto B\\
g\mapsto A
\end{array}\right]&:&G_1\rightarrow G_0
\end{eqnarray*}
but observe that in the $4$-tuple~\eqref{nn} the functions $\partial_0$, $\partial_1$ have been given as the functional relations $\partial_0=\{(f,A),(g,B)\}\subseteq\{f,g\}\times\{A,B,C\}$, $\partial_1=\{(f,B),(g,A)\}\subseteq\{f,g\}\times\{A,B,C\}$, respectively.
\end{exm}

\begin{notation}
For $\mathcal G$ a graph and $f$ one of its arcs, the writing $\xymatrix{f:A\ar[r]&B}$ or the writing $\xymatrix{A\ar[r]^f&B}$, equivalently, will be used to indicate that the source of $f$ is the node $A$ and that the target of $f$ is the node $B$. 
\end{notation}

\begin{defi}\label{shapediagr}
For $\mathcal G$ and $\mathcal H$ graphs, a {\bf homomorphism of graphs} from $\mathcal G$ to $\mathcal H$ is a pair $F=(F_0,F_1)$ whose components are functions
$F_0:G_0\rightarrow H_0$, $F_1:G_1\rightarrow H_1$ such that for every arc $f:A\rightarrow B$ of $\mathcal G$, $F_1(f):F_0(A)\rightarrow F_0(B)$ in $\mathcal H$. A homomorphism of graphs from $\mathcal G$ to $\mathcal H$ will be written $F:\mathcal G\rightarrow\mathcal H$ and also referred to as a {\bf diagram of shape} $\mathcal G$ {\bf in} $\mathcal H$.
\end{defi}

\begin{rem}
When confusion is not likely to arise, sometimes we will identify a diagram with the drawing of the graph which is its image. For instance, let $\mathcal G$ be the graph~\eqref{nn} and  $\mathcal H$ be the graph 
$$
(\{0,1,2\},\{01,10\},\{(01,0),(10,1)\},\{(01,1),(10,0)\}).
$$
The diagram
$F:\mathcal H\rightarrow\mathcal G$ identified by the assignments
\begin{eqnarray*}
F_0:\left[\begin{array}{l}
0\mapsto A\\
1\mapsto B\\
2\mapsto C
\end{array}\right]:H_0\rightarrow G_0
\\[1ex]
F_1:\left[\begin{array}{l}
01\mapsto f\\
10\mapsto g\\
\end{array}\right]:H_1\rightarrow G_1
\end{eqnarray*}
has image which can be represented by the drawing
$$
\xymatrix{A\ar@/^/[rr]^f&&B\ar@/^/[ll]^g&C}
$$
\end{rem}

\begin{defi}\label{category}
Let $\mathcal G$ be a graph. Put 
$G_2\doteq\{(f,g)\in G_1\times G_1\mid \partial_1(f)=\partial_0(g)\}$.
$\mathcal G$ is a {\bf category} if it is equipped  with functions
{\bf composition} $\circ :(f,g)\mapsto f\circ g:G_2\rightarrow G_1$, and {\bf identity} $id:A\mapsto id_A: G_0\rightarrow G_1$, such that
\begin{enumerate}
\item for every $(f,g)\in G_2$, $\partial_0(f\circ g)=\partial_0(f)$ and $\partial_1(f\circ g)=\partial_1(g)$;
\item for every $f,g ,h\in G_1$, if $(f,g)\in G_2$ and $(g,h)\in G_2$, then $f\circ(g\circ h)=(f\circ g)\circ h$;
\item for every $A\in G_0$, $\partial_0(id_A)=\partial_1(id_A)=A$;
\item for every $f:A\rightarrow B$, $id_A\circ f=f\circ id_B=f$.
\end{enumerate}
Arbitrary categories will be henceforth usually denoted by upper case letters in blackboard style, such as $\mathbb C, \mathbb D,\ldots$, whereas particular categories will be henceforth usually denoted by upper case letters in bold face style.
\end{defi}

\begin{exm}\label{catexamples}
\hfill
\begin{enumerate}
\item
Sets as nodes, and functions between them as arcs, identify a category ${\bf Sets}$: 
\begin{itemize}
\item[-] for fuctions $f:a\mapsto f(a):A\rightarrow B$ and $g:b\mapsto g(b):B\rightarrow C$, their composition is the function
$f\circ g:a\mapsto g(f(a)):A\rightarrow C$;
\item[-] for $A$ a set, the identity arc at $A$ is the identity function $id_A:a\mapsto a:A\rightarrow A$.
\end{itemize}
\item Graphs as nodes, and homomorphisms of graphs between them as arcs, identify a category ${\bf Grph}$.
\begin{itemize}
\item[-] for homomorphisms of graphs $F=(F_0,F_1):\mathcal G\rightarrow \mathcal H$ and $G=(G_0,G_1):\mathcal H\rightarrow\mathcal K$, their composition is $F\bullet G=(F_0\circ G_0, F_1\circ G_1):\mathcal G\rightarrow\mathcal K$, where $\circ$ is composition in ${\bf Sets}$;
\item[-] for $\mathcal G$ a graph, the identity arc at $\mathcal G$ is $id_{\mathcal G}=(id_{G_0},id_{G_1}):\mathcal G\rightarrow\mathcal G$, with $id_{G_0}$, $id_{G_1}$ the identity functions at $G_0$, $G_1$ in ${\bf Sets}$, respectively.
\end{itemize}
\item\label{freecat} The {\bf free category generated by a graph}  $\mathcal G$ is $\mathcal G^*$, whose nodes are the nodes of $\mathcal G$, whose arcs from $A$ to $B$, say, are the paths from $A$ to $B$ in $\mathcal G$.
\begin{itemize}
\item[-] for arcs $(f_1,f_2,\ldots,f_n):A\rightarrow B$ and $(g_1,g_2,\ldots,g_m):B\rightarrow C$ in $\mathcal G^*$, that is for paths $(A,(f_1,f_2,\ldots,f_n),B)$, $(B,(g_1,g_2,\ldots,g_m),C)$ in $\mathcal G$, respectively, their composition is the arc
$(f_1,f_2,\ldots,f_n,g_1,g_2,\ldots,g_m):A\rightarrow C$
obtained by concatenation of the paths under consideration;
\item[-] for $A$ a node, the identity arc at $A$ is the empty path at $A$, that is $(A,(),A)$, see remark~\ref{zerolength}. 
\end{itemize}
\end{enumerate}
\end{exm}

\begin{defi}
Let $\mathcal G$ be a graph and let $\mathbb C$ be a category. A {\bf commutative diagram of shape} $\mathcal G$ {\bf in} $\mathbb C$ is a diagram $F:\mathcal G\rightarrow\mathbb C$ such that for every nodes $A, B$ of $\mathcal G$, for every parallel paths $(A,(f_1,f_2,\ldots,f_n),B)$, $(A,(g_1,g_2,\ldots,g_m),B)$ in $\mathcal G$, 
\begin{eqnarray}\label{withmeaning}
F_1(f_1)\circ F_1(f_2)\circ\cdots\circ F_1(f_n)=F_1(g_1)\circ F_1(g_2)\circ\cdots\circ F_1(g_m)
\end{eqnarray}
in $\mathbb C$.
\end{defi}

\begin{rem}
After remark~\ref{zerolength}, if, for instance, $n=0$, the identity~\eqref{withmeaning} amounts to
$id_A=F_1(g_1)\circ F_1(g_2)\circ\cdots\circ F_1(g_m)$ in $\mathbb C$.
\end{rem}

\subsection{Linear sketches}\label{llsketch}
Linear sketches are graphs with imposed commutativity conditions. Whereas equational theories are usually specified by means of formal languages of symbolic nature, linear sketches are graphical formal languages that allow the finite specification of the equational theory generated by a signature with unary operations only; namely, the equational theory which is the free category on the underlying graph of the sketch, quotiented by the smallest congruence relation generated by the imposed commutativity conditions. This construction will be described in subsection~\ref{eqsketch} below.

\begin{defi}\label{lsketch}
A {\bf linear sketch} is a pair $\mathfrak S=(\mathcal G, \mathcal D)$ where $\mathcal G$ is a graph and $\mathcal D$ is a set of diagrams in $\mathcal G$. For $\mathbb C$ a category, a {\bf model} of $\mathfrak S$ in $\mathbb C$ is a homomorphism of graphs $\mathcal M:\mathcal G\rightarrow\mathbb C$ such that for every diagram $D:\mathcal S\rightarrow \mathcal G$ in $\mathcal D$, the diagram $\mathcal M\bullet D:\mathcal S\rightarrow \mathbb C$ is commutative. A model of  $\mathfrak S$ in ${\bf Sets}$ will be henceforth referred to as {\bf set-theoretic}, but see remark~\ref{isa} below. 
\end{defi}

\begin{rem}\label{commevery}
Commutative diagrams are the category-theoretic way to express equations and in a sketch $\mathfrak S= (\mathcal G,\mathcal D)$ the elements of $\mathcal D$ represent the equations that must be true in every model of the sketch.
\end{rem}

\begin{exm}\label{forfut}
\begin{enumerate}
\item The {\bf linear sketch for graphs} is $\mathfrak G=(\mathcal G,\emptyset)$ where 
$$\mathcal G=(\{N,A\},\{s,t\},\{(s,A), (t,A)\},\{(s,N),(t,N)\})
$$ 
which can be more conveniently seen by drawing it as
$$
\xymatrix{A\ar@<.8ex>[r]^{s}\ar@<-.8ex>[r]_t&N}
$$
namely as ``the graph of graphs''.
\item\label{catsketch} The {\bf linear sketch underlying a category} $\mathbb C$ is $\mathfrak C=(\mathcal C,\mathcal D)$ where $\mathcal C$ is the graph underlying $\mathbb C$ and $\mathcal D$ is the set of all the commutative diagrams in $\mathbb C$.
\end{enumerate}
\end{exm}

\subsection{The equational theory generated by a linear sketch}\label{eqsketch}
If on one hand, a category identifies a linear sketch, 
see point~\eqref{catsketch} in examples~\ref{forfut}, on the other hand a linear sketch $\mathfrak S = (\mathcal G,\mathcal D)$ identifies a category $Th(\mathfrak S)$ which is the equational theory generated by $\mathfrak S$. In a way that we will rigorously make precise below, the category $Th(\mathfrak S)$ is a quotient of $\mathcal G^*$, the free category generated by the graph $\mathcal G$, see point~\eqref{freecat} of the examples~\ref{catexamples}.

\begin{defi}
Let $\mathbb C$ be a category with underlying graph $\mathcal C=(C_0,C_1,\partial_0,\partial_1)$, and let $\sim\subseteq C_1\times C_1$ be a relation. The {\bf congruence relation generated by} $\sim$ is the smallest equivalence relation $\sim^*$ that contains $\sim$ such that
\begin{enumerate}
\item for every $(f,g)\in C_1\times C_1$, $(f,g)\in\sim^*$ only if $\partial_0(f)=\partial_0(g)$ and $\partial_1(f)=\partial_1(g)$;
\item for every $f,g,h,k\in C_1$ as in the diagram
$$
\xymatrix{A\ar[r]^h&B\ar@<.8ex>[r]^f\ar@<-.8ex>[r]_g&C\ar[r]^k&D}
$$ 
if $f\sim^*g$ then $h\circ f\sim^*h\circ g$ and $f\circ k\sim^*g\circ k$.
\end{enumerate}
The {\bf quotient category of} $\mathbb C$ {\bf by} $\sim^*$ is $\mathbb C\slash\sim^*$, whose nodes are the nodes of $\mathbb C$, whose arcs are $\sim^*$-equivalence classes of arcs of $\mathbb C$ with composition and identity via representatives.
\end{defi}

\begin{defi}\label{tildestar}
Let $\mathfrak S=(\mathcal G,\mathcal D)$ be a linear sketch. The {\bf equational theory generated by $\mathfrak S$} is the category $Th(\mathfrak S)$ which is the quotient category $\mathcal G^*\slash\sim_{\mathcal D}$ where $\sim_{\mathcal D}$ is the smallest congruence relation that contains all the pair of parallel paths of $\mathcal G^*$ from any diagram $D\in\mathcal D$.
\end{defi}

\begin{rem}
Let $\mathfrak S=(\mathcal G,\mathcal D)$ be a linear sketch. Every diagram in $\mathcal D$ becomes commutative in $Th(\mathfrak S)$ and, in a suitable sense which can be made technically precise, the construction of $Th(\mathfrak S)$ out of $\mathfrak S$ is the best way to make this to happen. For details, the reader is referred to~\cite{MR2981171}.
\end{rem}

\begin{exm}\label{equivalents}
The equational theory generated by the linear sketch underlying a category $\mathbb C$, see point~\eqref{catsketch} in examples~\ref{forfut}, is isomorphic to $\mathbb C$, that is $\mathbb C\simeq Th(\mathfrak C)$.
\end{exm}

\subsection{Ologs as linear sketches}\label{ologglinear}
Ologs are made of types, aspects and facts to be rigorously specified in accordance with respective rules of good practice, described in section~\ref{ologs}. Also, as observed in subsection~\ref{facts}, it is because of the possibility of specifying facts, as a way to impose equational identities between aspects, that ologs are more expressive than they would be if they were just graphs with types as nodes and aspects as arcs. From this and what have been previously discussed in this section it follows that an olog can be described as a linear sketch $\mathfrak O=(\mathcal O,\mathcal F)$ where $\mathcal O$ is a graph with types as nodes and aspects as arcs, and $\mathcal F$ is a set of diagrams in $\mathcal O$, to be referred to as facts, that have to be commutative in every set-theoretic model $\mathcal M:\mathcal O\rightarrow{\bf Sets}$ of the sketch $\mathfrak O$, in particular, see remark~\ref{isa} below. 

\begin{rem}\label{story}
We hasten to warn the reader that this is not the end of the story. In subsection~\ref{pathasp} we described how paths of aspects in an olog have to be read, and in remark~\ref{forpathasp} we observed that that way of reading paths of aspects in an olog provides a law of composition of pairs of aspects in a row. 
The possibility of formally composing arcs in a graph, hence in particular of formally composing aspects in (the underlying graph of) an olog, is guaranteed in the so-called labelled deductive systems, that we will discuss in subsection~\ref{labelded}. Thus, ologs are linear sketches but {\it prima facie}. For the moment we are happy with this but, in section~\ref{ologwitheq}, we will recognize ologs under a more proof-theoretic perspective, as labelled deductive systems with imposed equational constraints on their formal proofs. This is needed for the pursueing of the main aim of this paper, which is to increase the logical expressivity of ologs with the possibility of considering syllogistic constraints beyond the equational ones.
\end{rem}

\begin{rem}\label{isa}
In this paper the models of ologs that we will consider will be exclusively set-theoretic. With respect to this, we draw the attention of the reader to observe that among the others, the aspects of an olog whose label is the predicate \al{is} are somewhat peculiar, because such an aspect, say $\xymatrix{A\ar[r]^{\al{is}}&B}$, indicates that 
something relevant to the instances of type $A$ is forgot to subsume them in a class of instances of type $B$, which are at least as general as those of type $A$, if not even genuinely more general. In view of this, it appears natural to request that those aspects be always interpreted as inclusion functions by any set-theoretic model. More explicitly, for $\mathfrak O$ an olog, $\xymatrix{A\ar[r]^{\al{is}}&B}$ an aspect of it and $\mathcal M:\mathcal O\rightarrow{\bf Sets}$ a set-theoretic model, we let $\al{is}$ to be modelled by the inclusion function of the set $\mathcal MA$ in the set $\mathcal MB$. 
\end{rem}

\begin{rem}\label{remrem}
In~\cite{DBLP:journals/corr/abs-1102-1889} it is pointed out that ologs are based on the mathematical notion of category. More precisely, in {\em loc. cit.} it is pointed out that an olog is a presentation of a category by generators, that is types and aspects, and relations, that is facts; moreover, it is also pointed out that this form of presentation for categories allows to separate the strictly graphical part of an olog, namely its types and aspects, from its propositional, we would say equational, part, provided by facts. By describing ologs by means of linear sketches we wanted to rigorously take account of this, since, towards the achievement of the purposes of this paper, see remark~\ref{story}, linear sketches are the mathematical structures that allow to neatly keep track of this sort of dichotomy. We wanted to apply a genuine principle of specification to make a useful distinction between two separate pieces of structure, although strictly interacting.
Now, the moral is the following: from a purely abstract mathematical point of view considering ologs as presentations of categories or as linear sketches amounts to the same thing, on the base of point~\eqref{catsketch} of example~\ref{forfut} and of example~\ref{equivalents}, but it is not like that in view of the purposes of this paper: we find that approaching ologs by means of linear sketches turned out to be more fruitful in respect to those.  
\end{rem}

\begin{exm}\label{nono}
The linear sketch $\mathfrak O=(\mathcal O,\mathcal F)$ that formally presents the olog~\eqref{has-mother}, is identified by the following data.
\begin{itemize}
\item[-] For $\mathcal O$:
\begin{itemize}
\item the set of types of $\mathcal O$ is
$$
\{\ulcorner\textrm{a person}\urcorner, \fbox{\parbox{.3\textwidth}{a pair $(w,m)$ where $w$ is a woman and $m$ is a man}},\ulcorner\textrm{a woman}\urcorner\}
$$
\item the set of aspects of $\mathcal O$ is
$$
\{\al{\textrm{has as parents}}, \al{yields, as the value of $w$,},\al{\textrm{has as mother}}\}
$$
\item the source and target functions are
\begin{eqnarray*}
\partial_0:\left[\begin{array}{l}
\al{\textrm{has as parents}}\mapsto\ulcorner\textrm{a person}\urcorner\\
\al{\textrm{has as mother}}\mapsto\ulcorner\textrm{a person}\urcorner\\
\al{yields, as the values of $w$,}\mapsto\fbox{\parbox{.14\textwidth}{a pair $(w,m)$ where $w$ is a woman and $m$ is a man}}
\end{array}\right]
\\
\partial_1:\left[\begin{array}{l}
\al{has as parents}\mapsto\fbox{\parbox{.14\textwidth}{a pair $(w,m)$ where $w$ is a woman and $m$ is a man}}\\
\al{has as mother}\mapsto\ulcorner\textrm{a woman}\urcorner\\
\al{yields, as the values of $w$,}\mapsto\ulcorner\textrm{a woman}\urcorner
\end{array}\right]
\end{eqnarray*}
\end{itemize}
\item[-] $\mathcal F$ has just one element $F$ that is the diagram of shape
$$
\xymatrix{0\ar[dr]_{02}\ar[r]^{01}&1\ar[d]^{12}\\
&2}
$$
in $\mathcal O$ identified by the assignments
\begin{eqnarray*}
F_0:\left[\begin{array}{l}
0\mapsto\ulcorner\textrm{a person}\urcorner\\
1\mapsto\fbox{\parbox{.14\textwidth}{a pair $(w,m)$ where $w$ is a woman and $m$ is a man}}\\
2\mapsto\ulcorner\textrm{a woman}\urcorner
\end{array}\right]
\\
F_1:\left[\begin{array}{l}
01\mapsto\al{\textrm{has as parents}}\\
02\mapsto\al{\textrm{has as mother}}\\
12\mapsto\al{\textrm{yields, as the values of $w$,}}
\end{array}\right]
\end{eqnarray*}
\end{itemize}
A model of $\mathfrak O=(\mathcal O,\mathcal F)$ 
is $\mathcal M:\mathcal O\rightarrow{\bf Sets}$ identified by the assignments
\begin{eqnarray*}
M_0:\left[\begin{array}{l}
\ulcorner\textrm{a person}\urcorner\mapsto\{\textrm{Michael}, \textrm{Diana}, \textrm{John}, \textrm{Mary}, \textrm{Susan}\}\\
\fbox{\parbox{.14\textwidth}{a pair $(w,m)$ where $w$ is a woman and $m$ is a man}}\mapsto\{(\textrm{Susan}, \textrm{Juan}), (\textrm{Elen}1, \textrm{Albert}), (\textrm{Elen}2, \textrm{Jerry})\}\\
\ulcorner\textrm{a woman}\urcorner\mapsto\{\textrm{Susan}, \textrm{Elen}1, \textrm{Elen}2, \textrm{Clare}\}
\end{array}\right]
\\
M_1:\left[\begin{array}{l}
\al{has as parents}\mapsto\left[\begin{array}{l}
\textrm{Michael}\mapsto(\textrm{Susan},\textrm{Juan})\\
\textrm{Diana}\mapsto(\textrm{Susan},\textrm{Juan})\\
\textrm{John}\mapsto(\textrm{Elen}1,\textrm{Albert})\\
\textrm{Mary}\mapsto(\textrm{Elen}2,\textrm{Jerry})\\
\textrm{Susan}\mapsto(\textrm{Elen}2,\textrm{Jerry})
\end{array}\right]
\\
\al{yields, as the value of $w$,}\mapsto\left[\begin{array}{l}
(\textrm{Susan},\textrm{Juan})\mapsto\textrm{Susan}\\
(\textrm{Elen}1,\textrm{Albert})\mapsto\textrm{Elen}1\\
(\textrm{Elen}2,\textrm{Jerry})\mapsto\textrm{Elen}2
\end{array}\right]
\\
\al{has as mother}\mapsto\left[\begin{array}{l}
\textrm{Michael}\mapsto\textrm{Susan}\\
\textrm{Diana}\mapsto\textrm{Susan}\\
\textrm{John}\mapsto\textrm{Elen}1\\
\textrm{Mary}\mapsto\textrm{Elen}2\\
\textrm{Susan}\mapsto\textrm{Elen}2
\end{array}\right]
\end{array}\right]
\end{eqnarray*}
for instance.
\end{exm}
The aim of the remaining part of this section is to complete the mathematical description of ologs in terms of linear sketches, by pointing out that ologs possess a proof-theoretic structure that have to be taken into account, especially if, as in our case, one is concerned with an investigation directed toward the extension of their logical expressivity. As already hinted at in remark~\ref{story} we will be dealt with so-called labelled deductive systems to see ologs as labelled deductive systems with imposed equational constraints, more precisely, in subsection~\ref{ologwitheq}. For selfcontainment we will recall all the fundamental notions that will be needed. 

\subsection{Labelled deductive systems}\label{labelded}
On labelled deductive systems our main reference is~\cite{MR939612}, from which we took most of the material concerning them, by conveniently rephrasing it.
\begin{defi}\label{labellabel}
A {\bf labelled deductive system} is a graph $\mathcal L$ with
\begin{enumerate}
\item distinguished arcs to be referred to as {\bf axioms}, among which, for every node $A$, there are specified {\bf identity axioms} $id_A:A\rightarrow A$;
\item {\bf rules of inference} for generating new arcs from old, among which, there is the rule  
\begin{eqnarray}\label{cut}
\AxiomC{$\xymatrix{A\ar[r]^f&B}$}
\AxiomC{$\xymatrix{B\ar[r]^g&C}$}
\BinaryInfC{$\xymatrix{A\ar[r]^{fg}&C}$}
\DisplayProof
\end{eqnarray}
\end{enumerate}
So, the nodes and the arcs of a labelled deductive system will be henceforth referred to as {\bf formulas} and {\bf labelled sequents}, respectively. A {\bf proof tree} of $\mathcal L$ is a tree where each node is a labelled sequent and each branching is an instance of an inference rule. 
\end{defi}

\begin{exm}\label{freecatlds} The {\bf free category generated by a labelled deductive system} $\mathcal L$ is the category $Th(\mathcal L)$ whose nodes are the formulas of $\mathcal L$, whose arcs are equivalence classes of labelled sequents of $\mathcal L$ by the smallest congruence relation generated by identifying
\begin{enumerate}
\item the proof tree
\begin{eqnarray*}
\AxiomC{$\xymatrix{A\ar[r]^f&B}$}
\AxiomC{$\xymatrix{B\ar[r]^g&C}$}
\BinaryInfC{$\xymatrix{A\ar[r]^{fg}&C}$}
\AxiomC{$\xymatrix{C\ar[r]^h&D}$}
\BinaryInfC{$\xymatrix{A\ar[r]^{(fg)h}&D}$}
\DisplayProof
\end{eqnarray*}
with the proof tree
\begin{eqnarray*}
\AxiomC{$\xymatrix{A\ar[r]^f&B}$}
\AxiomC{$\xymatrix{B\ar[r]^g&C}$}
\AxiomC{$\xymatrix{C\ar[r]^h&D}$}
\BinaryInfC{$\xymatrix{B\ar[r]^{gh}&C}$}
\BinaryInfC{$\xymatrix{A\ar[r]^{f(gh)}&D}$}
\DisplayProof
\end{eqnarray*}

\item the proof tree
\begin{eqnarray*}
\AxiomC{$\xymatrix{A\ar[r]^f&B}$}
\AxiomC{$\xymatrix{B\ar[r]^{id_B}&B}$}
\BinaryInfC{$\xymatrix{A\ar[r]^{fid_B}&B}$}
\DisplayProof
\end{eqnarray*}
with the proof tree
$$\xymatrix{A\ar[r]^f&B}$$

\item the proof tree
\begin{eqnarray*}
\AxiomC{$\xymatrix{A\ar[r]^{id_A}&A}$}
\AxiomC{$\xymatrix{A\ar[r]^f&B}$}
\BinaryInfC{$\xymatrix{A\ar[r]^{id_Af}&B}$}
\DisplayProof
\end{eqnarray*}
with the proof tree $$\xymatrix{A\ar[r]^f&B}$$
plus the identification conditions imposed by the inference rules of $\mathcal L$ other than~\eqref{cut}, if present. 
\end{enumerate}
\end{exm}

\begin{exm}\label{ldsfreely} 
The {\bf labelled deductive system freely generated} by a graph $\mathcal G$ is $L(\mathcal G)$ whose formulas are the nodes of $\mathcal G$, also referred to as {\bf atomic}, whose axioms are the arcs of $\mathcal G$ together with, for every formula $A$, the identity axioms $id_A:A\rightarrow A$, and~\eqref{cut} the sole rule of inference.
\end{exm}

\begin{exm}\label{equiv}
As a particular case of example~\ref{freecatlds} consider $Th(L(\mathcal G))$, namely 
the free category generated by the free labelled deductive system generated by a graph $\mathcal G$, see example~\ref{ldsfreely}. Explicitly, $Th(L(\mathcal G))$ is the category whose nodes are the nodes of $\mathcal G$, whose arcs are equivalent classes of labelled sequents of $L(\mathcal G)$ by the equivalence relation generated by exclusively the identification conditions (a), (b), (c) in example~\eqref{freecatlds}. By construction, in $Th(L(\mathcal G))$ the diagrams commute solely on the base of the axioms for the category structure, see definition~\ref{category} and see the discussion in the subsection~\ref{comeqproof} below. Moreover, the categories $Th(L(\mathcal G))$ and $\mathcal G^*$, see point~\eqref{freecat} of examples~\ref{catexamples}, are equivalent, in a sense which can be made technically precise; but for what concern us 
it is enough to observe that those categories have the same nodes and that an arc of $Th(L(\mathcal G))$, say from $A$ to $B$, has a canonical representative which can be written as a formal product $f_1f_2\cdots f_n:A\rightarrow B$ of arcs $f_1:A\rightarrow A_1, f_2:A_1\rightarrow A_2,\ldots,f_n:A_{n-1}\rightarrow B$ of $\mathcal G$, without occurrences of parenthesis and of identity axioms, because of the identification conditions (a), (b), (c), except for $n=0$ and $A$ necessarily coinciding with $B$, in which case $f_1f_2\cdots f_n=id_A:A\rightarrow A$. Now, up to unessential formal details such canonical representatives are nothing but morphisms of $\mathcal G^*$.
\end{exm}

\subsection{Commutative diagrams as equivalent proofs}\label{comeqproof}
Roughly, in category theory a coherence theorem is a result establishing that all the diagrams constructed in accordance with the definitory data of a categorial structure commute in all the categories that carry such a structure, that is in every model of the categorial structure at issue. So, coherence theorems are about categorial structures in abstract. The fundamental reference on the subject is~\cite{MR0170925}, but the reader is invited to 
consult~\cite{MR0330253, MR0283045} as well. After the pioneering work of Lambek, see~\cite{MR0235979, MR0242637, MR0349356, MR0340366}, proof-theoretic methods toward the obtainment of specific coherence theorems have been widely and successfully employed, see~\cite{MR1489307, MR1090747} for instance, giving rise to what is nowadays referred to as {\bf categorial proof theory}, on which the reader is invited to consult~\cite{dosen2004proof}.
In essence, for what concern us, the result that has to be kept in mind is that {\bf commutative diagrams coincide with equivalent proofs} in suitable labelled deductive systems, see~\cite{MR685975}; a fact that we will henceforth freely use.

\subsection{The equational theory generated by a linear sketch, reprised}\label{proofeqsketch}
Recall from subsection~\ref{eqsketch}, definition~\ref{tildestar}, the construction of the equational theory $Th(\mathfrak S)$ generated by a linear sketch $\mathfrak S=(\mathcal G,\mathcal D)$. On the base of the equivalence of the categories $\mathcal G^*$ and $Th(L(\mathcal G))$, see example~\ref{equiv}, the fact that commutative diagrams coincide with equivalent proofs, see subsection~\ref{comeqproof}, and the fact that the diagrams in $\mathcal D$ must commute in every model of $\mathfrak S$, see remark~\ref{commevery}, it is possible to give a proof-theoretical description of $Th(\mathfrak S)$. It is the category $Th(L(\mathcal G))$ plus the identification conditions on proof trees that derive from the identification of all the pairs of parallel paths in any diagram $D\in\mathcal D$. Explicitly, for every diagram $D$ and pair of parallel paths, say  $(A,(f_1,f_2,\ldots,f_n),B)$ and $(A,(g_1,g_2,\ldots,g_m),B)$ in the image of $D$ in $\mathcal G$, one has the identification relation
$Df_1Df_2\cdots Df_n\sim Dg_1Dg_2\cdots Dg_m$ in $Th(L(\mathcal G))$ that becomes equality 
in $Th(\mathfrak S)$.

\subsection{Ologs as labelled deductive systems with imposed equational constraints}\label{ologwitheq}
In subsection~\ref{ologglinear} we described ologs as linear sketches. 
In remark~\ref{story} we observed that that was not the end of the story, because the way the aspects in a row in an olog $\mathfrak O=(\mathcal O, \mathcal F)$ are read, see subsection~\ref{pathasp}, provides a law of composition of aspects which, by virtue of definition~\ref{labellabel}, turns the graph $\mathcal O$ into the labelled deductive system $L(\mathcal O)$, see example~\ref{ldsfreely}. More to the point, for $\mathfrak O=(\mathcal O,\mathcal F)$ an olog, the labelled deductive system $L(\mathcal O)$ has 
\begin{itemize}
\item[-] as formulas, the types of $\mathcal O$;
\item[-] as axioms, the aspects of $\mathcal O$, together with, for every type $A$, the identity aspect $\xymatrix{A\ar[r]^{\al{is}}&A}$;
\item[-] as rule of inference the sole rule
\begin{eqnarray}\label{cat}
\AxiomC{$\xymatrix{A\ar[r]^f&B}$}
\AxiomC{$\xymatrix{B\ar[r]^g&C}$}
\BinaryInfC{$\xymatrix{A\ar[r]^{f;g}&C}$}
\DisplayProof
\end{eqnarray}
with $f;g$ the label obtained by reading $f$ followed by $g$ as described in subsection~\ref{pathasp}.
\end{itemize}
As a further step, the application of the identification conditions listed in example~\ref{freecatlds} provides the free category $Th(L(\mathcal O))$ and, moreover, the quotient of this by the smallest congruence relation generated by the facts $\mathcal F$, provides the equational theory generated by $\mathfrak O$, $Th(\mathfrak O)$.

\begin{exm}
The olog~\eqref{has-mother} can be seen to be built in steps has just described. Consider the linear sketch $\mathfrak O=(\mathcal O,\mathcal F)$ that formally presents it in example~\ref{nono}. First, it is convenient to draw the graph $\mathcal O$ as
$$
\fbox{\xymatrix{\fbox{a person}\ar@/_/[drrr]_{\al{has as mother\qquad}}\ar[rrr]^(.35){\al{\textrm{has as parents}}}&&&\fbox{\parbox{.3\textwidth}{a pair $(w,m)$ where $w$ is a woman and $m$ is a man}}\ar[d]^(.6){\al{w}}\\
&&&\fbox{a woman}}}
$$
then to draw the category $Th(L(\mathcal O)$ as
$$
\fbox{\xymatrix{\fbox{a person}\ar@/^/[drrr]^{f;g}\ar@/_/[drrr]_{\al{has as mother\qquad}}\ar[rrr]^(.35){\al{\textrm{has as parents}}}&&&\fbox{\parbox{.3\textwidth}{a pair $(w,m)$ where $w$ is a woman and $m$ is a man}}\ar[d]^(.6){\al{w}}\\
&&&\fbox{a woman}}}
$$
where we did not indicate the identity aspects associated to each type since they are irrelevant by virtue of the identification conditions (b) and (c) in example~\ref{freecatlds}, and where
$$
f;g=\ttbox{.2}{\al{has as parents a pair $(w,m)$ where $w$ is a woman and $m$ is a man which yields, as the value of $w$,}}
$$ 
then to draw the equational theory $Th(\mathfrak O)$
\begin{eqnarray*}
\fbox{\xymatrix{\fbox{a person}\ar@/_/[drrr]_(.6){\al{\textrm{has as mother}\qquad}}\ar[rrr]^(.35){\al{\textrm{has as parents}}}&&&\fbox{\parbox{.3\textwidth}{a pair $(w,m)$ where $w$ is a woman and $m$ is a man}}\ar@{}[dl]|{\checkmark}\ar[d]^(.6){\al{w}}\\
&&&\fbox{a woman}}}
\end{eqnarray*} 
as the resulting olog, where the checkmark is there to communicate the application of the identification $\al{has as mother}=f;g$ imposed by the fact $F\in\mathcal F$ described in example~\ref{nono}.
\end{exm}


\section{A diagrammatic calculus for the syllogistic}\label{calculus}
In this section we recall some basics on syllogistic and describe the main features of the diagrammatic calculus for it that we introduced in~\cite{DBLP:journals/jolli/Pagnan12}, and further investigated in~\cite{DBLP:journals/jolli/Pagnan13}. We will not be exhaustive about syllogistic. The reader who is interested in deepening her knowledge about syllogistic is invited to consult the relevant bibliographic references listed in the previously cited papers, in particular~\cite{Lukasiewicz, Smiley1973-SMIWIA} to begin with. More to the point for what concern the mentioned diagrammatic calculus, in this section we will start introducing it in subsection~\ref{thecalculus} and proceed futher in describing its main features up to subsection~\ref{contradiction} in a rather informal way but, in subsection~\ref{justreq}, for reasons that have to do with the investigation of the possibility of extending the logical expressivity of ologs by imposing syllogistic constraints beyond the equational ones, via the diagrammatic calculus at issue, we will formally treat it within a suitable diagrammatic deductive system that we will describe in detail in subsection~\ref{justreq}.

\subsection{Syllogistic}\label{syllogistic}
Syllogistic was identified by Aristotle as a formal system for logical argumentation in the natural language, probably the first ever known. It is based on the {\bf categorical propositions}
\begin{eqnarray}\label{square}
\begin{tabular}{llllllllll}
${\bf A}_{SP}$:&Every $S$ is $P$&&&&${\bf E}_{SP}$:&Every $S$ is not $P$\\[1ex]
${\bf I}_{SP}$:&Some $S$ is $P$&&&&${\bf O}_{SP}$:&Some $S$ is not $P$
\end{tabular}
\end{eqnarray}   
where the upper case letters $S, P$ are substitutable with meaningful expressions of the natural language, that henceforth will be referred to as {\bf terms}. Thus, ``black dog'' or ``good friend'' are examples of terms and  ``Every good friend is a black dog'' and ``Some black dog is not a good friend'' are examples of categorical propositions; in each categorical proposition, $S$ and $P$ are referred to as {\bf subject} and {\bf predicate}, respectively; ${\bf A}_{SP}$, ${\bf I}_{SP}$, ${\bf E}_{SP}$, ${\bf O}_{SP}$ are the traditional names that the medieval logicians gave to each scheme of categorical proposition, respectively. Specifically, {\bf A} and {\bf I} are taken as vowels from the latin word {\bf A}df{\bf I}rmo, for affirmation, whereas {\bf E} and {\bf O} are taken as vowels from the latin word n{\bf E}g{\bf O}, for negation, so that the categorical propositions which can be bringed back to the schemes ${\bf A}_{SP}$, ${\bf I}_{SP}$ are said to be {\bf universal affirmative}, {\bf particular affirmative} propositions, respectively, whereas those which can be bringed back to the schemes ${\bf E}_{SP}$, ${\bf O}_{SP}$ are said to be {\bf universal negative}, {\bf particular negative} propositions, respectively. As they stand, the categorical propositions occupy the vertices of an ideal square, which is the so-called {\bf square of opposition}. Various interrelations link the categorical propositions in the square but notably, the categorical propositions
that occupy its diagonally opposite vertices are in {\bf contradiction}, since they are the negation of each other and so cannot be jointly affirmed. A {\bf syllogism} in the natural language is a logical argument of the form
\begin{eqnarray}\label{syllosyllo}
P_1,P_2\vdash C
\end{eqnarray}
where $\vdash$ stays for ``therefore'', and $P_1, P_2, C$ are categorical propositions that are distinguished as first premise, second premise and conclusion, respectively. More precisely, for~\eqref{syllosyllo} to be a syllogism, there must be three terms involved in it, typically denoted by $S,M,P$, as follows: $M$ must occur in both the premisses and cannot occur in the conclusion, whereas, in accordance with the traditional way of writing syllogisms, $P$ must occur in the first premise and in the conclusion as the predicate of the latter, whereas $S$ must occur in the second premise and in the conclusion as the subject of the latter. $M$ is usually referred to as {\bf middle term}.

\begin{exm}
The logical argument
\begin{eqnarray}\label{syllo1}
\begin{tabular}{l}
Every mammal is not able to fly,\\
Every donkey is a mammal\\
therefore\\
Every donkey is not able to fly
\end{tabular}
\end{eqnarray}
is a syllogism in which ``mammal'' is $M$, ``donkey'' is $S$ and ``able to fly'' is $P$. Schematically, the previous syllogism can be bringed back to the form ${\bf E}_{MP},{\bf A}_{SM}\vdash{\bf E}_{SP}$.
\end{exm}

\begin{exm}
The logical argument
\begin{eqnarray}\label{syllo2}
\begin{tabular}{l}
Every mammal is able to fly,\\
Some mammal is a donkey\\
therefore\\
Some donkey is able to fly
\end{tabular}
\end{eqnarray}
is a syllogism that can be bringed back to the form ${\bf A}_{MP},{\bf I}_{MS}\vdash{\bf I}_{SP}$.
\end{exm}

\subsection{A diagrammatic calculus for the syllogistic}\label{thecalculus}
The schemes of logical arguments of the form~\eqref{syllosyllo} that are recognizable as syllogisms in accordance with the previously described structure are 256 in total. Among them, Aristotle has been able to classify the 24 syllogisms which are {\bf valid}, namely those whose conclusion logically follows from their premisses by virtue of the {\bf canon of inference} which is the deletion of the middle term. In this sense, the syllogisms~\eqref{syllo1} and~\eqref{syllo2} are both valid but the syllogism
\begin{eqnarray}\label{syllo3}
\begin{tabular}{l}
Every mammal is not able to fly,\\
Some mammal is a donkey\\
therefore\\
Some donkey is able to fly
\end{tabular}
\end{eqnarray}
is not; but we admit that we established these three facts by appealing, up to a certain extent, to our experience about donkeys, mammals and animals that are able to fly.
\begin{rem}\label{perpoi}
We hasten to observe that a logical argument must be recognizable as valid by virtue of its form, exclusively, let alone any reference to the experience of the rational being considering it or to whatsoever notion of truth.  Relevantly, we point out that the syllogisms~\eqref{syllo1} and~\eqref{syllo2} are valid although in each of them one premise is false, assuming truth to coincide with reality, as it seems reasonable for the syllogistic, in accordance with~\cite{MR0010521}. In  fact, for the  syllogism~\eqref{syllo2} it is false that every mammal is able to fly whereas for the syllogism~\eqref{syllo1} it is false that every mammal is not able to fly, since bats are mammals that are able to fly. 
\end{rem}

With regard to syllogisms the question is: is it possible to formally implement the above mentioned canon of inference as an easy to recognize deductive step of calculation?

We answer to this question by describing a diagrammatic deductive calculus for the syllogistic that captures the valid syllogisms exactly, in the sense that on the base of it the valid syllogisms are those whose conclusion follows from their premisses as the result of a diagrammatic step of calculation and just those. This is proved in~\cite{DBLP:journals/jolli/Pagnan12}.

We rephrase the square of opposition~\eqref{square} by considering suitable diagrammatic representations of the corresponding categorical propositions, as {\bf labelled diagrams}, like this
\begin{eqnarray}\label{diagrams}
\begin{tabular}{llllllll}
\UA{S}{P}&&&&\UN{S}{P}\\[2ex]
\PA{S}{P}&&&&\PN{S}{P}
\end{tabular}
\end{eqnarray}
and, for the future, we here request to consider each of the previous diagrams as an indecomposable piece of syntax, as witnessed by the associated label. Consequently, no proper part of a diagram among the previous will be considered as conveying any meaningful piece of information in itself. The rigorous justification of the previous request will be given in subsection~\ref{justreq}. 

\begin{rem}\label{remoremo}
We observe in passing that the diagrams~\eqref{diagrams} are diagrams in the sense of definition~\ref{shapediagr}, of shapes

\begin{eqnarray}\label{shapesforsyll}
\begin{tabular}{llllllll}
\xymatrix{0\ar[r]^{01}&1}&&&&\xymatrix{0\ar[r]^{01}&1&2\ar[l]_{21}}
\\[2ex]
\xymatrix{0&1\ar[l]_{10}\ar[r]^{12}&2}&&&&\xymatrix{0&1\ar[r]^{12}\ar[l]_{10}&2&3\ar[l]_{32}}
\end{tabular}
\end{eqnarray}
respectively, in the graph $\mathcal G=(G_0,G_1,\partial_0,\partial_1)$ identified by the following data:
\begin{eqnarray}\label{S}
G_0=\{S,M,P,\bullet\}\\
G_1=\{\al{sp},\al{sm},\al{mp},\bullet\al s, \bullet\al m,\bullet\al p, \bullet\bullet, \al s\bullet,\al m\bullet,\al p\bullet \}\nonumber\\
\partial_0:\left[\begin{array}{l}
\al{sp}\mapsto S\\
\al{sm}\mapsto S\\
\al{mp}\mapsto S\\
\bullet\al s\mapsto\bullet\\
\bullet\al m\mapsto\bullet\\
\bullet\al p\mapsto\bullet\\
\bullet\bullet\mapsto\bullet\\
\al s\bullet\mapsto S\\
\al m\bullet\mapsto M\\
\al p\bullet\mapsto P
\end{array}\right]
:G_1\rightarrow G_0\nonumber\\[3ex]
\partial_1:\left[\begin{array}{l}
\al{sp}\mapsto P\\
\al{sm}\mapsto M\\
\al{mp}\mapsto P\\
\bullet\al s\mapsto S\\
\bullet\al m\mapsto M\\
\bullet\al p\mapsto P\\
\bullet\bullet\mapsto\bullet\\
\al s\bullet\mapsto\bullet\\
\al m\bullet\mapsto\bullet\\
\al p\bullet\mapsto\bullet
\end{array}\right]
:G_1\rightarrow G_0\nonumber
\end{eqnarray}
in which we have indicated the node $M$ too, in view of the way the diagrams~\eqref{diagrams} will be used below to verify the validity of syllogistic arguments. 

Formally the diagrams~\eqref{diagrams} are identified by the following homomorphisms of graphs:
\begin{itemize}
\item[-] for ${\bf A}_{SP}$, $D_{\bf A}(S,P):[\xymatrix{0\ar[r]^{01}&1}]\rightarrow
\mathcal G$ identified by the following assignments:
\begin{itemize}
\item on the nodes: 
$$
\left[\begin{array}{l}
0\mapsto S\\ 
1\mapsto P
\end{array}\right]
$$
\item on the arcs: 
$$
\left[\begin{array}{l}
01\mapsto\al{sp}
\end{array}\right]
$$
\end{itemize}
\item[-] for ${\bf E}_{SP}$, $D_{\bf E}(S,P):[\xymatrix{0\ar[r]^{01}&1&2\ar[l]_{21}}]\rightarrow\mathcal G$ identified by the following assignments:
\begin{itemize}
\item on the nodes: 
$$
\left[\begin{array}{l}
0\mapsto S\\
1\mapsto \bullet\\
2\mapsto P
\end{array}\right]
$$
\item on the arcs
$$
\left[\begin{array}{l}
01\mapsto \al s \bullet\\
21\mapsto \al p\bullet
\end{array}\right]
$$
\end{itemize}
\item[-] for ${\bf I}_{SP}$, $D_{\bf I}(S,P):[\xymatrix{0&1\ar[l]_{10}\ar[r]^{12}&2}]\rightarrow\mathcal G$ identified by the following assignments:
\begin{itemize}
\item on the nodes: 
$$
\left[\begin{array}{l}
0\mapsto S\\
1\mapsto \bullet\\
2\mapsto P
\end{array}\right]
$$
\item on the arcs
$$
\left[\begin{array}{l}
10\mapsto \bullet\al s\\
12\mapsto \bullet\al p
\end{array}\right]
$$
\end{itemize}
\item[-] for ${\bf O}_{SP}$, $D_{\bf O}(S,P):[\xymatrix{0&1\ar[r]^{12}\ar[l]_{10}&2&3\ar[l]_{32}}]\rightarrow\mathcal G$ identified by the following assignments:
\begin{itemize}
\item on the nodes:
$$
\left[\begin{array}{l}
0\mapsto S\\
1\mapsto \bullet\\
2\mapsto \bullet\\
3\mapsto P
\end{array}\right]
$$
\item on the arcs
$$
\left[\begin{array}{l}
10\mapsto \bullet\al s\\
12\mapsto \bullet\bullet\\
32\mapsto \al p\bullet
\end{array}\right]
$$
\end{itemize}
\end{itemize}
so that it is possible to understand from where the labels of diagrams~\eqref{diagrams} come: they are just abbreviations of the names of the homomorphisms of graphs that identify those diagrams. As an exercise the reader could explicitly describe the remaining diagrams $D_{\bf A}(S, M)$, $D_{\bf A}(M, P)$, $D_{\bf E}(S, M)$, $D_{\bf E}(M, S)$, \ldots
\end{rem}

Now we describe how the diagrams~\eqref{diagrams} can be employed to verify the validity of a given syllogism, in accordance with the following algorithm:
\begin{enumerate}
\item Recognize the terms $S,M,P$ in the given syllogism.
\item Represent the premisses and the conclusion of the syllogism by means of the corresponding diagrams.
\item\label{2} Superpose the diagrams for the premisses on the middle term $M$.
\item If, after~\eqref{2}, $M$ is between two discordantly oriented arrows, such as in the sequence $\rightarrow M\leftarrow$ or in the sequence $\leftarrow M\rightarrow$, then the calculation halts and the syllogism under consideration is not valid.
\item\label{4} If, after~\eqref{2}, $M$ is between two concordantly oriented arrows, such as in the sequence $\rightarrow M\rightarrow$ or in the sequence $\leftarrow M\leftarrow$, then the calculation proceeds futher by substituting the whole sequence with one concordantly oriented arrow, namely with $\rightarrow$ or $\leftarrow$ respectively, thus deleting the middle term $M$. 
\item If, after~\eqref{4}, the obtained diagram is not among the diagrams~\eqref{diagrams}, the calculation halts and the syllogism under consideration is not valid.
\item If, after~\eqref{4}, the obtained diagram is among the diagrams~\eqref{diagrams} but does not coincide with the one for the conclusion, then the calculation halts and the syllogism under consideration is not valid.
\item If, after~\eqref{4}, the obtained diagram is among the diagrams~\eqref{diagrams} and coincides with the one for the conclusion, then the calculation halts and the syllogism under consideration is valid.
\end{enumerate}

The procedure of calculation we just described can be seen to be very easily implementable by having a look at some examples of its employment. For instance, we can formally prove that the syllogisms~\eqref{syllo1} and~\eqref{syllo2} are valid and that~\eqref{syllo3} is not. We repropose them below together with the calculation of their validity beside, like this: 
\begin{itemize}
\item[-] for~\eqref{syllo1},
\begin{eqnarray*}
\begin{tabular}{llllllll}
Every mammal is not able to fly,&&&&\UN{M}{P}&&by (1),(2)\\
Every donkey is a mammal,&&&&\UA{S}{M}&&by (1),(2)\\
therefore&&&&$\xymatrix{S\ar[r]^{{\bf A}_{SM}}&M\ar@{}[rr]^{{\bf E}_{MP}}\ar[r]&\bullet&P\ar[l]}$
&&by (3)\\
Every donkey is not able to fly&&&&\UN{S}{P}&&by (1),(2),(5),(8)
\end{tabular}
\end{eqnarray*}
\item[-] for~\eqref{syllo2},
\begin{eqnarray*}
\begin{tabular}{llllllll}
Every mammal is able to fly,&&&&\UA{M}{P}&&by (1),(2)\\
Some mammal is a donkey,&&&&\PA{M}{S}&&by (1),(2)\\
therefore&&&&$\xymatrix{S\ar@{}[rr]^{{\bf I}_{SM}}&\bullet\ar[r]\ar[l]&M\ar[r]^{{\bf A}_{MP}}&P}$
&&by (3)\\
Some donkey is able to fly&&&&\PA{S}{P}&&by (1),(2),(5),(8)
\end{tabular}
\end{eqnarray*}
\item[-] for~\eqref{syllo3},
\begin{eqnarray*}
\begin{tabular}{llllll}
Every mammal is not able to fly,&&&\UN{M}{P}&&by (1),(2)\\
Some mammal is a donkey,&&&\PA{M}{S}&&by (1),(2)\\
therefore&&&$\xymatrix{S\ar@{}[rr]^{{\bf I}_{SM}}&\bullet\ar[l]\ar[r]&M\ar@{}[rr]^{{\bf E}_{MP}}\ar[r]&\bullet&P\ar[l]}$&&by (3)\\
Some donkey is able to fly&&&\PN{S}{P}&&by (5) but,\\ 
\end{tabular}
\end{eqnarray*}
\end{itemize}
by (7), the syllogism is not valid, since the diagram
for the conclusion is
\PA{S}{P}.

\begin{rem}\label{formalcomp}
The diagrammatic calculus at issue is even more rapidly implementable by making the crucial deductive step of calculation in (5) amount to a formal operation of composition. For instance, in order to conclude that the syllogism~\eqref{syllo1} is valid it suffices to consider the drawing
\begin{eqnarray}\label{primodr}
\xymatrix{S\ar[r]^{{\bf A}_{SM}}\ar@/_1pc/@{-->}[rr]&M\ar@{}[rr]^{{\bf E}_{MP}}\ar[r]&\bullet&P\ar[l]}
\vspace{2ex}
\end{eqnarray}
from which it is possible to read off the diagram
$\PA{S}{P}$ for the seeked conclusion. 
In order to conclude that the syllogism~\eqref{syllo3} is not valid it suffices to consider the drawing
\begin{eqnarray}\label{secondodr}
\xymatrix{S\ar@{}[rr]^{{\bf I}_{SM}}&\bullet\ar@/_1pc/@{-->}[rr]\ar[r]\ar[l]&M\ar@{}[rr]^{{\bf E}_{MP}}\ar[r]&\bullet&P\ar[l]}
\end{eqnarray}
from which it is possible to read off the diagram $\PN{S}{P}$ which is not the one for the indicated conclusion.
\end{rem}

\begin{notation}\label{forsempre}
The dashed arrows employed in the drawings~\eqref{primodr},~\eqref{secondodr} result by virtue of the implementation of a step of logical calculation. Because of this they have been put in evidence by drawing them as dashed. This is a convenient notation that we will employ throughout this section and later on in the paper.
\end{notation}

Theorem~\ref{hauptsatz} below is the main result concerning the diagrammatic calculus under discussion. As already mentioned it is proved in detail in~\cite{DBLP:journals/jolli/Pagnan12}. 

\begin{thm}\label{hauptsatz}
A syllogism if valid if and only if its conclusion follows from its premisses as the result of the diagrammatic procedure of logical calculation that we previously described in steps from (1) to (8).
\end{thm}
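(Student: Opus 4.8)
The plan is to establish the biconditional by a finite case analysis, exploiting the fact that there are only finitely many syllogistic figures and moods to check. First I would set up the bookkeeping: a candidate syllogism is determined by a choice of one of the four schemes $\mathbf{A}, \mathbf{E}, \mathbf{I}, \mathbf{O}$ for each of the first premise, second premise and conclusion, together with a choice of figure, i.e.\ of where the middle term $M$ sits in each premise; this gives the $256$ formal schemes mentioned before Theorem~\ref{hauptsatz}. For each such scheme I would write down the two labelled diagrams~\eqref{diagrams} for the premisses, superpose them on the node $M$ as in step~\eqref{2} of the algorithm, and run the deletion-of-middle-term rule in steps~\eqref{4}--(8). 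The key observation making this tractable is that superposition on $M$ produces, around $M$, exactly one of the four local patterns $\rightarrow M\rightarrow$, $\leftarrow M\leftarrow$, $\rightarrow M\leftarrow$, $\leftarrow M\rightarrow$; in the last two the procedure halts with a ``not valid'' verdict, and in the first two it rewrites the pair of arrows at $M$ into a single concordant arrow, yielding a candidate conclusion diagram on the nodes $S, P$ (possibly with interior bullets). One then checks whether that diagram is on the list~\eqref{diagrams} and whether it matches the diagram for the stated conclusion.

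For the ``if'' direction (soundness of the procedure) I would argue that whenever the algorithm outputs ``valid'', the resulting inference is a genuine logical consequence. Here I would lean on the intended reading of the diagrams fixed in subsection~\ref{thecalculus} and Remark~\ref{remoremo}: each labelled diagram is a diagram of a specified shape in the graph~\eqref{S}, and under any set-theoretic interpretation the diagram $\mathbf{A}_{XY}$ forces an inclusion (a single arrow $X\to Y$), while $\mathbf{E}, \mathbf{I}, \mathbf{O}$ force the existence of appropriate elements or the emptiness of appropriate intersections via the bullet nodes. With this semantics in hand, each successful run of steps~\eqref{4}--(8) corresponds to composing/pasting the interpreted arrows through the point interpreting $M$ and discarding $M$, which is precisely a valid set-theoretic inference; so the conclusion diagram is validated in every model, i.e.\ the syllogism is valid. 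Concretely this reduces to checking that each of the finitely many moods the algorithm accepts is on Aristotle's list of $24$ valid syllogisms.

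For the ``only if'' direction I would take the contrapositive: if the algorithm rejects a scheme, that scheme is not a valid syllogism. There are two ways the algorithm can reject. If it halts because $M$ lies between discordantly oriented arrows (step~(4)), then the two premisses give no constraint that links $S$ and $P$ through $M$ in a usable direction, and one exhibits a counter-model — a choice of finite sets interpreting $S, M, P, \bullet$ satisfying both premiss diagrams but falsifying the stated conclusion diagram. If instead the algorithm deletes $M$ but the resulting diagram either fails to be on the list~\eqref{diagrams} (step~(6)) or is on the list but is not the conclusion diagram (step~(7)), then again one produces an explicit set-theoretic counter-model; in the step~(7) case the mismatch is between two of $\mathbf{A},\mathbf{E},\mathbf{I},\mathbf{O}$ on the pair $S,P$, and the square-of-opposition relations tell us exactly which counter-model to use. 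Tallying up, exactly $24$ of the $256$ schemes survive, and these coincide with the classically valid moods, which closes the biconditional.

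The main obstacle I expect is not any single deep step but the sheer size of the case analysis together with the care needed to pin down the semantics of the four diagram types precisely enough that ``delete the middle term'' is provably sound. In practice one shortens the casework by symmetry: the figure and the $S\leftrightarrow P$ naming conventions act on the $256$ schemes, and the bullet-node structure of $\mathbf{E}, \mathbf{I}, \mathbf{O}$ behaves uniformly, so the genuinely distinct local configurations at $M$ reduce to a handful. The delicate point is the ``only if'' direction in the step~(7) situation, where the algorithm produces a well-formed but wrong conclusion diagram: there one must be sure the counter-model simultaneously respects both premiss diagrams (including the existential commitments encoded by the bullets) while refuting the asserted conclusion, and getting the existential import conventions right is where the argument is most easily bungled. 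Since the full verification is carried out in~\cite{DBLP:journals/jolli/Pagnan12}, here I would present the method and one representative case of each kind (an accepted mood, a discordant-arrows rejection, and a step~(7) rejection) and refer to that paper for the exhaustive check.
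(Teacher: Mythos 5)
Your outline is essentially the right way to actually prove this statement, and it is worth noting that the paper itself does not prove it at all: both here and in its rephrased form (``a syllogism is valid if and only if it is provable in SYLL'') the proof consists of a bare citation of~\cite{DBLP:journals/jolli/Pagnan12}. So your plan --- a finite case analysis over the $256$ schemes, a semantic soundness argument for the middle-term deletion step, and explicit set-theoretic counter-models for every rejected scheme --- is a genuinely different route from what the paper offers and is the standard, correct strategy; the symmetry reduction you invoke is indeed available, since ${\bf I}$ and ${\bf E}$ coincide with their reversals (remark~\ref{revequiv}) and the figure permutations act on the schemes.

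There is, however, one concrete point that you flag but leave unresolved, and it is exactly where the theorem as stated is delicate: existential import. Of Aristotle's $24$ valid moods, $9$ have two universal premisses and a particular conclusion. By the bullet-counting rejection criterion of subsection~\ref{rejection}, none of these can be produced by the procedure (1)--(8) applied to the two premiss diagrams alone: two universal premisses contribute zero or one bullet (two only when both are negative, which never yields a valid mood), whereas a conclusion of type ${\bf I}$ or ${\bf O}$ requires one or two, and the counts never match. These nine moods become derivable only after adjoining an existential-import premise ${\bf I}_{SS}$ or ${\bf I}_{MM}$, as in subsection~\ref{eximport}. So before launching the case analysis you must fix which reading of ``valid'' the theorem uses --- the $15$ moods valid without existential import, with the procedure taken literally as a two-premise algorithm, or all $24$, with existential-import diagrams silently admitted as further premisses --- and your counter-model construction in the ``only if'' direction must be made consistent with that choice (in particular, whether interpretations are allowed to send a term to the empty set). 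With that decision made explicit, and with the representative cases you propose worked out in place of the exhaustive check deferred to~\cite{DBLP:journals/jolli/Pagnan12}, your argument goes through.
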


\subsection{A rejection criterion}\label{rejection}
In implementing the diagrammatic procedure of calculation that we described in subsection~\ref{thecalculus} no symbol $\bullet$ is created or deleted; so a necessary condition for the validity of a syllogism is the following: in the diagram for the conclusion of a valid syllogism there must be as many symbols $\bullet$ as they are in total in the diagrams for its premisses. On the base of this, it is immediate to establish that the syllogism~\eqref{syllo3} is not valid, since two symbols $\bullet$ are in the diagrams for its premisses whereas just one is in the diagram for its conclusion. On the other hand, it is immediate to verify that the syllogism
\begin{eqnarray*}
\begin{tabular}{l}
Every mammal is not able to fly,\\
Every donkey is not a mammal\\
therefore\\
Some donkey is not able to fly
\end{tabular}
\end{eqnarray*}
is not valid, although two symbols $\bullet$ are in the diagrams for its premisses as well as in the diagram for its conclusion; thus showing that the condition at issue is not sufficient for the validity of a syllogism. The necessary condition that we have just described is a {\bf rejection criterion} intrinsic to the calculus; thanks to it, the invalid 232 syllogisms can be easily rejected; 

\subsection{The managing of existential import}\label{eximport}
With reference to the categorical propositions in the square of opposition it is observed that propositions of the form ${\bf I}_{SP}$ logically follow from propositions of the form ${\bf A}_{SP}$ and that propositions of the form ${\bf O}_{SP}$ logically follow from propositions of the form ${\bf E}_{SP}$. This is correct only if {\bf existential import} is assumed; otherwise it is possible to exhibit suitable counterexamples. Existential import is the explicit assumption of existence of some individual in the extension of a term. To clarify matters, let us consider the first order encoding of syllogistic. Syllogistic can be encoded in a first order logical calculus by means of three unary predicates, that is $S(x)$, $M(x)$ and $P(x)$. The categorical propositions in the square of opposition take the form of corresponding well formed formulas, like this:
\begin{eqnarray*}
\begin{tabular}{llllllllll}
${\bf A}_{SP}$:&$\forall x.(S(x)\Rightarrow P(x))$ &&&&${\bf E}_{SP}$:&$\forall x.(S(x)\Rightarrow\neg P(x))$\\[1ex]
${\bf I}_{SP}$:&$\exists x.(S(x)\wedge P(x))$&&&&${\bf O}_{SP}$:&$\exists x.(S(x)\wedge\neg P(x))$
\end{tabular}
\end{eqnarray*}   

Now, on one hand it is possible to prove
\begin{eqnarray}\label{AI}
\exists x.S(x), \forall x.(S(x)\Rightarrow P(x))\vdash\exists x.(S(x)\wedge P(x))
\end{eqnarray}
and 
\begin{eqnarray}\label{EO}
\exists x.S(x), \forall x.(S(x)\Rightarrow \neg P(x))\vdash\exists x.(S(x)\wedge \neg P(x))
\end{eqnarray}
by virtue of the condition of existential import which is $\exists x.S(x)$; on the other hand it is possible to produce counterexamples to 
\begin{eqnarray}\label{noAI}
\forall x.(S(x)\Rightarrow P(x))\vdash\exists x.(S(x)\wedge P(x))
\end{eqnarray}
and
\begin{eqnarray}\label{noEO}
\forall x.(S(x)\Rightarrow \neg P(x))\vdash\exists x.(S(x)\wedge \neg P(x))
\end{eqnarray}

As a consequence of the rejection criterion discussed in subsection~\ref{rejection}, in the diagrammatic calculus that we are describing the logical deductions corresponding to~\eqref{noAI} and~\eqref{noEO} cannot be implemented. On the other hand, the logical deductions corresponding to~\eqref{AI} and~\eqref{EO} can be implemented provided a suitable diagrammatic condition of existential import is taken into account, that is 
$$
\PA{S}{S}
$$
to be read as ``Some $S$ is $S$'' to mean ``Some $S$ exists'', by virtue of which it is possible to calculate
\begin{itemize}
\item[-] for~\eqref{AI}:
$$
\xymatrix{S\ar@{}[rr]^{{\bf I}_{SS}}&\bullet\ar@/_1pc/@{-->}[rr]\ar[l]\ar[r]&S\ar[r]^{{\bf A}_{SP}}&P}
$$
that is ${\bf I}_{SS}, {\bf A}_{SP}\vdash{\bf I}_{SP}$;
\item[-] for~\eqref{EO}:
$$
\xymatrix{S\ar@{}[rr]^{{\bf I}_{SS}}&\bullet\ar@/_1pc/@{-->}[rr]\ar[l]\ar[r]&S\ar@{}[rr]^{{\bf E}_{SP}}\ar[r]&\bullet& P\ar[l]}
$$
that is ${\bf I}_{SS},{\bf E}_{SP}\vdash{\bf O}_{SP}$.
\end{itemize}
Among the 24 valid syllogisms identified by Aristotle, those which are valid under existential import are 9. They are precisely those with premisses which are both universal and conclusion which is particular. 

\subsection{Contradiction via diagrams}\label{contradiction}
The diagrams~\eqref{diagrams} have been introduced in place of corresponding categorical propositions toward the obtainment of a ``diagrammatic square of opposition''. It is possibile to verify that all of the interrelations that we did not mention, linking the categorical propositions in the original square of opposition~\eqref{square} hold correspondingly in its diagrammatic counterpart~\eqref{diagrams}; see~\cite{DBLP:journals/jolli/Pagnan12} for details, but also consider that what we actually did in subsection~\ref{eximport} was showing that the so-called laws of subalternation are diagrammatically deducible, for instance. Now, {\bf diagrammatic contradiction} is expressed by a diagram such as 
\PN{X}{X} that has to be contradictorially read as ``Some $X$ is not $X$''. 
By means of a suitable diagrammatic calculation, one can show that the diagrams that occupy the diagonally opposite vertices in the diagrammatic square~\eqref{diagrams} are in contradiction:
\begin{itemize}
\item[-] for \UA{S}{P} and \PN{S}{P}:
$$
\xymatrix{S\ar@{}[rrr]^{{\bf O}_{SP}}&\bullet\ar[l]\ar[r]&\bullet&P\ar[l]&S\ar[l]_{{\bf A}_{SP}}\ar@/^1pc/@{-->}[ll]}
$$
from which it is possible to read off the diagram \PN{S}{S};
\item[-] for \PA{S}{P} and \UN{S}{P}:
$$
\xymatrix{S\ar@{}[rr]^{{\bf I}_{SP}}&\bullet\ar@/_1pc/@{-->}[rr]\ar[l]\ar[r]&P\ar@{}[rr]^{{\bf E}_{SP}}\ar[r]&\bullet&S\ar[l]}
$$
from which it is possible to read off the diagram \PN{S}{S}.
\end{itemize}
To further clarify matters, we observe that in the original square of opposition~\eqref{square} the categorical propositions of the form ${\bf A}_{SP}$, ${\bf E}_{SP}$ are {\bf contrary} to each other. A priori, categorical propositions of that form can be jointly affirmed without producing a contradiction, since to assume that they hold together leads to conclude that the extension of $S$ must be empty. In first order logic, this is given by the provability of the logical consequence 
$$
\forall x.(S(x)\Rightarrow P(x)), \forall x.(S(x)\Rightarrow\neg P(x))\vdash\forall x.(S(x)\Rightarrow\neg S(x))
$$
that is diagrammatically very quickly expressed by the drawing 
$$
\xymatrix{S\ar@/_1pc/@{-->}[rr]\ar[r]^{{\bf A}_{SP}}&P\ar@{}[rr]^{{\bf E}_{SP}}\ar[r]&\bullet&S\ar[l]}
$$
from which it is possibile to read off the diagram \UN{S}{S} for the categorical proposition ``Every $S$ is not $S$'' or, informally, for ``$S$ is empty''. On the other hand, if the existence of some individual in the extension of the term $S$ is assumed, via the condition of existential import \PA{S}{S}, see subsection~\ref{eximport}, then from it and the previously deduced diagram for 
${\bf E}_{SS}$, the contradictory diagram for ${\bf O}_{SS}$ is diagrammatically deducible.

\subsection{Formalizing the calculus}\label{justreq}
Just after the diagrammatic square of opposition~\eqref{diagrams} we requested to consider each of the diagrams that form it as an
indecomposable piece of syntax. In this subsection we rigorously justify this request by formalizing the employment that we made so far of the diagrams~\eqref{diagrams}, 
within a suitable {\bf diagrammatic deductive system} that we name SYLL. Most of the content of this subsection is from~\cite{DBLP:journals/jolli/Pagnan12}, but appropriately reworked.

\begin{defi}\label{defi1}
The {\bf syntactic primitives} of SYLL are the symbols $\bullet$, $\rightarrow$, $\leftarrow$ toghether with a denumerable amount of term-variables $A, B, C,\ldots, M, P, S,\ldots$. The {\bf syllogistic diagrams} of SYLL are the labelled diagrams in the diagrammatic square~\ref{diagrams}. A {\bf diagram} of SYLL is a finite list of arrow symbols separated by a single bullet symbol $\bullet$ or a single term-variable, beginning and ending at a term-variable. The {\bf reversal} of a diagram is a diagram, the one obtained by specular symmetry from the original diagram. A {\bf part} of a diagram is a finite list of consecutive components of a diagram.
\end{defi}

\begin{rem}\label{revequiv}
The reversal of the syllogistic diagrams~\eqref{diagrams} are
\begin{eqnarray}\label{revdiagrams}
\begin{tabular}{llllllll}
\revUA{S}{P}&&&&\revUN{S}{P}\\[2ex]
\revPA{S}{P}&&&&\revPN{S}{P}
\end{tabular}
\end{eqnarray}
to clarify that because of their asymmetry ${\bf A}_{SP}$ and ${\bf O}_{SP}$ are not equivalent to their respective reversals, that one could be tempted to identify with ${\bf A}_{PS}$ and ${\bf O}_{PS}$; whereas because of their symmetry ${\bf I}_{SP}$ and ${\bf E}_{SP}$ are equivalent to their respective reversals ${\bf I}_{PS}$, ${\bf E}_{PS}$, in accordance with the fact that the formulas
$\exists x.(S(x)\wedge P(x))$, $\exists x.(P(x)\wedge S(x))$ are equivalent, and the formulas $\forall x.(S(x)\Rightarrow\neg P(x))$, $\forall x.(P(x)\Rightarrow\neg S(x))$
are equivalent.
\end{rem}

\begin{notation}
Whenever there will be the need to anonimously refer to non-empty parts of diagrams, they will be henceforth denoted by a symbol $*$. In order to explicitly distinguish a part with respect to a whole diagram we adopt a sort of heterogeneous notation mixing $*$ symbols and syntactic primitives. For instance, the writing $*\rightarrow A$ refers to a diagram in which the part $\rightarrow A$ has been distinguished with respect to the remaining part $*$. So, it may be the case that the whole diagram looks like $S\rightarrow A$ or $B\rightarrow\bullet\rightarrow A$, for instance, so that $*$ would refer to $S$ or $B\rightarrow\bullet$, respectively. The occurrence of more than one symbol $*$ may refer to different parts of a whole diagram. For instance, the writing $*A\leftarrow*$ makes explicit the part $A\leftarrow$ of a diagram that in its entirety could look like $A\leftarrow\bullet\leftarrow A\leftarrow C$, where the part at issue is the one on the right of the part $\bullet\leftarrow$ not the one on the left of it, that is $A\leftarrow *$, because of the symbol $*$ on the left of $A$ in the writing $*A\leftarrow*$, denoting a non-empty part.
\end{notation}

\begin{defi}
A {\bf superposable pair} of diagrams is a pair of diagrams $(*A, A*)$ or $(A*, *A)$ whose {\bf superposition} is, in both cases, the diagram $*A*$, which is obtained by superposing the components of the pair on the common extremal term-variable $A$. A {\bf composable pair} of diagrams is a superposable pair of diagrams $(*\rightarrow A, A\rightarrow *)$ or $(A\rightarrow *,*\rightarrow A)$, $(*\leftarrow A, A\leftarrow *)$ or $(A\leftarrow *, *\leftarrow A)$. In the first two cases, a composable pair gives rise to a {\bf composite diagram} $*\rightarrow *$ obtained by substituting the part $\rightarrow A\rightarrow $ in the superposition $*\rightarrow A\rightarrow *$ with the sole, accordingly oriented, arrow symbol $\rightarrow$. Analogously, in the second two cases a composable pair gives rise to a composite diagram $*\leftarrow *$. 
\end{defi}

\begin{defi}
A {\bf well formed diagram} of SYLL is inductively defined as follows:
\begin{enumerate}
\item a syllogistic diagram is a well formed diagram;
\item the reversal of a syllogistic diagram is a well formed diagram;
\item a diagram which is the superposition of a superposable pair of well formed diagrams is a well formed diagram;
\item nothing else is a well formed diagram.
\end{enumerate}
\end{defi}

\begin{defi}
The {\bf rules of inference} of SYLL are
\begin{eqnarray*}
\AxiomC{}
\UnaryInfC{\PA{A}{A}}
\DisplayProof
\end{eqnarray*}

\begin{eqnarray*}
\AxiomC{\UA{S}{P}}
\doubleLine
\UnaryInfC{\revUA{S}{P}}
\DisplayProof
\qquad&\qquad
\AxiomC{\UN{S}{P}}
\doubleLine
\UnaryInfC{\revUN{S}{P}}
\DisplayProof
\\
\\
\\
\AxiomC{\PA{S}{P}}
\doubleLine
\UnaryInfC{\revPA{S}{P}}
\DisplayProof
\qquad&\qquad
\AxiomC{\PN{S}{P}}
\doubleLine
\UnaryInfC{\revPN{S}{P}}
\DisplayProof
\\
\\
\\
\AxiomC{$*A$}
\AxiomC{$A*$}
\BinaryInfC{$*A*$}
\DisplayProof
\qquad&\qquad
\AxiomC{$A*$}
\AxiomC{$*A$}
\BinaryInfC{$*A*$}
\DisplayProof
\\
\\
\\
\AxiomC{$*\rightarrow A\rightarrow *$}
\UnaryInfC{$*\rightarrow*$}
\DisplayProof
\qquad&\qquad
\AxiomC{$*\leftarrow A\leftarrow *$}
\UnaryInfC{$*\leftarrow*$}
\DisplayProof
\end{eqnarray*}
A {\bf proof tree} of SYLL is a tree where each node is a well formed diagram and each branching is an instance of a rule of inference. 
In SYLL, a {\bf formal proof} of a syllogism is a proof tree with root which is the conclusion of the syllogism and leaves which are the premisses of the syllogism. In SYLL, a syllogism is {\bf provable} if there is a formal proof for it.
\end{defi}

\begin{exm}
The syllogism ${\bf A}_{PM},{\bf E}_{SM}\vdash{\bf E}_{SP}$ is provable in SYLL, because the proof tree
$$
\AxiomC{\UN{S}{M}}
\AxiomC{\UA{P}{M}}
\UnaryInfC{\revUA{P}{M}}
\BinaryInfC{$\xymatrix{S\ar@{}[rr]^{{\bf E}_{SM}}\ar[r]&\bullet&M\ar[l]&P\ar[l]_{{\bf A}_{PM}}}$}
\UnaryInfC{\UN{S}{P}}
\DisplayProof
$$
is a formal proof for it.
\end{exm}

Finally, in view of what has been discussed in this subsection we observe that theorem~\ref{hauptsatz} can be rephrased as
\begin{thm}
A syllogism is valid if and only if it is provable in SYLL
\end{thm}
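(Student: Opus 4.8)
The plan is to obtain the statement as a rephrasing of Theorem~\ref{hauptsatz}. That theorem already equates validity of a syllogism with the success of the diagrammatic procedure of calculation of steps (1)--(8), so it suffices to prove the auxiliary claim that, for a fixed syllogism, the procedure (1)--(8) succeeds if and only if the syllogism is provable in SYLL; everything then follows by chaining the two equivalences.

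For the auxiliary claim I would first set up an explicit dictionary between the moves of the procedure and the rules of inference of SYLL. The superposition of the two premise-diagrams on the middle term $M$ in step (3) is precisely an instance of one of the two binary superposition rules $*A,\ A*\ \vdash\ *A*$; the only point requiring care is that $M$ may occur as subject in one premise and as predicate in the other, so that a premise-diagram may first have to be replaced by its reversal, an operation licensed by the four double-line reversal rules and innocuous by Remark~\ref{revequiv}, since the ${\bf I}$- and ${\bf E}$-reversals denote the same categorical proposition while the ${\bf A}$- and ${\bf O}$-reversals are admitted as genuine well formed diagrams serving only as intermediate steps. The deletion of the middle term in steps (4)--(5), which applies exactly when $M$ is flanked by two concordantly oriented arrows, is an instance of one of the two unary composition rules $*\rightarrow A\rightarrow *\ \vdash\ *\rightarrow *$ and $*\leftarrow A\leftarrow *\ \vdash\ *\leftarrow *$. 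The diagrammatic existential-import premise $\PA{A}{A}$ used in subsection~\ref{eximport} is the single axiom of SYLL. With this dictionary the forward direction is immediate: a successful run of the procedure, read bottom-up, is literally a SYLL proof tree whose leaves are the two premise-diagrams, whose internal branchings are the reversals (if any) and the one superposition on $M$, and whose root is the diagram produced by deleting $M$, which by success coincides with the conclusion-diagram; the failure clauses in steps (4),(6),(7),(8) are exactly the cases in which no such tree exists.

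For the backward direction one needs a normalization argument: given any SYLL formal proof of the syllogism, i.e. a proof tree with the two premise-diagrams as leaves and the conclusion-diagram as root, one must rearrange it into the canonical shape ``reverse the premisses as needed, superpose on $M$, delete $M$'' without altering its leaves or its root, thereby exhibiting a successful run of the procedure. Since a syllogism involves exactly three term-variables $S,M,P$ with $M$ occurring once in each premise and in neither the conclusion, and since the composition rules are the only rules that eliminate a term-variable, the tree must contain a composition step on $M$; pushing the reversal rules toward the leaves and collapsing the superpositions into a single application, which is possible because a diagram is a rigid finite list of arrow symbols separated by bullets and term-variables so that the placement of each component is determined, yields the canonical run. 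This normalization, handling the a priori freedom in the order and placement of reversals, superpositions and compositions inside a SYLL proof, is the step I expect to be the main obstacle; it is carried out in detail in~\cite{DBLP:journals/jolli/Pagnan12}, which is the reason the present statement is offered as a reformulation of Theorem~\ref{hauptsatz} rather than proved from scratch.
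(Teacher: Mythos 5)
Your proposal is correct and takes essentially the same route as the paper: the paper presents this theorem explicitly as a rephrasing of Theorem~\ref{hauptsatz} and its entire proof is a citation of~\cite{DBLP:journals/jolli/Pagnan12}, where the equivalence between the procedure (1)--(8) and provability in SYLL is established. Your dictionary between the algorithm's steps and the rules of SYLL, and your identification of the normalization of proof trees as the nontrivial step deferred to the reference, accurately supply the justification the paper leaves implicit.
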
 
\begin{proof}
See~\cite{DBLP:journals/jolli/Pagnan12}.
\end{proof}


\section{Ologisms}\label{ologisms}
In subsection~\ref{facts} we discussed facts in ologs and pointed out that they provide the possibility of declaring equational constraints between pairs of parallel aspects; in subsection~\ref{ologwitheq}, that has been shown to correspond to the imposition of suitable identification conditions between certain formal proofs. The aim of this section is that of showing how to extend the logical expressivity of ologs by moreover making available the possibility of declaring constraints of syllogistic nature in them. Here the approach is informal. We will be mathematically rigorous in section~\ref{extlabelded}. Toward the pursueing of the previously hinted at aim, in this section the idea is that of giving an intuition of how it is possible to take advantage of the nature and peculiar features of the diagrammatic calculus for the syllogistic that we discussed in section~\ref{calculus}; firstly, its compositional nature, see remark~\ref{formalcomp}, which makes it immediately compatible with the natural arising of composite aspects in ologs. The structured models for knowledge representation that appear because of the extension of ologs with syllogistics will be referred to as {\bf ologisms}, as a name which is a crasis between the words ``olog'' and ``syllogism''. Ologisms extend both ologs and the calculus described in section~\ref{calculus}. 
 
\subsection{Categorical propositions in extended ologs}\label{recat}
To begin the modelling of syllogistic reasoning, we here describe how to read the labelled diagrams~\eqref{diagrams} in a suitably extended olog. More to the point, we will be able to identify those diagrams in a suitable extension $\mathcal O^{\bullet}$ of the underlying graph of a fixed olog $\mathfrak O=(\mathcal O,\mathcal F)$; that is, properly as diagrams of shapes~\eqref{shapesforsyll} in $\mathcal O^{\bullet}$, which, throughout this section, will be the graph obtained from $\mathcal O$ by the addition of a distinguished node $\bullet$ to be not considered as a type and, possibly, of arcs having $\bullet$ as source or target, toward a type or $\bullet$ in the first instance, from a type or $\bullet$ in the second instance. We hasten to say that this added arcs are not intended to be considered as aspects.
See example~\ref{below} for a clarification. 

\begin{rem}
They obtainement of $\mathcal O^{\bullet}$ out of $\mathcal O$ may be somewhat obscure at this point, but the reader should consider that while writing this section we are projected toward the actual authoring of ologisms. In view of this, consider that extending an olog $(\mathcal O,\mathcal F)$ at an ologism requires the extension of $\mathcal O$ at $\mathcal O^{\bullet}$, firstly, in a way that depends on the syllogistic constraints that the author evaluates to impose; thus on a certain amount of subjectivity, as already observed for ologs, in remark~\ref{perdopo}. 
\end{rem}

\begin{exm}\label{below}
Consider the olog $\mathfrak O=(\mathcal O,\mathcal F)$ described in example~\ref{nono}. An ologism extending $\mathfrak O$ could be obtained by imposing the syllogistic constraint provided by the particular affirmative categorical proposition ``Some person is a woman'' by the addition of the syllogistic diagram
$$
\xymatrix{\stackrel{P}{\fbox{a person}}\ar@{}[rr]^{{\bf I}_{PW}}&\bullet\ar[r]\ar[l]&\stackrel{W}{\fbox{a woman}}}
$$
to the drawing~\eqref{has-mother}, toward the obtainment of the drawing
\begin{eqnarray}\label{motherologism}
\fbox{\xymatrix{\stackrel{P}{\fbox{a person}}\ar@/_/[drrr]|(.5){\al{\textrm{has as mother}\qquad}}\ar[rrr]^(.35){\al{has as parents}}&&&\fbox{\parbox{.3\textwidth}{a pair $(w,m)$ where $w$ is a woman and $m$ is a man}}\ar[d]^(.6){\al{w}}\\
\bullet\ar@{}[urrr]|(.6){\checkmark}\ar@{}[urr]|(.15){{\bf I}_{PW}}\ar[rrr]\ar[u]&&&\stackrel{W}{\fbox{a woman}}}}
\end{eqnarray}
or, more formally, by extending the data that identify $\mathcal O$, like this:
\begin{itemize}
\item[-] For $\mathcal O^{\bullet}$:
\begin{itemize}
\item the set of nodes of $\mathcal O^{\bullet}$ is
$$
\{\stackrel{P}{\ulcorner\textrm{a person}\urcorner}, \fbox{\parbox{.3\textwidth}{a pair $(w,m)$ where $w$ is a woman and $m$ is a man}},\stackrel{W}{\ulcorner\textrm{a woman}\urcorner},\bullet\}
$$
\item the set of arcs of $\mathcal O^{\bullet}$ is
$$
\{\al{\textrm{has as parents}}, \al{w},\al{has as mother}, \bullet P,\bullet W\}
$$
where $\bullet P,\bullet W$ are the labels of exactly one arc from $\bullet$ to $\ulcorner\textrm{a person}\urcorner$, and exactly one arc from $\bullet$ to $\ulcorner\textrm{a woman}\urcorner$, respectively. See remark~\ref{nnaa} below. 

\item the source and target functions are
\begin{eqnarray*}
\partial_0:\left[\begin{array}{l}
\al{\textrm{has as parents}}\mapsto\stackrel{P}{\ulcorner\textrm{a person}\urcorner}\\
\al{\textrm{has as mother}}\mapsto\stackrel{P}{\ulcorner\textrm{a person}\urcorner}\\
\al{w}\mapsto\fbox{\parbox{.14\textwidth}{a pair $(w,m)$ where $w$ is a woman and $m$ is a man}}\\
\bullet P\mapsto\bullet\\
\bullet W\mapsto\bullet
\end{array}\right]
\\
\partial_1:\left[\begin{array}{l}
\al{has as parents}\mapsto\fbox{\parbox{.14\textwidth}{a pair $(w,m)$ where $w$ is a woman and $m$ is a man}}\\
\al{has as mother}\mapsto W\\
\al{w}\mapsto W\\
\bullet\ulcorner\textrm{a person}\urcorner\mapsto\ulcorner\textrm{a person}\urcorner\\
\bullet\ulcorner\textrm{a woman}\urcorner\mapsto\ulcorner\textrm{a woman}\urcorner
\end{array}\right]
\end{eqnarray*}
\end{itemize}
\end{itemize}
\end{exm}

\begin{rem}\label{nnaa}
In the previous example~\ref{below}, the arcs $\bullet P$, $\bullet W$ have been left unlabelled in the drawing~\eqref{motherologism} since they are not intended to be aspects. 
\end{rem}

\begin{notation}\label{AEIO}
In subsection~\ref{prescriptions} below we will proceed by prescribing how to recognize and read each form of categorical proposition in a fixed extended olog $\mathfrak O^{\bullet}=(\mathcal O^{\bullet},\mathcal F)$. To this end it is convenient to explicitly describe the structure of the graphs which identify the shapes~\eqref{shapesforsyll}, correspondingly, that is {\bf A}, {\bf E}, {\bf I}, {\bf O}. More to the point, we put
\begin{eqnarray*}
{\bf A}=(\{0,1\},\{01\},\{(01,0)\},\{(01,1)\})\\
{\bf E}=(\{0,1,2\},\{01,21\},\{(01,0),(21,2)\},\{(01,1),(21,1)\})\\
{\bf I}=(\{0,1,2\},\{10,12\},\{(10,1),(12,1)\},\{(10,0),(12,2)\})\\
{\bf O}=(\{0,1,2,3\},\{10,12,32\},\{(10,1),(12,1),(32,3)\},\{(10,0),(12,1),(32,2)\})
\end{eqnarray*}
\end{notation}

\subsection{Prescriptions}\label{prescriptions}

\begin{enumerate}
\item For $A,B$ any types of $\mathfrak O^{\bullet}=(\mathcal O^{\bullet},\mathcal F)$, a universal affirmative categorical proposition ${\bf A}_{AB}$ is interpreted as the diagram 
\begin{eqnarray}\label{A}
D_{\bf A}(A,B):{\bf A}\rightarrow\mathcal O^{\bullet}
\end{eqnarray} 
identified by the assignments $0\mapsto A$, $1\mapsto B$ on nodes, and $01\mapsto\al{is}$, for $\xymatrix{A\ar[r]^{\al{is}}&B}$ an aspect of the olog $\mathfrak O$, if present. That is, ${\bf A}_{AB}$ is interpreted as an aspect $\xymatrix{A\ar[r]^{\al{is}}&B}$.
In general, as already observed, an aspect like $\xymatrix{A\ar[r]^f&B}$ declares that every entity of type $A$ is measured by an entity of type $B$ via a specified functional relationship $f$ or, in other words, that the units of measurement which are the entities of type $B$ apply to all the entities of type $A$, in a way that is specified by $f$. We point out that the reading of ``Every $A$ is $B$'' made by Aristotle was something like ``$B$ applies to all of $A$'s'', that is what an aspect $\xymatrix{A\ar[r]^{\al{is}}&B}$ says in terms of measuring. Maybe, the label \al{is} determines the simplest functional relationship and the one which is immediately adherent to the usual reading of the proposition ${\bf A}_{AB}$ as ``Every $A$ is $B$'',
indicating that something relevant to the instances of type $A$ is forgot to subsume them in a class of instances of type $B$, which are at least as general as those of type $A$, if not even genuinely more general, see also remark~\ref{isa}, by means of an aspect $\xymatrix{A\ar[r]^{\al{is}}&B}$.
In view of all the above, we prescribe that an aspect such as $\xymatrix{A\ar[r]^{\al{is}}&B}$, as a syllogistic constraint, has to be read by pronouncing the indefinite adjective ``Every'' followed by the reading of the label of the type $A$ omitting the reading of the indefinite article which begins it, followed by ``is'', then followed by the label of the type $B$. Thus, for instance, as a syllogistic constraint the aspect~\eqref{isaperson}, has to be read as ``Every woman is a person''.
To summarize, the aspects of an olog identified by the label \al{is} interpret the syllogistic constraints which are imposed by means of universal affirmative categorical propositions.

\item For $A,B$ any types of $\mathfrak O^{\bullet}=(\mathcal O^{\bullet},\mathcal F)$, a particular affirmative categorical proposition ${\bf I}_{AB}$ is interpreted as the diagram
\begin{eqnarray}\label{I}
D_{\bf I}(A,B):{\bf I}\rightarrow\mathcal O^{\bullet}
\end{eqnarray}
identified by the assignments $0\mapsto A$, $1\mapsto\bullet$, $2\mapsto B$ on nodes, and $10\mapsto \bullet A$, $12\mapsto \bullet B$ on arcs, where $\bullet A$ and $\bullet B$ are the labels of whichever, but fixed by the assignments, arcs with source $\bullet$ and target $A$ and source $\bullet$ and target $B$ of $\mathcal O^{\bullet}$, respectively, if present. We hasten to specify that these arcs are not aspects and consequently their labels are not text expressions that are supposed to be read, in any sense. For this reason, if present, 
they will be unanonimously represented in any graphical representation of an ologism. See example~\ref{below}. To help the intuition, we observe in passing that a syllogistic constraint
\begin{eqnarray}\label{i}
\PA{A}{B}
\end{eqnarray}
could be intuitively thought of as the declaration that there is a momentarily unspecified type $\bullet$ whose instances can be measured either by means of the instances of $A$ or by means of the instances of $B$, via momentarily unspecified aspects $\bullet\to A$, $\bullet\to B$ that turn out to be someway comparable since they are common to the type $\bullet$, although not specified. In other words, 
by virtue of the syllogistic constraint ${\bf I}_{AB}$ the entities of type $A$ and those of type $B$ are thought of to be at a certain extent comparable, although in a momentarily unspecified way and measure. Because of remark~\ref{perdopo}, the kind of declaration at issue could be interpreted as a prescriptive one, to be accepted in the world-view proposed by the author of the ologism. In view of all the above, we prescribe that a syllogistic constraint such as~\eqref{i} has to be read by pronouncing the indefinite adjective ``Some'' followed by the reading of the label of the type $A$ omitting the reading of the indefinite article which begins it, followed by ``is'', followed by the reading of the label of the type $B$. Thus, for instance, the syllogistic constraint ${\bf I}_{AB}$ in the ologism~\eqref{motherologism} has to be read as ``Some person is a woman'' or, equivalently as ``Some woman is a person''.

\item For $A, B$ any types of the extended olog $\mathfrak O^{\bullet}=(\mathcal O^{\bullet},\mathcal F)$, a universal negative categorical proposition ${\bf E}_{AB}$ is interpreted as the diagram
$$
D_{\bf E}(A,B):{\bf E}\rightarrow\mathcal O^{\bullet}
$$
identified by the assignments $0\mapsto A$, $1\mapsto\bullet$, $2\mapsto B$ on nodes, and $01\mapsto A\bullet$, $21\mapsto B\bullet$ on arcs, where $A\bullet$ and $B\bullet$ are the labels of whichever, but fixed by the assignments, 
arcs with source $A$ and target $\bullet$ and source $B$ and target $\bullet$ of $\mathcal O^{\bullet}$, respectively, if present. We hasten to specify that these arcs are not aspects and consequently their labels are not text expressions that are supposed to be read, in any sense. For this reason, if present, they will be unanonimously represented in any graphical representation of an ologism.
To help the intuition, we observe in passing that a syllogistic constraint 
\begin{eqnarray}\label{ee}
\UN{A}{B}
\end{eqnarray}
could be intuitively thought of as the declaration that there are momentarily unspecified aspects $A\to\bullet$, $B\to\bullet$ which make the instances of type $A, B$, respectively, someway not comparable in a measure whatsoever. In other words, the instances of $A, B$, become uncomparable in $\bullet$, via unspecified aspects $A\to\bullet$, $B\to\bullet$. Because of remark~\ref{perdopo}, the kind of declaration at issue could be interpreted as a prescriptive one, to be accepted in the world-view proposed by the author of the ologism. In view of all the above, we prescribe that a syllogistic constraint such as~\eqref{ee} has to be read by pronouncing the indefinite adjective ``Every'' followed by the reading of the label of the type $A$ omitting the reading of the indefiniete article which begins it, followed by ``is not'', followed by the reading of the label of the type $B$. 

\item For $A, B$ any types of the extended olog $\mathfrak O^{\bullet}=(\mathcal O^{\bullet},\mathcal F)$, a particular negative categorical proposition ${\bf O}_{AB}$ is interpreted as the diagram 
$$
D_{{\bf O}}(A,B):{\bf O}\rightarrow\mathcal O^{\bullet}
$$
identified by the assignments $0\mapsto A$, $1\mapsto \bullet$, $2\mapsto \bullet$, $3\mapsto B$ on nodes, and $10\mapsto \bullet A$, $12\mapsto \bullet\bullet$, $32\mapsto B\bullet$, where $\bullet A$, $\bullet\bullet$ and $B\bullet$ are the labels of whichever, but fixed by the assignments,
arcs with source $\bullet$ and target $A$, source $\bullet$ and target $\bullet$, source $B$ and target $\bullet$, respectively, if present. We hasten to specify that these arcs are not aspects and consequently their labels are not text expressions that are supposed to be read, in any sense. For this reason, if present, they will be unanonimously represented in any graphical representation of an ologism. To help the intuition, we observe in passing that a syllogistic constraint
\begin{eqnarray}\label{oioi}
\PN{A}{B}
\end{eqnarray}
can be intuitively thought of as declaring
the negation of the syllogistic constraint ${\bf A}_{AB}$, namely as declaring something like ``$B$ does not apply to all of $A$'s''. In general, by means of  an aspect $\xymatrix{A\ar[r]^f&B}$, everything which is measured by the instances of $A$ can be measured by the instance in $B$, like this
$$
\xymatrix{X\ar@/^1pc/[rr]^{g;f}\ar[r]_g&A\ar[r]_f&B}
$$
by composition of aspects. Now a syllogistic constraint ${\bf O}_{AB}$ amounts to the declaration which is the negation of this kind of situation when $f=g=\al{is}$, namely that there is a momentarily unspecified type $\bullet$ which can be measured by the instances of the type $A$ by means of a momentarily unspecified aspect $\bullet\to A$ but such that, at the same time, is someway definitely uncomparable with the instances of the type $B$. Because of remark~\ref{perdopo}, the kind of declaration at issue could be interpreted as a prescriptive one, to be accepted in the world-view proposed by the author of the ologism. In view of all the above, we prescribe that a syllogistic constraint such as~\eqref{oioi} has to be read by pronouncing the indefinite adjective ``Some'' followed by the reading of the label of the type $A$ omitting the reading of the indefinite article which begins it, followed by ``is not'', followed by the reading of the label of the type $B$. 
\end{enumerate}

\subsection{Ologs as ologisms}\label{ologsasologisms}
In view of what have been discussed in subsection~\ref{recat}, ologs can be seen as ologisms in which the only syllogistic constraints that are imposed are those expressed by the aspects identified by the label \al{is} provided with the olog, if present, or by the aspects that arise as composite of two such. In fact, in accordance with subsection~\ref{recat}, for $A, B, C$ any types in a fixed olog, on one side the commutativity of the diagram 
\begin{eqnarray}\label{caldo}
\xymatrix{A\ar[rr]^{\al{is}}\ar@/_/[drr]_{\al{is}}&&B\ar[d]^{\al{is}}\\
\ar@{}[urr]|(.6){\checkmark}&&C}
\end{eqnarray}
captures the well-known scheme of valid syllogism
\begin{eqnarray}\label{syllofirst}
\begin{tabular}{l}
Every $B$ is $C$,\\
Every $A$ is $B$ \\
therefore\\
Every $A$ is $C$
\end{tabular}
\end{eqnarray} 
whereas, on the other side, because of the rejection criterion described in subsection~\ref{rejection}, the only way to impose a syllogistic constraint ${\bf A}_{AC}$ as an aspect $\xymatrix{A\ar[r]^{\al{is}}&C}$, by means of the diagrammatic calculus described in section~\ref{calculus}, is by composition of two aspects, see remark~\ref{formalcomp}, as in diagram~\eqref{caldo}.

\subsection{Syllogisms as ologisms}\label{ologrammi}
Ologisms extend the diagrammatic calculus described in section~\ref{calculus}. We here exhibit a series of examples that convey a sufficiently clear idea of that, although intuitively. Throughout, the notation~\eqref{forsempre} will be freely employed.
\begin{enumerate}
\item As a particular example of an ologism that captures the fundamental (scheme of) valid syllogism~\eqref{syllofirst} one may consider the ologism
\begin{eqnarray}\label{fundamentsyll}
\fbox{\xymatrix{\stackrel{S}{\fbox{a square}}\ar@/_/@{-->}[drr]_{\al{is}}\ar[rr]^{\al{is}}&&\stackrel{R}{\fbox{a rectangle}}\ar[d]^{\al{is}}\\
\ar@{}[urr]|(.6){\checkmark}&&\stackrel{Q}{\fbox{a quadrilateral}}}}
\end{eqnarray}

\item The syllogism~\eqref{syllo1} is captured by the ologism
\begin{eqnarray}\label{primo}
\fbox{\xymatrix{\stackrel{M}{\fbox{a mammal}}\ar@{}[rr]^(.4){{\bf E}_{MA}}\ar[r]&\bullet\ar@{}[dr]|(.15){{\bf E}_{DA}}&\stackrel{A}{\fbox{an animal that is able to fly}}\ar[l]\\
\stackrel{D}{\fbox{a donkey}}\ar[u]^{\al{is}}\ar@/_1pc/@{-->}[ur]
&&}}
\end{eqnarray}
in which, the dashed arrow is the one that has been deduced by means of the diagrammatic calculus described in section~\ref{calculus}, from the 
premisses 
$$
\xymatrix{\stackrel{D}{\fbox{a donkey}}\ar[r]^{\al{is}}&\stackrel{M}{\fbox{a mammal}}}
$$
of universal affirmative form, and ${\bf E}_{MA}$ of universal negative form, toward the obtainment of the conclusion ${\bf E}_{DA}$. On the base of what has been prescribed in subsection~\ref{recat} about the recognition and the reading of syllogistic constraint in ologisms, we can say that the ologism~\eqref{primo} communicates that 
the syllogistic constraints ``Every donkey is a mammal'' and ``Every mammal is not an animal that is able to fly'' have to be considered as imposed; from which the further syllogistic constraint ``Every donkey is not an animal that is able to fly'' follows.

\item The syllogism~\eqref{syllo2} is captured by the ologism 
\begin{eqnarray}\label{secondo}
\fbox{\xymatrix{
\stackrel{D}{\fbox{a donkey}}&\bullet\ar@{}[dl]|(.15){{\bf I}_{DM}}\ar@/^/@{-->}[dr]\ar[l]_(.01){{\bf I}_{DA}}\ar[d]\\
&\stackrel{M}{\fbox{a mammal}}\ar[r]^(.33){\al{is}}&\stackrel{A}{\fbox{an animal that is able to fly}}}}
\end{eqnarray}
where the dashed arrow is the one that has been deduced by means of the diagrammatic calculus described in section~\ref{calculus}, from the premisses
$$
\xymatrix{\stackrel{M}{\fbox{a mammal}}\ar[r]^(.33){\al{is}}&\stackrel{A}{\fbox{an animal that is able to fly}}}
$$
of universal affirmative form and ${\bf I}_{DM}$ of particular affirmative form, toward the obtainment of the conclusion ${\bf I}_{DA}$. On the base of what has been prescribed in subsection~\ref{recat} about the recognition and the reading of syllogistic constraint in ologisms, we can say that the ologism~\eqref{secondo} communicates that 
the syllogistic constraints ``Some donkey is a mammal'' and ``Every mammal is an animal that is able to fly'' have to be considered as imposed; from which the further syllogistic constraint ``Some donkey is an animal that is able to fly'' follows.

\item Just before remark~\ref{formalcomp} we proved that the syllogism~\eqref{syllo3} was not valid. 
The syllogistic constraints which are imposable by means of its premisses and its suggested conclusion can be represented in the ologism
\begin{eqnarray}\label{unsound}
\fbox{\xymatrix{\stackrel{M}{\fbox{a mammal}}\ar@{}[rr]^(.4){{\bf E}_{MA}}\ar@{}[ddrr]|(.92){{\bf I}_{DA}}
\ar[r]&\bullet&\stackrel{A}{\fbox{an animal that is able to fly}}\ar[l]\\
\bullet\ar[u]\ar[d]\\
\stackrel{D}{\fbox{a donkey}}\ar@{}[uu]^{{\bf I}_{MD}}&&\bullet\ar[uu]\ar[ll]}}
\end{eqnarray}
Now, by having a look at the ologism~\eqref{unsound},
it is immediate to see that ${\bf I}_{DA}$ does not follow from the premisses ${\bf I}_{MD}$ and ${\bf E}_{MA}$, because of the rejection criterion described in subsection~\ref{rejection}, to confirm the invalidity of syllogism~\eqref{syllo3}. 
Nonetheless, from the syllogistic constraints shown in the ologism it is possible to deduce two further syllogistic constraints:
\begin{enumerate}
\item ``Some donkey is not an animal that is able to fly'', that is ${\bf O}_{DA}$, from the premisses ${\bf I}_{MD}$, ${\bf E}_{MA}$, as in the ologism
$$
\fbox{\xymatrix{\stackrel{M}{\fbox{a mammal}}\ar@{}[rr]^(.4){{\bf E}_{MA}}\ar@{}[ddrr]|(.92){{\bf I}_{DA}}
\ar[r]&\bullet&\stackrel{A}{\fbox{an animal that is able to fly}}\ar[l]\\
\bullet\ar@/_/@{-->}[ur]_{{\bf O}_{DA}}\ar[u]\ar[d]\\
\stackrel{D}{\fbox{a donkey}}\ar@{}[uu]^{{\bf I}_{MD}}&&\bullet\ar[uu]\ar[ll]}}
$$
where the dashed arrow indicates the execution of the deduction at issue.

\item ``Some donkey is not a mammal'', that is ${\bf O}_{DM}$, from the premisses ${\bf I}_{DA}$ and ${\bf E}_{MA}$, as in the ologism
$$
\fbox{\xymatrix{\stackrel{M}{\fbox{a mammal}}\ar@{}[rr]^(.4){{\bf E}_{MA}}\ar@{}[ddrr]|(1.07){{\bf I}_{DA}}
\ar[r]&\bullet&\stackrel{A}{\fbox{an animal that is able to fly}}\ar[l]\\
\bullet\ar[u]\ar[d]\\
\stackrel{D}{\fbox{a donkey}}\ar@{}[uu]^{{\bf I}_{MD}}&&\bullet\ar@/_/@{-->}[uul]^{{\bf O}_{DM}}\ar[uu]\ar[ll]}}
$$
where the dashed arrow indicates the execution of the  deduction at issue.
\end{enumerate}
Now, a couple of remarks are in order. First, we observe that the deduced syllogistic constraints ${\bf O}_{DA}$ in (i), ${\bf O}_{DM}$ in (ii), are not in contradiction. That is, apart from minor drawing  difficulties, they could be depicted jointly in a single ologism. Contradiction in ologisms will be discussed in subsection~\ref{contraologisms} below.
Second, we observe that in both the cases (i), (ii), one could be tempted to further going on to deduce: in (i) from the premisses ${\bf I}_{DA}$, ${\bf O}_{DA}$ whereas in (ii) from the premisses ${\bf I}_{MD}$, ${\bf O}_{DM}$; toward the obtainment, in both cases, of the diagram
$$
\stackrel{D}{\fbox{a donkey}}\leftarrow\bullet\rightarrow\bullet\leftarrow\bullet\rightarrow\stackrel{D}{\fbox{a donkey}}
$$
which is not among the diagrams~\eqref{diagrams}, so that it is meaningless, and does not impose any further syllogistic constraint. The moral is that in an ologism not all the syllogistic deduction that you would calculate correspond to the imposition of syllogistic constraints. 
By the appropriate employment of the rejection criterion discussed in subsection~\ref{rejection} one is able to stop at the right level of deduction.
\end{enumerate}

\begin{rem}
At this point it is worth to observe that remark~\ref{perpoi} is in line with remark~\ref{perdopo}. A person examining the ologism~\eqref{primo} could not agree with the constraint ``Every mammal is not an animal that is able to fly'', but this is not a problem with the ologism under examination, since it is the soundness of the diagrammatic calculus that we are employing to ensure that the ologism in question is structurally sound. Similarly, one could not agree with the syllogistic constraint ``Every mammal is an animal that is able to fly'' prescribed in ologism~\eqref{secondo} but nonetheless it is structurally sound for the same previously given reasons.
\end{rem}


\section{Ologisms, but formally}\label{extlabelded}
The aim of this section is that of situate the ideas that have been informally described in section~\ref{ologisms} within a rigorous mathematical framework, with a focus on the way the logical expressivity of ologs turns out to be extended with the possibility of imposing constraints of syllogistic nature, by means of the logical calculus described in section~\ref{calculus}. The idea that we pursue is that of looking at ologisms as to extended ologs. Whereas in section~\ref{labelded} ologs have been described from a proof-theoretical viewpoint as labelled deductive systems with imposed equational constraints, we here describe ologisms under a similar viewpoint as ologs extended with imposed syllogistic constraints, in the way that we are going to illustrate.

\subsection{Ologisms, but formally}
In this subsection we describe ologisms in rigorous mathematical terms as ologs extended with given syllogistic constraints, as premisses from which to deduce further syllogistic constraints as conclusions; that is as ologs provided with a deductive equipment which we precisely describe below. Before reading definition~\ref{ologism} below reconsider notation~\ref{AEIO}.

\begin{notation}\label{firstofall}
Let $\mathcal O$ be a graph equipped with a distinguished node labelled $\bullet$. We write
$\mathcal O_{\bullet}$ for the graph obtained from $\mathcal O$ by removing just the node $\bullet$ and every arc having $\bullet$ as source or target. 
\end{notation}

\begin{defi}\label{ologism}
An {\bf ologism} is a $6$-tuple
$\mathfrak O=(\mathcal O, \mathcal F, \alpha, \epsilon, \iota, o)$ where 

\begin{enumerate}

\item $\mathcal O$ is a graph with a distinguished node labelled $\bullet$. 

\item\label{due} $\mathcal F$ is a set of diagrams in $\mathcal O_{\bullet}$, see notation~\ref{firstofall}, and
the pair $\mathfrak O_{\bullet}=(\mathcal O_{\bullet},\mathcal F)$ is required to be an olog, that is the {\bf underlying olog} of $\mathfrak O$. Every node, or arc, of $\mathcal O_{\bullet}$ is a type, 
or an aspect, respectively. More explicitly, $\bullet$ is not a type, and the arcs of $\mathcal O$ that have $\bullet$ as source or target are not aspects. All the nodes of $\mathcal O$ which are different from $\bullet$ are types.

\item $\alpha$ is a set of diagrams
\begin{eqnarray}\label{ua}
D_{{\bf A}}(A,B):\bf A\rightarrow \mathcal O_{\bullet}
\end{eqnarray}
for $A, B$ any nodes of $\mathcal O_{\bullet}$. We prescribe that a diagram in $\alpha$, such as~\eqref{ua}, is identified by the assignments $0\mapsto A, 1\mapsto B$ on nodes and $01\mapsto\al{is}$ on arcs, where $\xymatrix{A\ar[r]^{\al{is}}&B}$ is an aspect in $\mathfrak O_{\bullet}$, if present. 

\item $\epsilon$ is a set of diagrams 
\begin{eqnarray}\label{un}
D_{{\bf E}}(A,B):{\bf E}\rightarrow\mathcal O
\end{eqnarray}
for $A, B$ any nodes of $\mathcal O_{\bullet}$. We prescribe that a diagram in $\epsilon$, such as~\eqref{un}, is identified by the assignments $0\mapsto A$, $1\mapsto\bullet$, $2\mapsto B$ on nodes, and $01\mapsto A\bullet$, $21\mapsto B\bullet$ on arcs, where $A\bullet$ and $B\bullet$ are the labels of whichever arcs with source $A$ and target $\bullet$ and source $B$ and target $\bullet$ of $\mathcal O$, respectively, if present.

\item $\iota$ is a set of diagrams 
\begin{eqnarray}\label{pa}
D_{{\bf I}}(A,B):{\bf I}\rightarrow\mathcal O
\end{eqnarray}
for $A, B$ any nodes of $\mathcal O_{\bullet}$. We prescribe that a diagram in $\iota$, such 
as~\eqref{pa}, is identified by the assignments $0\mapsto A$, $1\mapsto\bullet$, $2\mapsto B$ on nodes, and $10\mapsto \bullet A$, $12\mapsto \bullet B$ on arcs, where $\bullet A$ and $\bullet B$ are the labels of whichever arcs with source $\bullet$ and target $A$ and source $\bullet$ and target $B$ of $\mathcal O$, respectively, if present.

\item $o$ is a set of diagrams
\begin{eqnarray}\label{pn}
D_{{\bf O}}(A,B):{\bf O}\rightarrow\mathcal O
\end{eqnarray}
for $A, B$ any nodes of $\mathcal O_{\bullet}$. We prescribe that a diagram in $o$, such as~\eqref{pn}, is identified by the assignments $0\mapsto A$, $1\mapsto \bullet$, $2\mapsto \bullet$, $3\mapsto B$ on nodes, and $10\mapsto \bullet A$, $12\mapsto \bullet\bullet$, $32\mapsto B\bullet$, where $\bullet A$, $\bullet\bullet$ and $B\bullet$ are the labels of whichever arcs with source $\bullet$ and target $A$, source $\bullet$ and target $\bullet$, source $B$ and target $\bullet$, respectively, if present.
\end{enumerate}
\end{defi}

\begin{terminology}\label{terminpremiss}
Let $\mathfrak O=(\mathcal O,\mathcal F, \alpha, \epsilon, \iota, o)$ be an ologism. Henceforth
\begin{itemize}
\item[-] the elements of $\alpha$  will be referred to as {\bf universal affirmative premisses};
\item[-] the elements of $\epsilon$ will be henceforth referred to as {\bf universal negative premisses};
\item[-] the elements of $\iota$ will be henceforth referred to as {\bf particular affirmative premisses};
\item[-] the elements of $o$ will be henceforth referred to as {\bf particular negative premisses}.
\end{itemize}
Altogether, the elements of $\alpha, \epsilon, \iota, o$ will be henceforth referred to as {\bf syllogistic premisses}.
\end{terminology}

\begin{defi}\label{dedeq}
The {\bf deductive equipment} of 
an ologism $\mathfrak O$ consists of its syllogistic premisses
together with the following rules of inference:
\begin{enumerate}
\item
\begin{eqnarray}\label{cutcut}
\AxiomC{$\xymatrix{A\ar[r]^{\al{is}}&B}$}
\AxiomC{$\xymatrix{B\ar[r]^{\al{is}}&C}$}
\BinaryInfC{$\xymatrix{A\ar[r]^{\al{is}}&C}$}
\DisplayProof
\end{eqnarray}

\item
\begin{eqnarray}
\AxiomC{$\UN{A}{B}$}
\AxiomC{$\xymatrix{B&C\ar[l]_{\al{is}}}$}
\BinaryInfC{$\UN{A}{C}$}
\DisplayProof\label{ae}
\\[2ex]
\AxiomC{$\xymatrix{A\ar[r]^{\al{is}}&B}$}
\AxiomC{$\UN{B}{C}$}
\BinaryInfC{$\UN{A}{C}$}
\DisplayProof\label{ea}
\end{eqnarray}

\item 
\begin{eqnarray}
\AxiomC{$\xymatrix{A\ar@{}[rr]^{{\bf I}_{AB}}&\bullet\ar[l]\ar[r]&B}$}
\AxiomC{$\xymatrix{B\ar[r]^{\al{is}}&C}$}
\BinaryInfC{$\PA{A}{C}$}
\DisplayProof\label{25}
\\[2ex]
\AxiomC{$\xymatrix{A&B\ar[l]_{\al{is}}}$}
\AxiomC{$\xymatrix{B\ar@{}[rr]^{{\bf I}_{BC}}&\bullet\ar[l]\ar[r]&C}$}
\BinaryInfC{$\PA{A}{C}$}
\DisplayProof\label{26}
\end{eqnarray}

\item
\begin{eqnarray}
\AxiomC{$\PA{A}{B}$}
\AxiomC{$\UN{B}{C}$}
\BinaryInfC{$\PN{A}{C}$}
\DisplayProof\label{oo}
\\[2ex]
\AxiomC{$\xymatrix{A&B\ar[l]_{\al{is}}}$}
\AxiomC{$\PN{B}{C}$}
\BinaryInfC{$\PN{A}{C}$}
\DisplayProof\label{210}
\\[2ex]
\AxiomC{$\PN{A}{B}$}
\AxiomC{$\xymatrix{B&C\ar[l]_{\al{is}}}$}
\BinaryInfC{$\PN{A}{C}$}
\DisplayProof\label{211}
\end{eqnarray}
\end{enumerate}
\end{defi}

\begin{rem}
Among the universal affirmative premisses in the deductive equipment of an ologism there are, for every type $A$, distinguished identical universal affirmative premisses $\xymatrix{A\ar[r]^{\al{is}}&A}$, see subsection~\ref{ologwitheq}. From a syllogistic viewpoint, those particular universal affirmative premisses correspond to the identity laws, identified and taken for granted by Aristotle. Since in any case they will be interpreted by identical functional relationships they won't be pictorially represented in ologisms, as already they were not in ologs.
\end{rem}

\begin{exms}\label{examplesofologisms}
\begin{enumerate}
\item\label{poorly} On the base of definition~\ref{ologism}, from a purely formal point of view, the data that consist of a graph whose only node is $\bullet$, with no arcs at all, and consequently $\alpha=\epsilon=\iota=o=\emptyset$ identify an example of ologism; actually a poorly expressive one.

\item\label{b} The author of an ologism could have the need to fix the state of things described by the following affirmations:
\begin{itemize}
\item[-] Some bird is not an animal that is able to fly.
\item[-] Some mammal is an animal that is able to fly.
\item[-] Every bird is not a mammal.
\item[-] Every bird is a vertebrate.
\item[-] Every mammal is a vertebrate.
\end{itemize}
To that end, she could draw the ologism
\begin{eqnarray}\label{animals}
\fbox{\xymatrix{\stackrel{B}{\fbox{a bird}}
\ar[dd]_{\al{is}}\ar[dr]&\bullet\ar[l]\ar[r]^{{\bf O}_{BA}}&\bullet\ar@{}[ddr]|(.9){{\bf I}_{MA}}&\stackrel{A}{\fbox{an animal that is able to fly}}\ar[l]\\
&\bullet\\
\stackrel{V}{\fbox{a vertebrate}}\ar@{}[ur]|(1.12){{\bf E}_{BM}}&\stackrel{M}{\fbox{a mammal}}\ar[u]\ar[l]_(.45){\al{is}}&&\bullet\ar[ll]\ar[uu]
}}
\end{eqnarray}
and subsequently to deduce further syllogistic constraints, if possible. In the present case some of the deducible syllogistic constraints are 
\begin{itemize}
\item[-] ${\bf O}_{AB}$ for ``Some animal that is able to fly is not a bird'', by virtue of the following proof trees in which an instance of the rule~\eqref{oo} occurs:
$$
\AxiomC{$\PA{A}{M}$}
\AxiomC{$\UN{B}{M}$}
\UnaryInfC{$\UN{M}{B}$}
\BinaryInfC{$\PN{A}{B}$}
\DisplayProof
$$
\item[-] ${\bf I}_{AV}$ for ``Some animal that is able to fly is a vertebrate'', by virtue of the following proof tree in which an instance of the rule~\eqref{25} occurs:
$$
\AxiomC{$\PA{M}{A}$}
\UnaryInfC{$\PA{A}{M}$}
\AxiomC{$\xymatrix{M\ar[r]^{\al{is}}&V}$}
\BinaryInfC{$\PA{A}{V}$}
\DisplayProof
$$

\item[-]
${\bf O}_{VA}$ for ``Some vertebrate is not an animal that is able to fly'', by virtue of the followig proof tree in which an instance of the rule of inference~\eqref{210} occurs:
$$
\AxiomC{$\xymatrix{B\ar[r]^{\al{is}}&V}$}
\UnaryInfC{$\xymatrix{V&B\ar[l]_{\al{is}}}$}
\AxiomC{\PN{B}{A}}
\BinaryInfC{\PN{V}{A}}
\DisplayProof
$$
\end{itemize}
which can be pictorially represented in accordance with notation~\ref{forsempre} by means of the dashed arrows in the ologism 
$$
\fbox{\xymatrix{\stackrel{B}{\fbox{a bird}}
\ar[dd]_{\al{is}}\ar[dr]&\bullet\ar@{-->}@/_2pc/[ddl]^{{\bf O}_{VA}}\ar[l]\ar[r]^{{\bf O}_{BA}}&\bullet\ar@{}[ddr]|(.9){{\bf I}_{MA}}&\stackrel{A}{\fbox{an animal that is able to fly}}\ar[l]\\
&\bullet\\
\stackrel{V}{\fbox{a vertebrate}}\ar@{}[ur]|(.85){{\bf E}_{BM}}&\stackrel{M}{\fbox{a mammal}}\ar[u]\ar[l]_(.45){\al{is}}&&\bullet\ar@{-->}@/^/[ull]_{{\bf O}_{AB}}\ar@/^2pc/@{-->}[lll]^{{\bf I}_{AV}}\ar[ll]\ar[uu]
}}
$$

From a formal point of view, in accordance
with definition~\ref{ologism} and definition~\ref{dedeq}, the ologism~\eqref{animals} can be specified by the assignment of the following data:
\begin{itemize}
\item[-] for $\mathcal O$:
\begin{eqnarray*}
O_0=\{\stackrel{B}{\ulcorner\textrm{a bird}\urcorner}, \stackrel{V}{\ulcorner\textrm{a vertebrate}\urcorner}, \stackrel{A}{\ulcorner\textrm{an animal that is able to fly}\urcorner}, \stackrel{M}{\ulcorner\textrm{a mammal}\urcorner},\bullet\}\\
O_1=\{\al{is}_{BV}, \al{is}_{MV}, M\bullet, B\bullet, \bullet M,\bullet A, A\bullet, \bullet\bullet,\bullet B\}\\
\partial_0:\left[\begin{array}{l}
\al{is}_{BV}\mapsto B\\
\al{is}_{MV}\mapsto M\\
M\bullet\mapsto M\\
B\bullet\mapsto B\\
\bullet M\mapsto \bullet\\
\bullet A\mapsto \bullet\\
A\bullet\mapsto\bullet\\
\bullet\bullet\mapsto\bullet\\
\bullet B\mapsto \bullet
\end{array}\right]:O_1\rightarrow O_0\\
\partial_1:\left[\begin{array}{l}
\al{is}_{BV}\mapsto V\\
\al{is}_{MV}\mapsto V\\
M\bullet\mapsto \bullet\\
B\bullet\mapsto \bullet\\
\bullet M\mapsto M\\
\bullet A\mapsto A\\
A\bullet\mapsto\bullet\\
\bullet\bullet\mapsto\bullet\\
\bullet B\mapsto B
\end{array}\right]:O_1\rightarrow O_0
\end{eqnarray*}
\item[-] $\mathcal F=\emptyset$.
\item[-] $\alpha=\{D_{{\bf A}}(B,V), D_{{\bf A}}(M,V)\}$ where $D_{{\bf A}}(B,V):\bf A\rightarrow\mathcal O_{\bullet}$ is identified by the assignments $0\mapsto B, 1\mapsto V$ on nodes, and $01\mapsto\al{is}_{BV}$ on arcs; whereas 
$D_{{\bf A}}(M,V):\bf A\rightarrow\mathcal O_{\bullet}$ is identified by the assignments $0\mapsto M, 1\mapsto V$ on nodes, and $01\mapsto\al{is}_{MV}$ on arcs.
\item[-] $\epsilon=\{D_{{\bf E}}(B,M)\}$ where $D_{{\bf E}}(B,M):{\bf E}\rightarrow \mathcal O$ is identified by the assignments
$0\mapsto B, 1\mapsto \bullet,
2\mapsto M$
on nodes, and $01\mapsto B\bullet, 21\mapsto  M\bullet$ on arcs.
\item[-] $\iota=\{D_{{\bf I}}(M,A)\}$ with  $D_{{\bf I}}(M,A):{\bf I}\rightarrow\mathcal O$ identified by the assignments $0\mapsto M, 1\mapsto \bullet, 2\mapsto A$ on nodes, and $10\mapsto \bullet M, 12\mapsto\bullet A$.
\item[-] $o=\{D_{\bf O}(B,A)\}$ where $D_{\bf O}(B,A):\bf O\rightarrow\mathcal O$ is identified by the assignments $0\mapsto B, 1\mapsto\bullet, 2\mapsto\bullet, 3\mapsto A$ on nodes, and $10\mapsto\bullet B, 12\mapsto\bullet\bullet , 32\mapsto A\bullet$ on arcs.
\end{itemize}

The deductive equipment of the ologism that we are specifying consists of the universal negative premise $D_{\bf E}(B,M)$, the particular affirmative premise $D_{\bf I}(M, A)$, the particular negative premise $D_{\bf O}(B,A)$ and the universal affirmative premisses provided by the aspects $\xymatrix{M\ar[r]^{\al{is}}&V}$, $\xymatrix{B\ar[r]^{\al{is}}&V}$. The deduction of the syllogistic constraints ${\bf O}_{AB}$, ${\bf I}_{AV}$ and ${\bf O}_{VA}$ from the given premisses is by means of the proof trees described above.
\end{enumerate}
\end{exms}

\subsection{The logical theory generated by an ologism}
In subsection~\ref{proofeqsketch} and in subsection~\ref{ologwitheq} we pointed out that every olog generates an equational theory. In this section, we point out that an ologism $\mathfrak O=(\mathcal O,\mathcal F,\alpha, \epsilon, \iota, o)$ generates a logical theory $Th(\mathfrak O)$ which extends the equational theory generated by the olog underlying it, $\mathfrak O_{\bullet}=(\mathcal O_{\bullet},\mathcal F)$, see definition~\ref{ologism}, with every syllogistic conclusion deducible from the syllogistic premisses provided in the deductive equipment of the ologism $\mathfrak O$, by means of the rules of inference listed in points (i) to (iv)  in definition~\ref{dedeq}. In more mathematical terms, for $\mathfrak O$ an ologism, $Th(\mathfrak O)$ is a graph which is obtained from the category $Th(\mathfrak O_{\bullet})$ by adding to it all the diagrammatic syllogistic conclusions which are computable by means of the previously mentioned rules of inference. In general, $Th(\mathfrak O)$ 
is not a category, because the only syllogistic premisses which are aspects are the universal affirmative ones; considering also that up to the application of the identification condition $\al{is};\al{is}=\al{is}$, the rule of inference~\eqref{cutcut} is a particular instance of the rule of inference~\eqref{cat}. 

\begin{defi}\label{logthologism}
Let $\mathfrak O=(\mathcal O,\mathcal F,\alpha, \epsilon, \iota, o)$ be an ologism. 
The logical theory generated by $\mathfrak O$ is the graph $Th(\mathfrak O)$ obtained by adding to the category $Th(\mathfrak O_{\bullet})$ the node $\bullet$ and the arcs which are generated by means of all the possible applications of the rules of inference listed in points (i) to (iv) in definition~\ref{dedeq} to the syllogistic premisses in the deductive equipment of $\mathfrak O$. Thus, we will describe $Th(\mathfrak O)$ as a $5$-tuple
$(Th(\mathfrak O_{\bullet}),\alpha^*,\epsilon^*,\iota^*, o^*)$ 
where 
\begin{itemize}
\item[-] $\alpha^*$ is the set of universal affirmative propositions which are deducible from the universal affirmative propositions in $\alpha$ under all the possible applications of the rule~\eqref{cutcut};
\item[-] $\epsilon^*$ is the set of universal negative propositions which are deducible from the syllogistic premisses in $\epsilon\cup\alpha^*$ under all the possible applications of the rules of inference~\eqref{ae},~\eqref{ea};
\item[-] $\iota^*$ is the set of particular affirmative propositions which are deducible from the syllogistic premisses in $\iota\cup\alpha^*$ under all the possible applications of the rules of inference~\eqref{25},~\eqref{26};
\item[-] $o^*$ is the set of particular negative propositions which are deducible from the syllogistic premisses in $(\epsilon^*\cup\iota^*)\cup(\alpha^*\cup o)$ under all the possible applications of the rules of inference~\eqref{oo},~\eqref{210},~\eqref{211}.
\end{itemize}
\end{defi}

\begin{rem}\label{star}
On the base of what has been pointed out in subsection~\ref{ologwitheq} it must be kept in mind that the equational theory generated by an olog provides identity axioms as identity aspects $\xymatrix{A\ar[r]^{\al{is}}&A}$, for $A$ a type. Because of this $\alpha\subseteq\alpha^*$, thanks to the following instances of the rule~\eqref{cutcut}:
\begin{eqnarray*}
\AxiomC{$\xymatrix{A\ar[r]^{\al{is}}&A}$}
\AxiomC{$\xymatrix{A\ar[r]^{\al{is}}&B}$}
\BinaryInfC{$\xymatrix{A\ar[r]^{\al{is}}&B}$}
\DisplayProof
\\
\\
\AxiomC{$\xymatrix{A\ar[r]^{\al{is}}&B}$}
\AxiomC{$\xymatrix{B\ar[r]^{\al{is}}&B}$}
\BinaryInfC{$\xymatrix{A\ar[r]^{\al{is}}&B}$}
\DisplayProof
\end{eqnarray*}
and, for the same reason $\alpha^*$ contains at least the identity aspects, so that $\alpha^*\neq\emptyset$ even in case $\alpha=\emptyset$. Similarly, $\epsilon\subseteq\epsilon^*$, $\iota\subseteq\iota^*$ and $o\subseteq o^*$, thanks to suitable instances of the rules of inference~\eqref{ea},~\eqref{ae},~\eqref{25},~\eqref{26}and~\eqref{210},~\eqref{211}, respectively. 
\end{rem}

\begin{exms}
Let $\mathfrak O$ be the ologism that has been formally specified in point~\eqref{b} of examples~\ref{examplesofologisms}. The logical theory $Th(\mathfrak O)$ is identified by the following data:
\begin{itemize}
\item[-] the nodes of $Th(\mathfrak O_{\bullet})$ are the types of $\mathfrak O$.
\item[-] the arcs of $Th(\mathfrak O_{\bullet})$ are the identity aspects freely
  added to each of its nodes plus the aspects already present, plus the aspects
  generated by formal composition, that is in accordance with the rule of
  inference~\eqref{cat}. So, the identified set of arcs is $\{\al{is}_{BB},
  \al{is}_{AA}, \al{is}_{MM}, \al{is}_{VV}, \al{is}_{MV},$ $\al{is}_{BV}\}$ and, pictorially, $Th(\mathfrak O_{\bullet})$ appears as
$$
\fbox{\xymatrix{\stackrel{B}{\fbox{a bird}}
\ar[dd]_{\al{is}}
&&&\stackrel{A}{\fbox{an animal that is able to fly}}
\\
&\\
\stackrel{V}{\fbox{a vertebrate}}
&\stackrel{M}{\fbox{a mammal}}
\ar[l]_(.45){\al{is}}}}
$$
in which, to save space, we did not represent the above mentioned set of arcs.
\item[-] $\alpha^*=\{\al{is}_{VV},\al{is}_{MM},\al{is}_{BB},\al{is}_{AA}, \al{is}_{MV}, \al{is}_{BV}\}$.
\item[-] $\epsilon^*=\epsilon=\{\al{E}_{BM}\}$.
\item[-] $\iota^*=\{\al{I}_{MA},\al{I}_{AV}\}$.
\item[-] $o^*=\{\al{O}_{BA}, \al{O}_{VA}, \al{O}_{AB}\}$.
\end{itemize}
Thus, pictorially, $Th(\mathfrak O)$ appears as
$$
\fbox{\xymatrix{\stackrel{B}{\fbox{a bird}}
\ar[dd]_{\al{is}}\ar[dr]&\bullet\ar@/_2pc/[ddl]|(.4){\hole}^{{\bf O}_{VA}}\ar[l]\ar[r]^{{\bf O}_{BA}}&\bullet\ar@{}[ddr]|(.9){{\bf I}_{MA}}&\stackrel{A}{\fbox{an animal that is able to fly}}
\ar[l]
\\
&\bullet\\
\stackrel{V}{\fbox{a vertebrate}}
\ar@{}[ur]|(.85){{\bf E}_{BM}}&\stackrel{M}{\fbox{a mammal}}
\ar[u]\ar[l]_(.45){\al{is}}&&\bullet\ar@/^/[ull]_{{\bf O}_{AB}}
\ar@/^3pc/[lll]^{{\bf I}_{AV}}\ar[ll]\ar[uu]}}
$$
\end{exms}


\section{Ologisms and their models}\label{models}
In this section we provide a definition of what  accounts for a model of an ologism and discuss a series of examples.
  
\begin{defi}\label{modologism}
Let $\mathfrak O=(\mathcal O,\mathcal F, \alpha, \epsilon, \iota, o)$ be an ologism. A {\bf model} of $\mathfrak O$ is a homomorphism of graphs $\mathcal M:\mathcal O_{\bullet}\rightarrow{\bf Sets}$, which is a model of the olog $\mathfrak O_{\bullet}=(\mathcal O_{\bullet}, \mathcal F)$ underlying the ologism $\mathfrak O$, requested to furthermore satisfy the following prescriptions:
\begin{enumerate}
\item for every types $A, B$ linked by a universal affirmative premise $D_{\bf A}(A,B)\in\alpha$ as an aspect $\xymatrix{A\ar[r]^{\al{is}}&B}$, it must be the case that $\mathcal MA\subseteq\mathcal MB$; 
\item for every types $A, B$ linked by a universal negative premise $D_{\bf E}(A,B)\in\epsilon$ as a diagram $\UN{A}{B}$ it must be the case that $\mathcal MA\cap\mathcal MB=\emptyset$;
\item for every types $A, B$ linked by a 
particular affirmative premise $D_{\bf I}(A,B)\in\iota$ as a diagram $\PA{A}{B}$, it must be the case that 
$\mathcal MA\cap\mathcal MB\neq\emptyset$;
\item for every types $A, B$ linked by a 
particular negative premise $D_{\bf O}(A,B)\in o$ as a diagram $\PN{A}{B}$, it must be the case that $\mathcal MA\centernot\subseteq\mathcal MB$.
\end{enumerate}
\end{defi}

\begin{rem}
A series of observations are in order.
\begin{itemize}
\item[-] A model of an ologism is nothing but a particular model of its underlying olog. This is in line with what we requested in point~\ref{due} of definition~\ref{ologism}, namely that $\bullet$ is not a type and that the arcs of $\mathcal O$ that have $\bullet$ as source or target are not aspects.
\item[-] In accordance with what we observed in subsection~\ref{syllogistic} and subsection~\ref{contradiction}, in definition~\ref{modologism} the prescription (i) is contradictory to (iv) and the prescription (ii) is contradictory to (iii).
\item[-] The inclusion condition $\mathcal MA\subseteq \mathcal MB$ in 
point (i) of definition~\ref{modologism} holds automatically, because, as already observed in remark~\ref{isa},
an aspect $\xymatrix{A\ar[r]^{\al{is}}&B}$ is modelled by the function $\xymatrix{MA\ar@{^{(}->}[r]^{M\al{is}}&MB}$ which is the inclusion of the set $\mathcal MA$ in the set $\mathcal MB$. 
\item[-] The condition $\mathcal MA\cap\mathcal MB=\emptyset$ can always be forced to be satisfied. There is a standard way to do this. In case $x$ is an instance of the type $A$ and also an instance of the type $B$, it is enough to let $x$ be tagged as ``$A$'' to consider it as an instance of $A$, to let $x$ be tagged as ``$B$'' to consider it as an instance of $B$. Formally, all this amunts to put $\mathcal MA=\{(x,A)\mid x\textrm{ is an instance of }A\}$ and $\mathcal MB=\{(x,B)\mid x\textrm{ is an instance of }B\}$. For $x$ an instance of both the types $A, B$, the element $(x,A)\in\mathcal MA$ is not the same as the element $(x,B)\in\mathcal MB$. In practice, the necessity of forcing the condition $\mathcal MA\cap \mathcal MB=\emptyset$ may arise not only by virtue of the fact that different types have one or more instances in common that the author of an ologism would like to regard as distinct, but also because it may happen that instances which are already distinct in different types have the same label. 
\end{itemize}
\end{rem}

\begin{exms}
\begin{enumerate}
\item The ologism described in point~\eqref{poorly} of examples~\ref{examplesofologisms} has exactly one possible model, that is the empty one; since $\mathcal O_{\bullet}$ is the empty graph.
\item A model of the ologism~\eqref{animals} is identified by the assignments
\begin{eqnarray*}
M_0:\left[\begin{array}{l}
B\mapsto\{\textrm{eagle}, \textrm{hawk}, \textrm{flamingo}, \textrm{penguin}\}\\
A\mapsto\{\textrm{bat}, \textrm{dragonfly}\}\\
V\mapsto\{\textrm{eagle}, \textrm{hawk}, \textrm{flamingo}, \textrm{penguin}, \textrm{dog}, \textrm{cat}\}\\
M\mapsto\{\textrm{bat}, \textrm{dog}, \textrm{cat}\}
\end{array}\right]
\\
\\
M_1:\left[\begin{array}{l}
\al{is}_{BV}\mapsto\left[\begin{array}{l}
\textrm{eagle}\mapsto \textrm{eagle}\\
\textrm{hawk}\mapsto\textrm{hawk}\\ 
\textrm{flamingo}\mapsto\textrm{flamingo}\\
\textrm{penguin}\mapsto\textrm{penguin}
\end{array}\right]
\\
\\
\al{is}_{MV}\mapsto\left[\begin{array}{l}
\textrm{bat}\mapsto\textrm{bat}\\ 
\textrm{dog} \mapsto \textrm{dog}\\
\textrm{cat} \mapsto\textrm{cat}
\end{array}\right]
\end{array}\right]
\end{eqnarray*}
All the imposed constraints are satisfied because a penguin is a bird which is not an animal that is able to fly, a bat is an animal that is able to fly and a mammal, every bird is not a mammal. The set of mammals is included in the set of vertebrates, the set of birds is included in the set of vertebrates. As a consequence of this, every deducible syllogistic constraint hold as well, as the reader can easily verify.

\item Imagine that a system administrator has to manage the daily shift change between custodians, inspectors and helpers. At the level of conceptual planning she could design an ologism on the base of the following prescriptions:
\begin{itemize}
\item[-] every day there must be at least one custodian;
\item[-] every day there must be at least one inspector;
\item[-] no custodian is an inspector;
\item[-] to every custodian is associated an helper;
\item[-] every inspector is a helper;
\end{itemize}
and the result could appear as
\begin{eqnarray}\label{cih}
\fbox{\xymatrix{
\bullet\ar@/_/[dr]\ar@/^/[dr]^(.1){{\bf I}_{CC}}&&&&\bullet\ar@/_/[dl]_(.1){{\bf I}_{II}}\ar@/^/[dl]\\
&\stackrel{C}{\fbox{a custodian}}\ar[d]_{\al{has}}\ar[r]\ar@{}[rr]^{{\bf E}_{CI}}&\bullet&\stackrel{I}{\fbox{an inspector}}\ar@/^/[dll]_{\al{is}}\ar[l]\\
&\stackrel{H}{\fbox{a helper}}
}}
\end{eqnarray}
a model of which corresponds to the situation about which custodians, inspectors and helpers are involved in the shift of a current day. Day by day the model has to be updated. More to the point, a few comments are in order. The particular affirmative premisses ${\bf I}_{CC}$, ${\bf I}_{II}$ are existential import conditions, see subsection~\ref{eximport}, which are there to witness that the types $C, I$ cannot be modelled by the emptyset. It is required that at least one custodian and at least one inspector must be present. Such premisses have been pictorially represented as shown rather than extensively as $\PA{C}{C}$ and $\PA{I}{I}$, respectively, just to save space. The types $C, I$ must be modelled by disjoint sets, as imposed by the universal negative premise ${\bf E}_{CI}$. Finally, we observe that from the ologism~\eqref{cih} it would be an error to deduce that the inspectors are the helpers associated to the custodians. Nothing in the ologism forces this deduction. In any case, to exclude that things be necessarily that way, one could add the syllogistic constraint ${\bf O}_{HI}$, to let there be helpers which are not inspectors. The syllogistic constraints which are deducible from those imposed in the ologism~\eqref{cih} are witnessed by the dashed arrows in the ologism
$$
\fbox{\xymatrix{
    \bullet\ar@{-->}@/^/[drr]^{{\bf O}_{CI}}\ar@/_/[dr]\ar@/^/[dr]^(.1){{\bf I}_{CC}}&&&&\bullet\ar@{-->}@/^4.5pc/[ddlll]_{{\bf I}_{IH}}\ar@{-->}@/_/[dll]_{{\bf O}_{IC}}\ar@/_/[dl]_(.1){{\bf I}_{II}}\ar@/^/[dl]\\
&\stackrel{C}{\fbox{a custodian}}\ar[d]_{\al{has}}\ar[r]\ar@{}[rr]_{{\bf E}_{CI}}&\bullet&\stackrel{I}{\fbox{an inspector}}\ar@/^/[dll]_{\al{is}}\ar[l]\\
&\stackrel{H}{\fbox{a helper}}
}}
$$
that is 
\begin{itemize}
\item[-] ${\bf O}_{CI}$ and ${\bf O}_{IC}$ by virtue of the following proof trees in which suitable instances of the rule of inference~\eqref{210} occur:
\begin{eqnarray*}
\AxiomC{\PA{C}{C}}
\AxiomC{\UN{C}{I}}
\BinaryInfC{$\xymatrix{C\ar@{}[rr]^{{\bf I}_{CC}}&\bullet\ar[r]\ar[l]&C\ar@{}[rr]^{{\bf E}_{CI}}\ar[r]&\bullet&I\ar[l]}$}
\UnaryInfC{\PN{C}{I}}
\DisplayProof
\\
\\
\\
\AxiomC{\PA{I}{I}}
\AxiomC{\UN{C}{I}}
\UnaryInfC{\UN{I}{C}}
\BinaryInfC{$\xymatrix{I\ar@{}[rr]^{{\bf I}_{II}}&\bullet\ar[r]\ar[l]&I\ar@{}[rr]^{{\bf E}_{CI}}\ar[r]&\bullet&I\ar[l]}$}
\UnaryInfC{\PN{I}{C}}
\DisplayProof
\end{eqnarray*}
respectively.
\item[-] ${\bf I}_{IH}$ by virtue of the following proof tree in which a suitable instance of the rule of inference~\eqref{25} occurs:
\begin{eqnarray*}
\AxiomC{$\xymatrix{I\ar@{}[rr]^{{\bf I}_{II}}&\bullet\ar[l]\ar[r]&I}$}
\AxiomC{$\xymatrix{I\ar[r]^{\al{is}}&H}$}
\BinaryInfC{$\PA{I}{H}$}
\DisplayProof
\end{eqnarray*}
\end{itemize}
Now, assuming that custodians, inspectors and helpers are identified by means of suitable identification alphanumeric strings, a model of the ologism~\eqref{cih} for the shift change of tomorrow is identified by the assignments
\begin{eqnarray*}
\mathcal M_0:\left[\begin{array}{l}
C\mapsto\{C10, C11, C12, C13\}\\
I\mapsto\{H10I1, H11I2, H12I3, H13I4\}\\
H\mapsto\{H01, H02, H03, H10I1, H11I2, H12I3, H13I4\}
\end{array}\right]
\\
\\
\mathcal M_1:\left[\begin{array}{l}
\al{has}\mapsto\left[\begin{array}{l}
C10\mapsto H01\\
C11\mapsto H02\\
C12\mapsto H12I3\\
C13\mapsto H11I2
\end{array}\right]
\\
\\
\al{is}\mapsto\left[\begin{array}{l}
H10I1\mapsto H10I1\\
H11I2\mapsto H11I2\\
H12I3\mapsto H12I3\\
H13I4\mapsto H13I4
\end{array}\right]
\end{array}\right]
\end{eqnarray*}
which clearly satisfy the imposed constraints, as well as the deducible ones. 
\end{enumerate}
\end{exms}


\section{Soundness and completeness}\label{soundcompl}
Let $\mathfrak O$ be an ologism. In this section we prove that everything which is deducible from $\mathfrak O$ is true, that is satisfied in every model $\mathcal M$ of $\mathfrak O$. 
This is what is often referred to as a {\bf soundness theorem}. On the other hand we will also prove that everything which is true is deducible from $\mathfrak O$. This is what is often referred to as a
{\bf completeness theorem}.

\begin{defi}\label{consrel}
Let $\mathfrak O=(\mathcal O,\mathcal F,\alpha,\epsilon,\iota,o)$ be an ologism.
\begin{enumerate} 
\item For $(f_1,f_2,\ldots,f_n):A\rightarrow B$, 
$(g_1,g_2,\ldots,g_m):A\rightarrow B$ a pair of parallel morphisms in $\mathcal{(\mathcal O_{\bullet})}^*$, see point~\eqref{freecat} in examples~\eqref{catexamples}, an $\mathfrak O$-{\bf equality} between them is a formal writing $f_1;f_2;\cdots ;f_n=g_1;g_2;\cdots ;g_m$. For $\sigma$ an $\mathfrak O$-equality, we write
$\mathfrak O\vdash\sigma$ to mean that the pair $(f_1;f_2;\cdots;f_n,g_1;g_2;\cdots ;g_m)$ is in the smallest congruence relation generated by $\mathcal F$, see definition~\ref{tildestar}.
\item\label{b} 
An $\mathfrak O$-{\bf categorical proposition} is a diagram of the form 
$D_{\bf X}(A,B):{\bf X}\rightarrow Th(\mathfrak O)$, see definition~\ref{logthologism}, with ${\bf X}\in\{{\bf A}, {\bf E}, {\bf I}, {\bf O}\}$. For $\sigma$ an $\mathfrak O$-categorical proposition, we write $\mathfrak O\vdash\sigma$ to mean that $\sigma\in\alpha^*\cup\epsilon^*\cup\iota^*\cup o^*$.
\item An $\mathfrak O$-{\bf proposition} is an $\mathfrak O$-equality or an $\mathfrak O$-categorical proposition. For $\sigma$ an $\mathfrak O$-proposition, in either cases (a) or (b) above, the writing $\mathfrak O\vdash\sigma$ is read as $\sigma$ {\bf is a logical consequence of} $\mathfrak O$.
\end{enumerate}
\end{defi}

\begin{defi}
Let $\mathfrak O=(\mathcal O,\mathcal F,\alpha,\epsilon,\iota, o)$ be an ologism, $\sigma$ be an $\mathfrak O$-proposition and $\mathcal M$ be a model of $\mathfrak O$. We write $\mathcal M\models\sigma$ and say that $\mathcal M$ {\bf satisfies} $\sigma$ if 
\begin{enumerate}
\item in case $\sigma$ is an $\mathfrak O$-equality $f_1;f_2;\cdots;f_n=g_1;g_2;\cdots;g_m$,
the equality 
$$
\mathcal M f_1\circ\cdots\circ\mathcal Mf_n=\mathcal Mg_1\circ\cdots\circ\mathcal Mg_m
$$ 
holds in {\bf Sets};
\item in case $\sigma$ is an $\mathfrak O$-categorical proposition, see point~\eqref{b} of definition~\eqref{consrel},  one of the following alternatives holds accordingly:
\begin{itemize}
\item[-] $\mathcal MA\subseteq\mathcal MB$, if $\sigma$ is of the form $D_{\bf A}(A,B)$;
\item[-] $\mathcal MA\cap\mathcal MB=\emptyset$, if $\sigma$ is of the form $D_{\bf E}(A,B)$;
\item[-] $\mathcal MA\cap\mathcal MB\neq\emptyset$, if $\sigma$ is of the form $D_{\bf I}(A,B)$;
\item[-] $\mathcal MA\centernot\subseteq\mathcal MB$, if $\sigma$ is of the form $D_{\bf O}(A,B)$.
\end{itemize}
\end{enumerate}
\end{defi}

\begin{defi}
Let $\mathfrak O$ be an ologism and $\sigma$ be an $\mathfrak O$-proposition. We say that $\sigma$ is a {\bf semantical consequence} of $\mathfrak O$, and write it $\mathfrak O\models\sigma$, if 
every model of $\mathfrak O$ satisfies it.
\end{defi}

\begin{rem}
On the base of definition~\ref{modologism}, the syllogistic premisses in the deductive equipment of an ologism are among its semantical consequences.
\end{rem}

\begin{thm}[Soundness]
Let $\mathfrak O=(\mathcal O,\mathcal F,\alpha,\epsilon,\iota,o)$ be an ologism and
let $\sigma$ be an $\mathfrak O$-proposition. If $\sigma$ is a logical consequence of $\mathfrak O$, then it is a semantical consequence of $\mathfrak O$.
\end{thm}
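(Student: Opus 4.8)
The plan is to distinguish the two kinds of $\mathfrak O$-proposition from definition~\ref{consrel} and, in each case, to induct on the derivation that witnesses $\mathfrak O\vdash\sigma$, checking that each generating step preserves satisfaction by an arbitrary model $\mathcal M$ of $\mathfrak O$.

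\emph{The equational case.} Suppose $\sigma$ is an $\mathfrak O$-equality $f_1;f_2;\cdots;f_n=g_1;g_2;\cdots;g_m$. Then $\mathfrak O\vdash\sigma$ means, by definition~\ref{consrel}, that the two parallel paths are identified in the smallest congruence on $(\mathcal O_{\bullet})^{*}$ generated by the facts $\mathcal F$, see definition~\ref{tildestar}. A model $\mathcal M$ of $\mathfrak O$ is by definition~\ref{modologism} in particular a model of the underlying olog $\mathfrak O_{\bullet}=(\mathcal O_{\bullet},\mathcal F)$, so every diagram in $\mathcal F$ is commutative under $\mathcal M$. Since $\mathcal M$ induces a functor $(\mathcal O_{\bullet})^{*}\to{\bf Sets}$ and composition in ${\bf Sets}$ is associative and unital, the relation ``identified by $\mathcal M$'' on parallel paths is an equivalence relation closed under pre- and post-composition which contains every pair of parallel paths coming from a diagram in $\mathcal F$; hence it contains the whole congruence, so $\mathcal Mf_1\circ\cdots\circ\mathcal Mf_n=\mathcal Mg_1\circ\cdots\circ\mathcal Mg_m$, i.e. $\mathcal M\models\sigma$. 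This is the usual soundness argument for linear sketches, and since $\mathcal M$ was arbitrary, $\mathfrak O\models\sigma$.

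\emph{The syllogistic case.} Suppose $\sigma$ is an $\mathfrak O$-categorical proposition, so $\mathfrak O\vdash\sigma$ means $\sigma\in\alpha^{*}\cup\epsilon^{*}\cup\iota^{*}\cup o^{*}$; by definition~\ref{logthologism} this says precisely that $\sigma$ has a finite derivation from the syllogistic premisses in $\alpha\cup\epsilon\cup\iota\cup o$ via the rules of inference (i)--(iv) of definition~\ref{dedeq}. Fix a model $\mathcal M$ and induct on the length of such a derivation. For the base case, every syllogistic premise is satisfied by $\mathcal M$ by the clauses (i)--(iv) of definition~\ref{modologism}; in particular the distinguished identity aspects $\xymatrix{A\ar[r]^{\al{is}}&A}$ are modelled by $\mathcal MA\subseteq\mathcal MA$, see remark~\ref{isa} and remark~\ref{star}, which is trivial. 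For the inductive step it suffices to check that each of the rules~\eqref{cutcut},~\eqref{ae},~\eqref{ea},~\eqref{25},~\eqref{26},~\eqref{oo},~\eqref{210} and~\eqref{211} preserves satisfaction; these are elementary set-theoretic facts, using that an aspect labelled $\al{is}$ is modelled by a set inclusion (remark~\ref{isa}) and that the semantic clauses for ${\bf E}$ and ${\bf I}$ are symmetric in their two types. Concretely, rule~\eqref{cutcut} is transitivity of $\subseteq$; rules~\eqref{ae},~\eqref{ea} record that a disjointness of images is inherited along an inclusion into either side; rules~\eqref{25},~\eqref{26} record that a witness of non-empty intersection is carried forward along an inclusion; and rules~\eqref{oo},~\eqref{210},~\eqref{211} record that from the corresponding data one produces an element of $\mathcal MA$ lying outside $\mathcal MC$, whence $\mathcal MA\centernot\subseteq\mathcal MC$. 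Therefore $\mathcal M\models\sigma$, and since $\mathcal M$ was arbitrary, $\mathfrak O\models\sigma$.

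There is no serious obstacle here: the proof is a bookkeeping induction whose only content is the soundness of the individual inference rules, all of which is routine. The step deserving the most attention is the base case, where one must use that a universal affirmative premise which happens to be an aspect is necessarily interpreted as a set inclusion (remark~\ref{isa}) --- this is exactly what makes rule~\eqref{cutcut} compatible with composition of aspects in $Th(\mathfrak O_{\bullet})$, modulo the identification $\al{is};\al{is}=\al{is}$ --- and that $\alpha^{*},\epsilon^{*},\iota^{*},o^{*}$ already contain the trivial instances generated by the identity aspects (remark~\ref{star}), so the induction starts from a set of premisses every one of which is satisfied in every model.
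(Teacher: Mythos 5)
Your proposal is correct and follows essentially the same route as the paper: split on whether $\sigma$ is an $\mathfrak O$-equality or an $\mathfrak O$-categorical proposition, handle the former via the fact that a model of the ologism is in particular a model of the underlying olog (you merely spell out the standard congruence argument that the paper leaves implicit), and handle the latter by induction on the diagrammatic derivation, verifying rule by rule that each inference of definition~\ref{dedeq} preserves satisfaction. The paper writes out the set-theoretic verifications for each rule explicitly where you summarize them, but the content is identical.
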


\begin{proof}
Suppose that $\mathfrak O\vdash\sigma$ and let $\mathcal M$ be an arbitrary model of $\mathfrak O$. If $\sigma$ is an $\mathfrak O$-equality then $\mathcal M\models\sigma$ because $\mathcal M$ is a model of the olog $(\mathcal O_{\bullet},\mathcal F)$, in particular. If $\sigma$ is an $\mathfrak O$-categorical proposition, we proceed by induction on the structure of a diagrammatic proof of $\sigma$, by cases.
\begin{itemize}
\item[-] Suppose that $\sigma\in\alpha^*$ is of the form $\xymatrix{A\ar[r]^{\al{is}}&C}$. Thanks to the rejection criterion discussed in subsection~\ref{rejection}, the only way to obtain $\sigma$ is by means of an instance of the rule of inference~\eqref{cutcut}, say
$$
\AxiomC{$\xymatrix{A\ar[r]^{\al{is}}&B}$}
\AxiomC{$\xymatrix{B\ar[r]^{\al{is}}&C}$}
\BinaryInfC{$\xymatrix{A\ar[r]^{\al{is}}&C}$}
\DisplayProof
$$
with, $\mathcal MA\subseteq\mathcal MB$ and $\mathcal MB\subseteq\mathcal MC$, by inductive hypothesis. Thus, it is immediate to conclude $\mathcal MA\subseteq\mathcal MC$, namely that $\mathcal M\models\sigma$.
\item[-] Suppose that $\sigma\in\epsilon^*$ is of the form $\UN{A}{C}$. Thanks to the rejection criterion discussed in subsection~\ref{rejection}, the only way to obtain $\sigma$ is by means of an instance of the rules of inference~\eqref{ae} or~\eqref{ea}, say
$$
\AxiomC{$\UN{A}{B}$}
\AxiomC{$\xymatrix{B&C\ar[l]_{\al{is}}}$}
\BinaryInfC{$\UN{A}{C}$}
\DisplayProof
$$
in the first case, with $\mathcal MA\cap\mathcal MB=\emptyset$ and $\mathcal MC\subseteq\mathcal MB$, by inductive hypothesis. It is immediate to conclude that $\mathcal MA\cap\mathcal MC=\emptyset$, namely that $\mathcal M\models\sigma$. The thesis holds similarly in the case of an instance of the rule~\eqref{ea}.
\item[-] Suppose that $\sigma\in\iota^*$ is of the form $\PA{A}{C}$. Thanks to the rejection criterion discussed in subsection~\ref{rejection}, the only way to obtain $\sigma$ is by means of an instance of the rule of inference~\eqref{25} or~\eqref{26}, say
$$
\AxiomC{$\xymatrix{A\ar@{}[rr]^{{\bf I}_{AB}}&\bullet\ar[l]\ar[r]&B}$}
\AxiomC{$\xymatrix{B\ar[r]^{\al{is}}&C}$}
\BinaryInfC{$\PA{A}{C}$}
\DisplayProof
$$
in the first case, with $\mathcal MA\cap\mathcal MB\neq\emptyset$ and $\mathcal MB\subseteq\mathcal MC$. It is immediate to conclude that $\mathcal MA\cap\mathcal MC\neq\emptyset$, namely that $\mathcal M\models \sigma$. The thesis holds similarly in the case of an instance of the rule~\eqref{26}.
\item[-] Suppose that $\sigma\in o^*$ is of the form $\PN{A}{C}$. Thanks to the rejection criterion discussed in subsection~\ref{rejection}, the only way to obtain $\sigma$ is by means of an instance of the rule of inference~\eqref{oo} or~\eqref{210} or~\eqref{211}. Suppose it to be
$$
\AxiomC{$\PA{A}{B}$}
\AxiomC{$\UN{B}{C}$}
\BinaryInfC{$\PN{A}{C}$}
\DisplayProof
$$
in the first case, with $\mathcal MA\cap\mathcal MB\neq\emptyset$ and $\mathcal MB\cap\mathcal MC=\emptyset$ by inductive hypothesis. It is immediate to conclude that $\mathcal MA\centernot\subseteq\mathcal MC$, namely that $\mathcal M\models\sigma$. Now, suppose that $\sigma$ has been obtained by means of the following instance of the rule of inferenc~\eqref{210}:
$$
\AxiomC{$\xymatrix{A&B\ar[l]_{\al{is}}}$}
\AxiomC{$\PN{B}{C}$}
\BinaryInfC{$\PN{A}{C}$}
\DisplayProof
$$
with $\mathcal MB\subseteq\mathcal MA$ and $\mathcal MB\centernot\subseteq\mathcal MC$ by inductive hypothesis. It is immediate to conclude that $\mathcal MA\centernot\subseteq\mathcal MC$, namely that $\mathcal M\models\sigma$. The thesis holds similarly in the case of an instance of the rule~\eqref{211}.
\end{itemize}
Now, by the arbitrariness of $\mathcal M$, it follows that $\mathfrak O\models\sigma$.
\end{proof}

\begin{thm}[Completeness]
Let $\mathfrak O=(\mathcal O,\mathcal F,\alpha,\epsilon,\iota,o)$ be an ologism and
let $\sigma$ be an $\mathfrak O$-proposition. If $\sigma$ is a semantical consequence of $\mathfrak O$, then it is a logical consequence of $\mathfrak O$.
\end{thm}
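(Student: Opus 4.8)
The plan is to argue the contrapositive, splitting on the two kinds of $\mathfrak O$-proposition of Definition~\ref{consrel}. Suppose first that $\sigma$ is an $\mathfrak O$-equality $f_1;f_2;\cdots;f_n=g_1;g_2;\cdots;g_m$ with $\mathfrak O\not\vdash\sigma$, so that the two parallel morphisms are not identified in $Th(\mathfrak O_{\bullet})=(\mathcal O_{\bullet})^{*}/\!\sim_{\mathcal F}$, see Definition~\ref{tildestar}. I would produce a model of $\mathfrak O$ in which $\sigma$ fails. First realise $Th(\mathfrak O_{\bullet})$ as a \emph{faithful} set-theoretic model $\mathcal N$ of the underlying olog: the functor $C\mapsto\coprod_{D}\mathrm{Hom}_{Th(\mathfrak O_{\bullet})}(D,C)$, with aspects acting by composition, is a model of $\mathfrak O_{\bullet}$ and is faithful, hence $\sigma$ still fails in $\mathcal N$. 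Then enlarge $\mathcal N$ to a model of the whole ologism using the disjoint-tagging device of the remark after Definition~\ref{modologism} for the constraints in $\epsilon$ and the adjunction of fresh common witnesses for those in $\iota$ and $o$; none of these operations identifies two elements, so $\sigma$ keeps failing, giving $\mathfrak O\not\models\sigma$. Thus the equational layer, being insensitive to the syllogistic premisses, reduces to the classical completeness of equational logic for linear sketches.

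Now suppose $\sigma=D_{\bf X}(A,B)$ is an $\mathfrak O$-categorical proposition, ${\bf X}\in\{{\bf A},{\bf E},{\bf I},{\bf O}\}$, with $\sigma\notin\alpha^{*}\cup\epsilon^{*}\cup\iota^{*}\cup o^{*}$; I would build one \emph{canonical model} $\mathcal M$ of $\mathfrak O$ refuting $\sigma$, driven by the preorder $C\le D\iff{\bf A}_{CD}\in\alpha^{*}$, which is reflexive by the identity aspects (Remark~\ref{star}) and transitive by rule~\eqref{cutcut}. Start from the faithful concrete model $\mathcal N$ of $Th(\mathfrak O_{\bullet})$ as above; re-tag its elements so that two types forced apart by an $\epsilon^{*}$-constraint get disjoint underlying sets; adjoin to every type a formal point $\infty$ fixed by every aspect; and for each ${\bf I}_{PQ}\in\iota^{*}$ adjoin a fresh element placed into $\mathcal M C$ exactly when $P\le C$ or $Q\le C$, and for each ${\bf O}_{PQ}\in o^{*}$ a fresh element placed into $\mathcal M C$ exactly when $P\le C$. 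Interpret every ${\bf is}$-aspect as the evident inclusion --- legitimate because, $\le$ being transitive, the placement rules are upward closed --- and the remaining aspects as in $\mathcal N$ on old elements and as the constant $\infty$ on witnesses; the facts of $\mathcal F$ still commute, so $\mathcal M$ is a model of the olog $\mathfrak O_{\bullet}$.

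It remains to check that $\mathcal M$ satisfies the four prescriptions of Definition~\ref{modologism} and that $\mathcal M\not\models\sigma$. The latter is immediate: if $\sigma={\bf A}_{AB}$ then the canonical representative of $A$ lies in $\mathcal M A\setminus\mathcal M B$ since $A\le B$ would mean $\sigma\in\alpha^{*}$, and the other three shapes are symmetric. For the former, the ${\bf A}$-clauses hold by upward closure, the ${\bf I}$- and ${\bf O}$-clauses by the adjoined witnesses, and the ${\bf E}$-clauses by the disjoint tagging. \textbf{The hard part} is the \emph{mutual coherence} of these clauses: one must show that no ${\bf I}$-witness is ever forced into two $\epsilon^{*}$-related types, and that no ${\bf O}$-witness is forced into the very type it must avoid; and the point is that any such clash would already have put, via the rules~\eqref{ae}--\eqref{ea}, \eqref{25}--\eqref{26} and \eqref{oo}--\eqref{211}, a constraint of the impossible shape ``${\bf E}_{CC}$'' or ``${\bf O}_{CC}$'' into $\epsilon^{*}\cup o^{*}$ --- that is, it would correspond to an ologism contradictory in the sense of subsection~\ref{contraologisms}, a case which has no model and which one treats separately. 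Carrying out this bookkeeping, with the rejection criterion of subsection~\ref{rejection} and the analysis of the diagrammatic square of opposition in subsections~\ref{contradiction}--\ref{contraologisms} as the main tools, amounts to a contextual version of the classical completeness of the syllogistic with respect to set-theoretic (Venn-style) models, lifted to the presence of the ${\bf is}$-aspects and facts of the underlying olog, which sit orthogonally to the syllogistic layer. With this in hand $\mathcal M$ witnesses $\mathfrak O\not\models\sigma$, completing the contrapositive.
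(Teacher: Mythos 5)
Your argument goes by contraposition and a canonical-countermodel construction, which is genuinely different from the paper's proof: there, one takes $\sigma$ as it sits in $Th(\mathfrak O)$, i.e.\ as the conclusion of a diagrammatic derivation, and inducts on that derivation, showing for each rule of Definition~\ref{dedeq} that if the conclusion is a semantical consequence then so are the premisses (by exhibiting tiny explicit countermodels such as $\mathcal M(A)=\{0,1\}$, $\mathcal M(B)=\{0\}$, $\mathcal M(C)=\{0,2\}$), whence the premisses lie in $\alpha^*,\epsilon^*,\iota^*,o^*$ by induction and the rule places $\sigma$ there too. Your route is the standard one and would, if completed, prove strictly more (a model-existence theorem); but as written it has a genuine gap exactly where you flag ``the hard part''. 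The mutual coherence of the canonical model --- that no ${\bf I}$-witness is forced into two $\epsilon^*$-separated types, that no ${\bf O}$-witness lands in the type it must avoid, and that the \al{is}-aspects can be literal inclusions (as Remark~\ref{isa} and Definition~\ref{modologism}(i) require) while the $\epsilon^*$-tagging simultaneously forces literal disjointness --- is not bookkeeping: it is the entire content of a completeness proof by countermodel, and you support it only with the assertion that any clash would put ${\bf E}_{CC}$ or ${\bf O}_{CC}$ into the deductive closure. That assertion is itself a syntactic lemma about all possible interactions of the rules~\eqref{cutcut}--\eqref{211}, and no argument for it is sketched.

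Moreover, the escape hatch ``a case which has no model and which one treats separately'' does not close the argument. If $\mathfrak O$ is contradictory in the sense of subsection~\ref{contraologisms}, it has no models, so every $\mathfrak O$-proposition is vacuously a semantical consequence, and the theorem would then demand that every $\mathfrak O$-proposition be derivable --- which the rules of Definition~\ref{dedeq} do not deliver (there is no ex falso rule). So ``treating it separately'' either requires proving that deductive non-contradiction implies the existence of a model (precisely the lemma you deferred) or concedes that the contrapositive cannot be established in that case. A smaller instance of the same gap occurs in your equational half: enlarging the faithful model $C\mapsto\coprod_{D}\mathrm{Hom}(D,C)$ to a model of the whole ologism ``without identifying elements'' silently presupposes that the syllogistic premisses are jointly satisfiable over that carrier, which is again the deferred model-existence step.
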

\begin{proof}
Suppose that $\mathfrak O\models\sigma$.
If $\sigma$ is an $\mathfrak O$-equality  $f_1;f_2;\cdots; f_n=g_1;g_2;\cdots;g_m$, then  $\mathfrak O\models\sigma$ holds if and only if  
$[f_1;f_2;\cdots; f_n]=[g_1;g_2;\cdots;g_m]$ holds in the equational theory generated by the olog $(\mathcal O_{\bullet},\mathcal F)$, namely the category $Th(\mathfrak O_{\bullet})$, by virtue of what has been discussed in subsections~\ref{comeqproof} and~\ref{proofeqsketch}, if and only if the pair $(f_1;f_2;\cdots; f_n, g_1;g_2;\cdots;g_m)$ is in the smallest congruence relation generated by $\mathcal F$ which fact, by definition~\ref{consrel}, means that $\mathfrak O\vdash\sigma$. On the other hand, if $\sigma$ is an $\mathfrak O$-categorical proposition $D_{\bf X}(A,B):{\bf X}\rightarrow Th(\mathfrak O)$, then by virtue of the way the graph $Th(\mathfrak O)$ has been built, see definition~\ref{logthologism}, 
$\sigma$ occurs as the conclusion of a suitable diagrammatic proof. By cases and by induction on the structure of such a diagrammatic proof we prove that $\sigma$ is like that from categorical premisses that are $\mathfrak O$-categorical propositions, that is 
already in the logical theory which is the graph $Th(\mathfrak O)$. 
\begin{itemize}
\item[-] In case $\sigma$ is of the form $\xymatrix{A\ar[r]^{\al{is}}&C}$, then on the base of the rejection criterion discussed in subsection~\ref{rejection} the only way to obtain $\sigma$ as the conclusion of a diagrammatic syllogistic deduction 
is by means of an instance of the rule of inference~\eqref{cutcut}, as
$$
\AxiomC{$\xymatrix{A\ar[r]^{\al{is}}&B}$}
\AxiomC{$\xymatrix{B\ar[r]^{\al{is}}&C}$}
\BinaryInfC{$\xymatrix{A\ar[r]^{\al{is}}&C}$}
\DisplayProof
$$
with $\mathfrak O\models\al{is}_{AB}$ and $\mathfrak O\models\al{is}_{BC}$, since otherwise there exists a particular model $\mathcal M$ of $\mathfrak O$ with $\mathcal M(A)\centernot\subseteq\mathcal M(B)$ or $\mathcal M(B)\centernot\subseteq\mathcal M(C)$ thus making it possible to find that 
$\mathcal M(A)\centernot\subseteq\mathcal M(C)$, against the hypothesis $\mathcal M(A)\subseteq\mathcal M(C)$. For instance, take $\mathcal M(A)=\{0,1\}$, $\mathcal M(B)=\{0\}$ and $\mathcal M(C)=\{0,2\}$ in the first case, or $\mathcal M(A)=\mathcal M(B)=\{0,1\}$, $\mathcal M(C)=\{0,2\}$ in the second case. Thus, by inductive hypothesis
$\al{is}_{BC}\in\alpha^*$ and $\al{is}_{AB}\in\alpha^*$, thus $\al{is}_{AC}\in\alpha^*$, that is $\mathfrak O\vdash\sigma$.
\item[-] In case $\sigma$ is of the form $\UN{A}{C}$, then on the base of the rejection criterion discussed in subsection~\ref{rejection} the only way to obtain $\sigma$ as the conclusion of a diagrammatic syllogistic deduction is by means of an instance of the rules of inference~\eqref{ea} or~\eqref{ae}. In the first case, as
$$
\AxiomC{$\xymatrix{A\ar[r]^{\al{is}}&B}$}
\AxiomC{\UN{B}{C}}
\BinaryInfC{\UN{A}{C}}
\DisplayProof
$$
with $\mathfrak O\models{\bf E}_{BC}$ and $\mathfrak O\models\al{is}_{AB}$, since otherwise there exists a particular model $\mathcal M$ of $\mathfrak O$ with $\mathcal M(A)\centernot\subseteq\mathcal M(B)$ or $\mathcal M(B)\cap\mathcal M(C)\neq\emptyset$ thus making it possible to find that 
$\mathcal M(A)\cap\mathcal M(C)\neq\emptyset$, against the hypothesis $\mathcal M(A)\cap\mathcal M(C)=\emptyset$. For instance, take $\mathcal M(A)=\{0,1\}$, $\mathcal M(B)=\{0\}$ and $\mathcal M(C)=\{1\}$ in the first case, or $\mathcal M(A)=\mathcal M(C)=\{0\}$ and $\mathcal M(B)=\{0,1\}$ in the second case. Thus, by inductive hypothesis ${\bf E}_{BC}\in\epsilon^*$ and $\al{is}_{AB}\in\alpha^*$, thus ${\bf E}_{AC}\in\epsilon^*$, that is $\mathfrak O\vdash\sigma$.
Similarly, if $\sigma$ is deduced by an instance of the rule of inference~\eqref{ae}. 

\item[-] In case $\sigma$ is of the form $\PA{A}{C}$, then on the base of the rejection criterion discussed in subsection~\ref{rejection} the only way to obtain $\sigma$ as the conclusion of a diagrammatic syllogistic deduction is by means of an instance of the rule of inference~\eqref{25} or~\eqref{26}. In the first case, as
$$
\AxiomC{$\xymatrix{A\ar@{}[rr]^{{\bf I}_{AB}}&\bullet\ar[l]\ar[r]&B}$}
\AxiomC{$\xymatrix{B\ar[r]^{\al{is}}&C}$}
\BinaryInfC{$\PA{A}{C}$}
\DisplayProof
$$
with $\mathfrak O\models{\bf I}_{AB}$ and $\mathfrak O\models\al{is}_{BC}$ since otherwise there exists a particular model $\mathcal M$ of $\mathfrak O$ with $\mathcal M(B)\centernot\subseteq\mathcal M(C)$ or $\mathcal M(A)\cap\mathcal M(B)=\emptyset$ thus making it possible to find that 
$\mathcal M(A)\cap\mathcal M(C)=\emptyset$, against the hypothesis $\mathcal M(A)\cap\mathcal M(C)\neq\emptyset$. For instance, take $\mathcal M(A)=\{0\}$, $\mathcal M(B)=\{0,1\}$ and $\mathcal M(C)=\{1\}$ in the first case, or $\mathcal M(A)=\{2\}$, $\mathcal M(B)=\{0\}$ and $\mathcal M(C)=\{0,1\}$ in the second case. Thus, by inductive hypothesis ${\bf I}_{AB}\in\iota^*$ and $\al{is}_{BC}\in\alpha^*$, thus ${\bf I}_{AC}\in\iota^*$, that is $\mathfrak O\vdash\sigma$. Similarly, if $\sigma$ is deduced by means of an instance of the rule of inference~\eqref{26}.
\item[-] In case $\sigma$ is of the form \PN{A}{C}, then on the base of the rejection criterion discussed in subsection~\ref{rejection} the only way to obtain $\sigma$ as the conclusion of a diagrammatic syllogistic deduction is by means of an instance of the rule of inference~\eqref{oo}, or~\eqref{210}, or~\eqref{211}. In the first case, as
$$
\AxiomC{$\PA{A}{B}$}
\AxiomC{$\UN{B}{C}$}
\BinaryInfC{$\PN{A}{C}$}
\DisplayProof
$$
with $\mathfrak O\models{\bf I}_{AB}$ and $\mathfrak O\models{\bf E}_{BC}$, since otherwise there exists a particular model $\mathcal M$ of $\mathfrak O$ with $\mathcal M(B)\cap\mathcal M(C)\neq\emptyset$ or $\mathcal M(A)\cap\mathcal M(B)=\emptyset$ thus making it possible to find that 
$\mathcal M(A)\subseteq\mathcal M(C)$, against the hypothesis $\mathcal M(A)\centernot\subseteq\mathcal M(C)$. For instance, take $\mathcal M(A)=\mathcal M(B)=\mathcal M(C)=\{0\}$ in the first case,
or $\mathcal M(A)=\mathcal M(C)=\{0\}$ and $\mathcal M(B)=\{1\}$ in the second case.
Thus, by inductive hypothesis ${\bf I}_{AB}\in\iota^*$ and ${\bf E}_{BC}\in\epsilon^*$, thus ${\bf O}_{AC}\in o^*$, that is $\mathfrak O\vdash\sigma$. Similarly, if $\sigma$ is deduced by means of the rule of inference~\eqref{210}, or~\eqref{211}.
\qedhere
\end{itemize}
\end{proof}


\section{Contradiction in ologisms}\label{contraologisms}
In this section we briefly address a particular topic: the managing of contradiction in ologisms. In subsection~\ref{contradiction} we showed in which sense the diagonally opposite diagrams in the diagrammatic square of opposition~\eqref{diagrams} are pairwise contradictory. In this section we discuss how the occurrence of pairs of contradictory syllogistic constraints in ologisms give rise to contradiction in them, so decreeing their structural unsoundness. We present the matter by going through the cases that have to be considered. Hypothetically, if one thinks of the authoring of ologisms as to a computer-assisted activity, the cases that we are going to discuss should be thought of as those in which the process of automatic checking of the structural soundness of an ologism under construction should point out a suitable message of error. In practice, contradictory ologisms may arise because more than one author is working on the planning of an ologism, and there could not be agreement about the assignment of a pertaining type to an entity under consideration or about which aspects should be more appropriately applied. For instance, contradictory ologisms such as those shown in examples~\eqref{contra1} and~\eqref{contra2} below may arise because there could not be agreement on what should be considered as a vertebrate that is able to fly.

\begin{enumerate}
\item For $A, B$ any types in an ologism, the syllogistic constraints 
$$
\PA{A}{B}\qquad\qquad\UN{A}{B}
$$
may occur together in a piece of ologism like
$$
\fbox{\xymatrix{&&\bullet\ar@{}[dd]|(-.1){{\bf I}_{AB}}|(1.1){{\bf E}_{AB}}\ar@/^/[drr]\ar@/_/[dll]\\
A\ar@/_/[drr]&&&&B\ar@/^/[dll]\\
&&\bullet
}}
$$
in which two deduced arrows may arise:
\begin{itemize}
\item[-] the one through $A$,
$$
\fbox{\xymatrix{&&\bullet\ar@{}[dd]|(-.1){{\bf I}_{AB}}|(1.1){{\bf E}_{AB}}
\ar@/_/@{-->}[dd]_{{\bf O}_{BB}}\ar@/^/[drr]\ar@/_/[dll]\\
A\ar@/_/[drr]&&&&B\ar@/^/[dll]\\
&&\bullet
}}
$$
forcing the contradictory syllogistic constraint 
$$
\PN{B}{B}
$$
\item[-] or the one through $B$,
$$
\fbox{\xymatrix{&&\bullet\ar@{}[dd]|(-.1){{\bf I}_{AB}}|(1.1){{\bf E}_{AB}}
\ar@/^/@{-->}[dd]^{{\bf O}_{AA}}\ar@/^/[drr]\ar@/_/[dll]\\
A\ar@/_/[drr]&&&&B\ar@/^/[dll]\\
&&\bullet
}}
$$
forcing the contradictory syllogistic constraint 
$$
\PN{A}{A}
$$
\end{itemize}

\begin{exm}
An ologism that is contradictory in the way we just described is
\begin{eqnarray}\label{contra1}
\fbox{\xymatrix{\stackrel{M}{\fbox{a mammal}}\ar@{}[dd]_{{\bf E}_{MB}}\ar[d]\\
\bullet&&\bullet\ar@/_/[ull]_(.05){{\bf I}_{VM}}\ar[d]\\
\fbox{a bird}\ar[u]&&\stackrel{V}{\fbox{a vertebrate that is able to fly}}\ar[ll]_(.7){\al{is}}
}}
\end{eqnarray}
although it may seem to be sound at first glance. In fact, the dashed arrow in
$$
\fbox{\xymatrix{\stackrel{M}{\fbox{a mammal}}\ar@{}[dd]_{{\bf E}_{MB}}\ar[d]\\
\bullet&&\bullet\ar@/_/[ull]_(.05){{\bf I}_{VM}}\ar[d]\\
\fbox{a bird}\ar[u]&&\stackrel{V}{\fbox{a vertebrate that is able to fly}}\ar@{-->}@/^1pc/[ull]_(.7){{\bf E}_{VM}}\ar[ll]_(.7){\al{is}}
}}
$$
allows to read off the deduced syllogistic constraint ${\bf E}_{VM}$ for ``Every vertebrate that is able to fly is not a mammal'', which is in contradiction with the given syllogistic constraint  ${\bf I}_{VM}$ for ``Some vertebrate is a mammal''.
\end{exm}

\item For $A, B$ any types in an ologism, 
an aspect $\xymatrix{A\ar[r]^{\al{is}}&B}$, as a syllogistic constraint of the form ${\bf A}_{AB}$ 
and its contradictory ${\bf O}_{AB}$ may occur together in a piece of ologism like
$$
\fbox{\xymatrix{&\bullet\ar@/_/[dl]\ar@/^/[rr]^{{\bf O}_{AB}}&&\bullet\\
A\ar@/_/[rrrr]^{\al{is}}&&&&B\ar@/_/[ul]
}}
$$
in which, two deduced arrows arise:
\begin{itemize}
\item[-] the one through $B$,
$$
\fbox{\xymatrix{&\bullet\ar@/_/[dl]\ar@/^/[rr]^{{\bf O}_{AB}}&&\bullet\\
A\ar@/_/@{-->}[urrr]^{{\bf O}_{AA}}\ar@/_/[rrrr]^{\al{is}}&&&&B\ar@/_/[ul]
}}
$$
forcing the contradictory syllogistic constraint
$$
\PN{A}{A}
$$
\item[-] and the one through $A$,
$$
\fbox{\xymatrix{&\bullet\ar@/_/@{-->}[drrr]^{{\bf O}_{BB}}\ar@/_/[dl]
\ar@/^/[rr]^{{\bf O}_{AB}}&&\bullet\\
A\ar@/_/[rrrr]^{\al{is}}&&&&B\ar@/_/[ul]
}}
$$
forcing the contradictory syllogistic constraint 
$$
\PN{B}{B}
$$
\end{itemize}

\begin{exm}
As an example of an ologism which is contradictory in the way we just described one can consider
\begin{eqnarray}\label{contra2}
\fbox{\xymatrix{\stackrel{V}{\fbox{a vertebrate that is able to fly}}\ar@/_2pc/[rrr]^{\al{is}}
&\bullet\ar[l]\ar[r]^{{\bf O}_{VB}}&\bullet&\stackrel{B}{\fbox{a bird}}\ar[l]
}}
\end{eqnarray}
from which one can read off that the syllogistic constraint ${\bf A}_{VB}$ holds in the form ``Every vertebrate that is able to fly is a bird''
together with its negation ``Some vertebrate that is able to fly is not a bird'',
via ${\bf O}_{VB}$.
\end{exm}
\end{enumerate}


\section{Ologisms and Description Logics, a comparison}\label{forlogrepr}
In this section we give attention to the relationship between ologisms as systems for knowledge representation and reasoning, and Description Logics (DL) as languages that can be considered as the core of the knowledge representation systems formed by a DL knowledge base and an associated reasoning equipment as described in~\cite{MR1991592}, to which we refer the reader who is interested in deepening her knowledge about DL. Henceforth, we will write DL-KRS to refer to knowledge representation systems based on Description Logics. In what follows we will compare ologisms and DL-KRS as regards some of the basic features shared by the two formalisms. This is necessarily so, since the comparison at issue is between a well-established formalism for knowledge representation on one side and a formalism for knowledge representation that at the moment presents itself as an interesting, and hopefully promising, line of work. More to the point, ologisms and DL-KRS will be compared as intelligent systems for knowledge representation, in the sense specified below, then with respect to their being general-purpose appoaches to knowledge representation and with respect to the possibility of representing individuals, defining new concepts from previously defined ones, and representing binary relations other than the functional ones, finally with respect to the satisfaction of the so-called {\em open-world assumption}. 
We proceed by following~\cite{MR1991592}, mainly\\ 
In {\em loc. cit.} it is pointed out that the research in the field of knowledge representation and reasoning focuses on methods for high-level descriptions of the world that can be effectively used to build intelligent applications, with reference
to a system that allows to find implicit consequences of its explicitly represented knowledge. Clearly, ologisms are very far from being effectively usable as implementable intelligent knowledge-based applications; wherever in the paper we suggested to imagine the design of an ologism to happen on a computer and as an activity assisted by a computer, we have done that mostly in the form of an auspice in favour of the outcome of future investigations. Nonetheless, ologisms, although prototypical and not yet implementable, aim to become intelligent systems in the previously specified sense. Given the formal specification of an ologism as a knowledge base provided by a linear sketch for an olog and a set of syllogistic premisses, see definition~\ref{ologism} and terminology~\ref{terminpremiss}, it is possible to deduce further equational constraints as new facts from those provided in the underlying olog of the specified ologism, because this is typical of ologs, and to deduce new syllogistic constraints from the given syllogistic premisses, by means of the deductive equipment of the ologism, see definition~\ref{dedeq}, because this is what characterizes genuine ologisms extending the logical expressivity of ologs. Thus, similarly to DL-KRS, ologisms are knowledge representation systems formed by a knowledge base and a reasoning equipment.\\ 
Very briefly, DL-KRS arise from the need to merge two ways to approach knowledge representation. On one side there were logic-based approaches considering the predicate calculus as a tool that can be used to unambigously describe the world; because of this, those approaches were considered as general-purpose from the very beginning; in passing, let us mention that most DL are decidable fragments of first order logic. On the other side there were non-logic-based approaches (mainly {\em semantic networks} and {\em frames}) that although built on notions often developed for specific representational goals, turned out to employ formalisms that have been expected to be general purpose. More explicitly, non-logical systems created out of special lines of thinking expected to be applicable in different domains and on different kinds of problems; it turned out that this was not the case. DL-KRS arise from the merging of these two lines of approach to knowledge representation as the fruitful attempt to put theory and applications side by side. Ologisms retain the main feature of a logic-based approach to knowledge representation, since they are general-purpose from the start. This is so because their design takes place in accordance with rigorous mathematical rules, so that they turn out to possess an objectively recognizable structure which in principle, although typically stemming out in connection with the representation of a specific real situation, turns out to be transportable in different contexts of employment. It is actually like that for ologs since already exist knowledge-preserving transformations between them, see~\cite{DBLP:journals/corr/abs-1102-1889}, while, for genunine ologisms, it is like that only from a conjectural point of view, admittedly, because the investigation on which knowledge-preserving transformations between ologisms should be appropriately considered is an open issue.\\
In DL one works with three kinds of components. There are {\em individuals} or {\em constants}, {\em concepts} or {\em unary relations}, {\em roles} or {\em binary relations} and a typical feature of DL-KRS is their ability to allow the representation of relationships between concepts other than the IS-A relationships. Individuals can be represented in ologs as instances of types that have been passed at the olog level as types in themselves. Consequently, they turn out to be representable in ologisms too. Moreover, in ologs, thus in ologisms as well, individuals can enter in the representation of  {\em specific facts}. We did not previously emphasize this possibility since we chose to focus our attention on the description of the most basic constructions allowed in an olog. Without the intention of extensively pursue the matter here specific facts can be recorded in ologs by passing an instance of a type to a type in itself, as previously mentioned. For example, if ``Moby Dick'' is a documentable instance of the type 
$\ulcorner\textrm{a whale}\urcorner$ then to record that ``Moby Dick's weight is 30 tons'' one can consider the fact
$$
\xymatrix{\fbox{Moby Dick}\ar@{}[drrr]|{\checkmark}\ar[d]_{\al{is}}
\ar[rrr]^{\al{has a weight in tons}}&&&\fbox{30}\ar[d]^{\al{is}}\\
\fbox{a whale}\ar[rrr]^(.35){\al{has a weight in tons}}&&&\fbox{an integer between 5 and 30}}
$$
in which the instances ``Moby Dick'' and ``30'' have been passed at the olog level as types in themselves.\\
Concepts in DL-KRS are described as classes of individuals, so that they clearly correspond to types in ologisms.\\
Other than the functional binary relationships that identify the aspects of an ologism, which already testify the possibility of representing binary relations more general than the ``IS-A'' ones, non-functional relationships can be represented in ologisms as well, by suitably codifying them by means of aspects. For example, imagine that it arises the need to represent the binary relation ``is a parent of'' from the type $\ulcorner\textrm{a person}\urcorner$ to the type $\ulcorner\textrm{a child}\urcorner$. The relation at issue is non-functional since not every person is a parent of a child; nonetheless, it can be represented in an olog by means of the pair of aspects
$$
\fbox{\xymatrix{&\fbox{\parbox{.33\textwidth}{a pair $(a,b)$ such that $a$ is a person,
 $b$ is a child and $a$ is a parent of $b$}}\ar[dr]^(.6){\al b}\ar[dl]_(.6){\al a}\\
\fbox{a person}&&\fbox{a child}}}
$$
Now, for what concern the employment of DL as languages for defining a knowledge base and carry out inferences over it, we point out that a DL knowledge base is usually formed by a TBox and ABox. The TBox contains intensional knowledge in the form of declarations that describe general properties of concepts. Typically, a TBox is given in the form of a
{\em taxonomy}, which is a representaion of the generality-specificity relation that holds between concepts. For instance, a taxonomy could be represented as a network such as
$$
\xymatrix{\fbox{Mother}\ar@{=>}[dd]\ar@{=>}[rr]&&\fbox{Female}&&\fbox{Woman}\ar@{=>}[ll]\ar@{=>}@/^/[ddll]\\
\\
\fbox{Parent}\ar@{=>}[rr]&&\fbox{Person}}
$$
in which every link provides an ``IS-A'' relationship. The ABox contains extensional knowledge that is specific to the individuals of the domain of discourse.\\ 
The basic form of declaration in a TBox is the definition of a new concept in terms of other, previously defined, concepts. For instance, a woman can be defined as a female person, by writing  the declaration $\textrm{Woman}\equiv\textrm{Person}\sqcap\textrm{Female}$, where $\sqcap$ is the constructor for {\em intersection of concepts}. With respect to the possibility of defining new concepts from previously defined ones in ologisms, we can say that in this paper we introduced ologisms as founded on those that in~\cite{DBLP:journals/corr/abs-1102-1889} are more precisely called {\em basic ologs}, which we descibed as linear sketches. On one hand, as they stand, basic ologs do not allow any kind of category theoretic constructions that could be used to introduce new concepts as new types from previously defined ones; the category theoretic constructions at issue could be binary products or coproducts, pullbacks or pushouts, for instance. On the other hand, there is also to say that a whole hierarchy of more ``advanced'' ologs can be described in terms of more expressive sketches, see {\em loc. cit.}, allowing the implementation of the kind of category theoretic constructions previously mentioned, thus allowing the definition of new concepts from previously defined ones. In this paper we mainly focused on the extension of the logical expressivity of basic ologs, to begin with. Anyway, to give an idea of how to define a woman as a female person in an olog, hence in an ologism, suppose to have the possibility to perform image factorization of aspects, see~\cite{DBLP:journals/corr/abs-1102-1889}. Given this, the type $\ulcorner\textrm{a woman}\urcorner$ arises by virtue of the image factorization performed on the aspect $\xymatrix{\ulcorner\textrm{a female person}\urcorner\ar[r]^(.55){\al{is}}&\ulcorner\textrm{a person}\urcorner}$ as witnessed by the fact
$$
\fbox{\xymatrix{\fbox{a female person}\ar@{}[drr]|{\checkmark}\ar@/_/[dr]^{\al{is}}\ar[rr]^{\al{is}}&&\fbox{a person}\\
&\fbox{a woman}\ar@/_/[ur]^{\al{is}}&}}
$$
where the aspect $\xymatrix{\ulcorner\textrm{a female person}\urcorner\ar[r]^(.55){\al{is}}&\ulcorner\textrm{a woman}\urcorner}$ turns out to identify a bijective functional relationship, by construction. Moreover, imagine that in the ABox of a DL knowledge base that is going to be specified there should be represented the information ``Anna is a female person'', about the individual ANNA. This is usually expressed by means of a {\em membership assertion} such as
$\textrm{Female}\sqcap\textrm{Person}(\textrm{ANNA})$ from which, by virtue of the way the concept ``Woman'' has been previously defined, one consequently derives that ANNA is an individual of that concept, namely that Anna is a woman. The same deduction is represented in the olog
$$
\fbox{\xymatrix{\fbox{Anna}\ar@/_1pc/[drr]^{\al{is}}\ar@{}[drr]|(.4){\checkmark}\ar[r]^(.38){\al{is}}&\fbox{a female person}\ar@{}[drr]|{\checkmark}\ar@/_/[dr]^{\al{is}}\ar[rr]^{\al{is}}&&\fbox{a person}\\
&&\fbox{a woman}\ar@/_/[ur]^{\al{is}}&}}
$$
where the aspect $\xymatrix{\ulcorner\textrm{Anna}\urcorner\ar[r]^(.4){\al{is}}&\ulcorner\textrm{a woman}\urcorner}$ is there because of the imposition of an identification between the aspect $\al{is}$ and the aspect $\al{is a female person who is}$ which is the aspect given by composition of the aspect $\xymatrix{\ulcorner\textrm{Anna}\urcorner\ar[r]^(.3){\al{is}}&\ulcorner\textrm{a female person}\urcorner}$ followed by the aspect $\xymatrix{\ulcorner\textrm{a female person}\urcorner\ar[r]^(.6){\al{is}}&\ulcorner\textrm{a person}\urcorner}$, that is by imposing the fact $\al{is}=\al{is a female person who is}$ indicated by the checkmark. Now, the previous olog can be promoted to a genuine ologism by imposing the syllogistic constraint ${\bf O}_{PW}$ for ``Some person is not a woman'', as in the drawing
$$
\fbox{\xymatrix{\fbox{Anna}\ar@/_1pc/[drr]^{\al{is}}\ar@{}[drr]|(.4){\checkmark}\ar[r]^(.38){\al{is}}&\fbox{a female person}\ar@{}[drr]|{\checkmark}\ar@/_/[dr]^{\al{is}}\ar[rrr]^{\al{is}}&&&\stackrel{P}{\fbox{a person}}\\
&&\stackrel{W}{\fbox{a woman}}\ar@/_/[urr]^{\al{is}}\ar[r]&\bullet&\bullet\ar@{}[ull]|(.1){{\bf O}_{PW}}\ar[l]\ar[u]}}
$$
from which it is immediate to deduce that there is at least one person other than Anna and that that person is a male person.\\
In the introduction, we pointed out that the reader should keep in mind that the matter treated in this paper could be thought of as connected to the conceptual planning of databases only potentially but, for sake of argument, let us 
consider that connection as actually taking place and spend a few words about the conceptual planning of databases by means of DL-KRS or ologisms with respect to the satisfaction of the so-called {\em open-world assumption}, in particular. In connection with the mentioned activity of planning DL-KRS satisfy the previously referred to assumption related to the semantics of ABoxes. This basically means that it cannot be assumed that the knowledge in the knowledge base is complete, contrary to what happens in connection with the ``classical'' planning of databases where the so-called {\em closed-world assumption} is satisfied. The main difference is the following: whereas absence of information is interpreted as negative information in the classical case, absence of information in an ABox only indicates lack of knowledge. Things go in parallel for ologisms. They are objects that have to be thought of as being continuosly under a process of improvement until an agreed upon version of them is reached. Because of this, the knowledge stored in an ologism cannot be considered complete but ``as complete as possible'' relatively to a certain level of understanding of a subjective view of a real situation; thus, in this sense ologisms satisfy the open-world assumption.\\
In conclusion, from the previous discussion it emerges that in comparison with DL-KRS, ologisms are knowledge representation systems that need to be much more discussed and developed, nonetheless we think that because of this they represent a challenge and at the same time an opportunity.


\bibliographystyle{plain} 
\bibliography{BiBlio}
\end{document}